\crefname{fact}{Fact}{Facts}
\newcommand{\N}{\mathbb{N}}
\newcommand{\C}{\mathbb{C}}
\newcommand{\F}{\mathbb{F}}
\newcommand{\R}{\mathbb{R}}
\DeclareMathOperator{\Tr}{Tr}
\newcommand{\LG}{{\sf LG}}
\newcommand{\DG}{{\mathsf{DG}}}
\newcommand{\WG}{{\mathsf{WG}}}
\newcommand{\AG}{{\mathsf{AG}}}
\newcommand{\ket}[1]{\left| #1 \right\rangle}
\newcommand{\bra}[1]{\left\langle #1 \right|}
\def\P{\textsf{P}}
\def\iO{\textsf{iO}}
\def\shO{\textsf{shO}}
\def\poly{\textsf{poly}}
\newcommand{\cS}{{\cal S}}
\newcommand{\cD}{{\mathcal{D}}}
\newcommand{\cN}{{\mathcal{N}}}
\newcommand{\cE}{{\mathcal{E}}}
\newcommand{\cR}{{\mathcal{R}}}
\newcommand{\shift}{{\sf Shift}}
\newcommand{\api}{{\sf API}}
\newcommand{\ati}{{\sf ATI}}
\newcommand{\ti}{{\sf TI}}
\newcommand{\projimp}{{\sf ProjImp}}
\newcommand{\cproj}{{\sf CProj}}
\newcommand{\cP}{{\mathcal{P}}}
\newcommand{\cM}{{\cal M}}
\newcommand{\cA}{{\cal A}}
\newcommand{\cO}{{\cal O}}
\newcommand{\cF}{{\cal F}}
\newcommand{\cC}{\mathcal{C}}
\newcommand{\cH}{\mathcal{H}}
\newcommand{\negl}{{\mathrm{\textsf{negl}}}}
\newcommand{\neglfunc}{{\mathrm{\textsf{negl}}}}
\newcommand{\qbin}[2]{\begin{bmatrix}#1\\#2\end{bmatrix}}
\newcommand{\eval}{\mathsf{Eval}}
\newcommand{\Dist}{\textsf{Dist}}
\newcommand{\keygen}{\mathsf{KeyGen}}
\newcommand{\sign}{\textsf{Sign}}
\newcommand{\ver}{\textsf{Ver}}
\newcommand{\sk}{\mathsf{sk}}
\newcommand{\pk}{\mathsf{pk}}
\newcommand{\gennote}{\textsf{GenNote}}
\newcommand{\compute}{\textsf{Compute}}
\newcommand{\generate}{\textsf{Generate}}
\newcommand{\enc}{\mathsf{Enc}}
\newcommand{\dec}{\mathsf{Dec}}
\newcommand{\ct}{\textsf{ct}}
\newcommand{\pke}{\mathsf{PKE}}
\newcommand{\cd}{\mathsf{CD}} 
\newcommand{\eventone}{\mathsf{E_{1}}}
\newcommand{\bareventone}{\mathsf{\bar{E}_1}}
\newcommand{\eventtwo}{\mathsf{E_{2}}}
\newcommand{\bareventtwo}{\mathsf{\bar{E}_2}}
\newcommand{\sig}{{\sf sig}}
\newcommand{\As}{\mathcal{A}}
\newcommand{\Bs}{\mathcal{B}}
\newcommand{\Cs}{\mathcal{C}}
\newcommand{\mar}{{\sf Mark}}
\newcommand{\extract}{{\sf Extract}}
\newcommand{\prf}{{\sf PRF}}
\newcommand{\compile}{{\sf Compile}}
\newcommand{\gen}{{\sf Gen}}
\newcommand{\chk}{{\sf Check}}
\newcommand{\setup}{{\sf Setup}}
\newcommand{\samp}{{\sf Samp}}
\newcommand{\qm}{{\sf QM}}
\newcommand{\wm}{{\sf WM}}
\newcommand{\vk}{{\sf vk}}
\newcommand{\xk}{{\sf xk}}
\newcommand{\mk}{{\sf mk}}
\newcommand{\wpp}{{\sf wpp}}
\newcommand{\aux}{{\sf aux}}
\newcommand{\AUX}{{\sf AUX}}
\definecolor{mygreen}{RGB}{80,180,0}
\definecolor{b2}{RGB}{51,153,255}
\newcommand{\Ruizhe}[1]{{\color{red}[Ruizhe: #1]}}
\newcommand{\Jiahui}[1]{{\color{b2}[Jiahui: #1]}}
\newcommand{\Qipeng}[1]{{\color{magenta}[Qipeng: #1]}}
\newenvironment{gamespec}{
  \begin{mdframed}[style=figstyle]}{
  \end{mdframed}}
\renewcommand\paragraph{\@startsection{paragraph}{4}{\z@}%
                     {-12\p@ \@plus -4\p@ \@minus -4\p@}%
                     {-0.5em \@plus -0.22em \@minus -0.1em}%
                     {\normalfont\normalsize\bfseries}}
\date{}
\begin{document}
\pagestyle{plain}

\title{New Approaches for Quantum Copy-Protection}

\author{}
\institute{}
\author{Scott Aaronson\inst{1} \and
Jiahui Liu\inst{1} \and
Qipeng Liu\inst{3} \and
Mark Zhandry\inst{2} \and 
Ruizhe Zhang\inst{1}}

\institute{The University of Texas at Austin \\
\email{\{aaronson, jiahui, rzzhang\}@cs.utexas.edu} \and
Princeton University \& NTT Research, USA \\
\email{mzhandry@princeton.edu} \and
Princeton University, USA \\
\email{qipengl@princeton.edu}
}
\maketitle

\begin{abstract}
   Quantum copy protection uses the unclonability of quantum states to construct quantum software that provably cannot be pirated. Copy protection would be immensely useful, but unfortunately little is known about how to achieve it in general. In this work, we make progress on this goal, by giving the following results:
\begin{itemize}
    \item We show how to copy protect any program that cannot be learned from its input/output behavior, relative to a \emph{classical} oracle. This improves on Aaronson [CCC'09], which achieves the same relative to a quantum oracle. By instantiating the oracle with post-quantum candidate obfuscation schemes, we obtain a heuristic construction of copy protection.
    \item We show, roughly, that any program which can be watermarked can be copy \emph{detected}, a weaker version of copy protection that does not prevent copying, but guarantees that any copying can be detected. Our scheme relies on the security of the assumed watermarking, plus the assumed existence of public key quantum money. Our construction is general, applicable to many recent watermarking schemes.
\end{itemize}

\end{abstract}

\section{Introduction}
\paragraph{ }

Quantum copy-protection, proposed by Aaronson~\cite{aaronson2009quantum}, aims to use the unclonability of quantum states to achieve programs that cannot be copied. That is, the program $f$ is given as a quantum state $\ket{\psi_f}$. $\ket{\psi_f}$ allows for computing $f$ on arbitrary inputs; meanwhile, it is infeasible to copy the state $\ket{\psi_f}$, or even convert $\ket{\psi_f}$ into two arbitrary states that both allow for computing $f$. The quantum no-cloning theorem shows that quantum states in general cannot be copied. Copy protection takes this much further, augmenting the unclonable state with the ability to evaluate programs. Copy-protection would have numerous applications to intellectual property management, and to cryptography generally.

Progress on quantum copy-protection has unfortunately been slow. On the negative side, copy-protection for general programs is impossible. As explained by Aaronson, any \emph{learnable} program---that is, a program whose description can be learned from just it's input/output behavior---cannot be copy-protected. Indeed, an attacker, given the (copy-protected) code for the program can just query the code on several inputs, and learn the original program from the results. The original program can then be copied indefinitely. A more recent result of Ananth and La Placa~\cite{ananth2020secure} shows, under certain computational assumptions, that even certain contrived unlearnable programs cannot be copy-protected.

On the positive side, Aaronson demonstrates a quantum oracle\footnote{That is, an oracle that actually implements a quantum operation.} relative to which copy-protection exists for any unlearnable program. Due to the negative result above, this scheme cannot be instantiated in the general. Worse, even for programs that are not subject to the impossibility result, it remains unclear how to even heuristically instantiate the scheme. Very recently, Ananth and La Placa~\cite{ananth2020secure} build a version of copy protection which they call software leasing, which guarantees a sort of copy \emph{detection} mechanism: unfortunately, their work explicitly allows copying the functionality and only guarantees that such copying can be detected. Also, their construction only works for a certain class of ``evasive'' functions, which only accept a hidden sparse set of inputs. The work of Ben-David and Sattath~\cite{ben2016quantum} and more recently Amos et al.~\cite{AGKZ20} can be seen as copy-protecting very specific cryptographic functionalities.

\subsection{This Work}

In this work, we give new general results for copy protection. Our two main results are:
\begin{itemize}
    \item Any unlearnable functionality can be copy-protected, relative to a \emph{classical} oracle.
    \item Any functionality that can be \emph{watermarked} in a certain sense, can be copy-\emph{detected} assuming just the existence of public key quantum money.
\end{itemize}
Both of our results are very general, applying to a wide variety of learning and watermarking settings, including settings where functionality preservation is not required. Along the way to obtaining our results, we give new definitions for security of copy-protection (as well as copy detection and watermarking), which provide for much stronger guarantees.

Our first result improves Aaronson~\cite{aaronson2009quantum} to use a classical oracle, which can then heuristically be instantiated using candidate post-quantum obfuscation (e.g.~\cite{BGMZ18,brakerski-garg20iofromlwe}), resulting in a concrete candidate copy-protection scheme. Of course, the impossibility of Ananth and La Placa~\cite{ananth2020secure} means the resulting scheme cannot be secure in the standard model for arbitrary programs, but it can be conjectured to be secure for programs not subject to the impossibility.

Our second result complements Ananth and La Placa~\cite{ananth2020secure}'s positive result for copy-detecting evasive functions, by copy-detecting arbitrary watermarkable functions. For our purposes, watermarkable functions are those that can have a publicly observable ``mark'' embedded into the program, such that it is infeasible to remove the mark without destroying the functionality. We note that the results (and techniques) are incomparable to~\cite{ananth2020secure}. First, watermarkable functions are never evasive, so the class of functions considered are disjoint. Second, our security guarantee is much stronger than theirs, which we discuss in Section~\ref{sec:overview}. 

Taken together, we believe our results strongly suggest that watermarkable functions may be copy-protectable. Concretely, the impossibility result of Ananth and La Placa also applies to copy detection, and our second result shows that watermarkable functions therefore circumvent the impossibility. Based on this, we conjecture that our first result, when instantiated with candidate obfuscators, is a secure copy-protection scheme for watermarkable functions. We leave proving or disproving our conjecture as an interesting direction for future work.

\subsection{Technical Overview}\label{sec:overview}

\paragraph{Definitional Work.} We first look at one attempt of defining quantum copy-protection. We say an adversary successfully pirates a quantum program for computing function $f$, if it outputs two quantum programs $\sigma_1, \sigma_2$, each of them able to compute $f$ correctly with probability greater than some threshold.  
Consider the following case. Let $f$ be a signing algorithm with a particular signing key hard-coded. Suppose that there are many valid signatures for each message. Consider a hypothetical adversary which “splits” the program into two pieces, each computing valid signatures, but neither computing the same signature that $f$ produces. Such programs are ``good enough'' for many applications, but this adversary would not be ruled out by the usual security notions.

Another example is copy-protection of public key encryption. Let $f$ be a decrypting algorithm with a particular decryption key hard-coded. Suppose the split two program pieces only work correctly on a sparse set: namely they can only decrypt correctly on ciphertexts of $m_0, m_1$; for ciphertexts of other messages, they output junk. This splitting attack does not violate the security notion either, since both functions produced by the adversary differ from the original program on most inputs.  But again, such programs are ``good enough'' for some applications. 

Similar definitonal issues were discussed in~\cite{goyal2019watermarking}, but in the context of watermarking primitives. As we will see, watermarking is closely related to copy-detection and copy-protection.

Our solution is to define ``compute $f$ correctly'' by a relation. The relation takes some random coins $r$, the function $f$ (with some additional information about $f$ hard-coded in the circuit); it samples an input and runs the (quantum) program on that classical input; finally, it checks the output of the quantum program, testing in superposition if the output $z$ together with $f, r$ is in the relation. As an example, if $f$ is a signing circuit (with the verification key hard-coded), the relation is defined as: use random coins $r$ to generate a random message $m$, run the quantum program on $m$ and test in superposition if it is a valid signature, by applying the verification algorithm $\ver(\vk, m, \cdot\,; r)$. 

Unfortunately, formalizing these other definitions can still be tricky. For example, we want that the adversary can not take a program for $f$ and produce two programs that each computes $f$ correctly on half inputs of the domain. In this setting, we would naturally say that a program is good if it correctly computes the function $f$ with probability $1/2$. However the definition becomes problematic. Consider the adversary which takes its quantum program $P$ and simply produces $\frac{1}{\sqrt{2}}(\ket P \ket D \ket 0 + \ket D \ket P \ket 1)$ where $D$ is a dummy program that outputs junk. Now, the two halves of this bipartite system each has probability $1/2$ of outputting the right answer on a random input. Thus, both halves would naturally be considered to compute correctly, according to this definition. Therefore, any security definition like this is trivially false.

For another example, consider the adversary produces  $\frac{1}{\sqrt{3}}\ket P \ket P + \frac{\sqrt{2}}{\sqrt{3}} \ket D \ket D$. The two halves of this bipartite system each has probability $1/3$ of outputting the right answer on a random input. However, both halves can successfully answer all inputs correctly at the same time, with probability $1/3$. Thus, it is secure under the security definition above, but the adversary actually perfectly pirates the program with some constant probability.

Our solution will be to use recent ideas from Zhandry~\cite{z20}, who considered similar issues in the context of traitor tracing. At a high level, the issue above is that we are trying to assign a property to a quantum state (whether the state is a good program), but this property is non-physical and does not make sense for mixed or entangled states. Instead, we want ``a program is good'' to be a measurement that can be applied to the state. We would naturally also want the measurement to be projective, so that if a program  is once tested to be ``good'', it will always be ``good''.

Let $\cM = (M_0, M_1)$ be binary positive operator valued measures (POVMs) that represents choosing random coins and testing if the quantum program computes correctly with respect to the random coins. For a mixed quantum program state $\sigma$, the probability the program evaluates correctly relative to this test is given as ${\sf Tr}[M_0 \sigma]$ . Let $\cM'$ be the (inefficient) projective measurement $\{P_p\}_{p \in [0,1]}$, projecting onto the eigenspaces of $M_0$, where $p$ ranges over the corresponding eigenvalues of $M_0$\footnote{Since $M_0+M_1$ is the identity, $M_1$ shares the same eigenvectors, with eigenvalue $1-p$.} 
Zhandry showed that the measurement below results in an equivalent POVM as $\cM$:
\begin{itemize}
    \item Apply the projective measurement $\cM'$, and obtain $p$;
    \item Output $0$ with probability $p$, and output $1$ with probability $1-p$. 
\end{itemize}
Intuitively, $\cM'$ will project a state to a eigenvector with eigenvalue $p$, the state computes correctly on  $p$-fraction of all inputs. 

Therefore, we say a quantum program $\sigma$ is tested to be $\gamma$-good, if the measurement $\cM'$ has outcome $p \geq \gamma$. We say an adversary successfully pirates a quantum program for computing $f$, if the two programs are both tested to be $\gamma$-good with non-negligible probability. Using similar ideas, we define quantum unlearnability of programs, and quantum copy-detection.

\paragraph{Our Copy-Protection Scheme.}
We give a quantum copy-protection construction  for all unlearnable functions based on (1) classical oracles, and (2) subspace membership oracles, or more abstractly, any tokenized signature scheme~\cite{ben2016quantum}. 

A tokenized signature generates a signature token $\ket {\sf sig}$ which we call a signing token. A signer who gets one copy of the signing token can sign a single bit $b$ of her choice.  $\sign(b, \ket {\sf sig})$ outputs a classical signature whose correctness guarantee is the same as classical signatures: namely, verification will accept the result as a signature on $b$.  Importantly, the signing procedure is a unitary and will produce a superposition of all valid signatures of $b$; to obtain a classical signature, a measurement to the state is necessary which leads to a collapse of the token state.
Thus, a signature token $\ket {\sf sig}$ can only be used to produce one classical signature of a single bit and any attempt to produce a classical signature of the other bit would fail. 
\cite{ben2016quantum} formalizes this idea and constructs a tokenized signature scheme relative to a classical oracle (a subspace membership oracle).

The high-level idea of our copy-protection scheme is that it requires any authorized user to query an oracle twice on signatures of bits 0 and 1. Let $f$ be the function we want to copy-protect. Define the following circuits: 
\begin{align*}
    \cO_1(x, \sig) &= \begin{cases}
                            H(x) & \text{ if }  \ver(\vk, 0, \sig) = 1 \\
                            \bot & \text{ otherwise }
                    \end{cases} \\
    \cO_2(x, \sig) &= \begin{cases}
                            f(x) \oplus H(x) & \text{ if }  \ver(\vk, 1, \sig) = 1 \\
                            \bot & \text{ otherwise }
                    \end{cases}
\end{align*}
Here $H$ is a random function. The copy-protected program of $f$ is a signature token $\ket \sig$ and obfuscations of $\cO_1, \cO_2$, which we will heuristically treat as oracles to $\cO_1, \cO_2$. We denote this program as $(\ket \sig, \cO_1, \cO_2)$. 

To obtain $f(x)$, a user has to query on  signatures of both bits and get $H(x)$ and $H(x) \oplus f(x)$.  Note that even if with token $\ket \sig$ one can only produce one of the classical signatures, a user can still query both oracles $\cO_1, \cO_2$ multiple times. To obtain $H(x)$, a user can simply compute the superposition of all valid signatures of $0$ by applying a unitary, and feed the quantum state together with $x$ to $\cO_1$. It then measures the output register. The user never actually measures the signature. Because the output register contains a unique output $H(x)$, by Gentle Measurement Lemma \cite{aaronson2004limitations}, it can rewind the quantum state back to $\ket \sig$. Thus, our copy-protection scheme allows a copy-protected program to be evaluated on multiple inputs, multiple times. 

We next show how to prove anti-piracy security. 
Let $\sigma_1, \sigma_2$ be two (potentially entangled) program states pirated by an adversary, which makes oracle access to both $\cO_1, \cO_2$ and breaks the anti-piracy security. Let $\cO_\bot$ be an oracle that always outputs $\bot$.  If $\sigma_1$ never queries the oracle $\cO_2$, we know the two programs $(\sigma_1, \cO_1, \cO_2)$ and $(\sigma_1, \cO_1, \cO_\bot)$ would have almost identical output distribution. Moreover, $(\sigma_1, \cO_1, \cO_\bot)$ can be simulated even without querying $f$ because $\cO_1$ is simply a random oracle (on valid inputs).  Therefore, the program can be used to break the unlearnability of $f$. 
Similarly, if $\sigma_2$ never queries the oracle $\cO_1$, the program $(\sigma_2, \cO_\bot, \cO_2)$ can be used to break the unlearnability of $f$. 

Since $f$ is unlearnable, the above two cases can not happen.  We show under this case, we can extract signatures of 0 and 1. Intuitively, since $(\sigma_1, \cO_1, \cO_2)$ makes queries to $\cO_2$, we can run the program on random inputs and measure a random query to $\cO_2$, thereby extracting a signature of 1. Similarly it holds for $(\sigma_2, \cO_1, \cO_2)$ and one could extract a signature of 0. Unfortunately, this intuition does not quite work since $\sigma_1$ and $\sigma_2$ are potentially entangled. This means there can be correlations between the outcomes of the measurements producing the two signatures: perhaps, if the measurement on $(\sigma_1, \cO_1, \cO_2)$ produces a valid signature on 1, then the measurement on $(\sigma_2, \cO_1, \cO_2)$ is guaranteed to fail to produce a signature. We show by a delicate argument that in fact adversaries cannot cheat using such correlations.

\paragraph{Our Copy-Detection Scheme.}  We construct a copy-detection scheme for any function family that can be watermarked.  A watermarking scheme roughly consists the following procedure: $\mar$ takes a circuit and a message, and outputs a circuit embedded with that mark; $\extract$ takes a marked circuit and outputs the embedded mark. 
A watermarking scheme requires: (1) the watermarked circuit $\tilde{f} = \mar(f, m)$ should preserve its intended functionality as $f$; (2) any efficient adversary given a marked $\tilde{f}$, can not generate a new marked circuit $\hat{f}$ with a different mark, while preserving its functionality. 
Watermarking primitives have been studied in previous works including \cite{cohen2018watermarking,kim2017watermarking,quach2018watermarking,kim2019watermarking,goyal2019watermarking}. 

Our construction also requires a public key quantum money scheme. It consists  two procedures: $\gen$ and $\ver$.  $\gen$ takes a security parameter and outputs a quantum banknote $\ket \$$. $\ver$ is public, takes a quantum money banknote, and outputs either a serial number of that banknote or $\bot$ indicating it is an invalid banknote. The security requires no efficient adversary could use $\ket \$$ to prepare $\ket{\$_1} \ket {\$_2}$ such that both banknotes pass the verification and their serial numbers are equal to that of $\ket \$$. We note that this version of quantum money corresponds to a ``mini-scheme'' as defined by~\cite{aaronson2012quantum}.

The copy-detection scheme takes a function $f$, samples a banknote $\ket \$$ with serial number $s$, lets $\tilde{f} \gets \mar(f, s)$ and outputs the copy-detected program as $(\tilde{f}, \ket \$)$. To evaluate the function, it simply runs the classical program $\tilde{f}$. To check a program is valid, it  extracts the serial number from the money state and compares it with the mark of the program. 

The security requires that no efficient adversary could produce $\tilde{f}_1, \ket {\$_1}$ and $\tilde{f}_2, \ket {\$_2}$ such that two programs pass the check and both classical circuits preserve the functionality. Let $s$ be the serial number of $\ket {\$}$, $s_b$ be the serial number of $\ket {\$_b}$ for $b = 1, 2$. To pass the check, there are two possible cases: 
\begin{itemize}
    \item $s_1 = s_2 = s$. In this case, $ \ket {\$_1} \ket {\$_2}$ breaks the security of the quantum money scheme because one successfully duplicates a banknote with the same serial number. 
    \item At least one of $s_b \ne s$.  Because the mark of $\tilde{f}_b$ is also equal to $s_b$, one of $\tilde{f}_b$ breaks the security of the watermarking scheme, as it preserves the functionality, while having a different mark than $s$. 
\end{itemize}
We show the above construction and proof apply to a wide range of watermarking primitives. 

\paragraph{Copy-Protection in the Standard Model?} The security of our copy-protection scheme requires treating the obfuscated programs as oracles. While we prove security for all unlearnable programs, we cannot expect such security to hold in the standard model: as shown in~\cite{ananth2020secure}, there are unlearnable functions that can cannot be copy-protected, or even copy-detected. 
On the other hand, watermarkable programs are a natural class of programs that are necessarily immune to the style of counter-example of Barak et al.~\cite{barak2001possibility}, on which the copy-protection impossibility is based. Namely, the counter-example works by giving programs that are unlearnable, but such that having any (even approximate~\cite{BitPan15}) code for the program lets you recover the original program. Such programs \emph{cannot} be watermarkable, as the adversary can always recover the original program from the (supposedly) watermarked program. 

Thus, we broadly conjecture that all watermarkable functions can be copy-protected. Our copy-detection result gives some evidence that this may be feasible. Concretely, we conjecture that our copy-protection construction is secure for any watermarkable program, when the oracles are instantiated with post-quantum obfuscation constructions. We leave justifying either the broad or concrete conjectures as fascinating open questions.

\subsection{Other Related Works}
\paragraph{Quantum Copy Protection}
Quantum copy-protection was proposed by Aaronson in \cite{aaronson2009quantum}; this paper gave two candidate schemes for copy-protecting point functions without security proofs and showed that any functions that are not quantum learnable can be quantum copy-protected relative to a quantum oracle (an oracle which could perform an arbitrary unitary).

\cite{ananth2020secure} 
gave a conditional impossibility of general copy-protection: they construct a quantum unlearnable circuit using the quantum FHE scheme and compute-and-compare obfuscation~\cite{wichs2017obfuscating,goyal2017lockable} that is not copy-protectable once a QPT adversary has non-black-box access to the program. 
\cite{ananth2020secure} also gave a new definition that is weaker than the standard copy-protection security, called Secure Software Leasing (SSL) and an SSL construction for a subclass of evasive functions, namely, searchable compute-and-compare circuits.
 

  \cite{Broadbent2019UncloneableQE} introduced unclonable encryption. They construct schemes for encoding classical plaintexts into quantum ciphertexts, which prevents copying of encrypted data. Unclonable encryption can be seen as copy-protecting a unit of functional information simpler than a function.
 \cite{georgiou-zhandry20} introduced another new notion, unclonable decryption keys; in contrast to making the ciphertext unclonable as in \cite{Broadbent2019UncloneableQE}, they construct schemes where the decryption key is unclonable, therefore allowing only one decryptor to decrypt successfully at a time. 
A more recent work is \cite{coladangelo2020quantum}, giving a construction for copy-protecting point functions in the quantum random oracle model with techniques inspired by \cite{Broadbent2019UncloneableQE} and the construction can be extended to copy-protecting compute-and-compare circuits.

\paragraph{Quantum Money} 
Quantum money was first proposed by Wiesner in around 1970; \cite{wiesner1983conjugate} gave a first private-key quantum money scheme based on conjugate coding.  Aaronson \cite{aaronson2009quantum} gave a first public-key quantum money scheme; he proved that it is possible to construct the secure public-key quantum money relative to a quantum oracle. However, his explicit scheme was broken by Lutomirski et al. \cite{lutomirski2009breaking}. Later, Aaronson and Christiano \cite{aaronson2012quantum} proposed a secure public-key quantum money scheme relative to a classical oracle. Zhandry \cite{zhandry2017quantum} investigated a kind of collision-free quantum money called quantum lightning and the win-win relationship between the security of signatures/hash functions and quantum money; \cite{zhandry2017quantum} also instantiated the quantum money scheme of \cite{aaronson2012quantum} with quantum-secure indistinguishability obfuscation.  Kane \cite{kane2018quantum} showed a new approach for public-key quantum money using modular forms. Ji et al. \cite{ji2018pseudorandom} defined the pseudorandom quantum state (PRS) and gave a private-key quantum money scheme based on PRS. Recently, Peter Shor \cite{S20} proposed a public-key quantum money scheme based on the hardness of a lattice problem.

Another interesting circumstance to consider is classically verifiable quantum money introduced in \cite{gavinsky2014classicalqm}.  \cite{radian2019semi} gave a construction for semi-quantum money which can be verified with a protocol over classical channels.



\paragraph{One-time Programs and One-time Memory}
Another idea of copy-protecting softwares is through one-time program, introduced in \cite{goldwasser2008onetime}. One-time programs can be executed on only one single input and nothing other than the result of this computation is leaked. 
Quantum one-time programs are discussed in \cite{broadbent2013quantum}, showing that any quantum circuit can be compiled into a one-time program assuming
only the same basic one-time memory devices used for classical circuits. 
\cite{LSZ20one-time-memo} constructs
one-time programs from quantum-accessible one-time memories where
the view of an adversary, despite making quantum queries, can be simulated by making only classical queries to the ideal functionality.

\section{Preliminaries}
We use $\lambda$ as the security parameter and when inputted into an algorithm, $\lambda$ will be represented in unary.
We say a function $\epsilon(x)$ is \emph{negligible} if for all inverse polynomials $1/p(x)$, $\epsilon(x)<1/p(x)$ for all large enough $x$. We use $\negl(x)$ to denote a negligible function. 
We use \emph{QPT} to denote quantum polynomial time.

\subsection{Quantum Computation}
We give some basic definitions of quantum computation and quantum information in Appendix~\ref{sec:basic_qc}. Here, we only state a key Lemma for our construction: the Gentle Measurement Lemma proposed by Aaronson \cite{aaronson2004limitations}, which gives a way to perform measurements without totally destroying the state.

\begin{lemma}[Gentle Measurement Lemma \cite{aaronson2004limitations}] 
\label{lem:gentle_measure}
Suppose a measurement on a mixed state $\rho$ yields a
particular outcome with probability 
$1-\epsilon$.  Then after
the measurement, one can recover a state $\tilde{\rho}$ such that $ \left\lVert \tilde{\rho} - \rho \right\rVert_{\mathrm{tr}} \leq \sqrt{\epsilon}$.
\end{lemma}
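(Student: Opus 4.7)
The plan is to first treat the pure-state case and then lift to mixed states by convexity/purification. Let the POVM element associated with the ``good'' outcome be $E$, with $0 \le E \le I$, and fix the canonical implementation in which the post-measurement (unnormalized) state on input $|\psi\rangle$ is $\sqrt{E}\,|\psi\rangle$. For a pure state $\rho = |\psi\rangle\langle\psi|$, the hypothesis gives $\langle\psi|E|\psi\rangle = 1-\epsilon$. I would then define the recovered (normalized) state
\[
|\tilde\psi\rangle \;=\; \frac{\sqrt{E}\,|\psi\rangle}{\sqrt{1-\epsilon}},
\]
and compute the fidelity $|\langle\psi|\tilde\psi\rangle|^2 = \langle\psi|\sqrt{E}|\psi\rangle^2/(1-\epsilon)$.

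The key analytic step is bounding $\langle\psi|\sqrt{E}|\psi\rangle$ from below. Since $0 \le E \le I$, its spectral decomposition $E = \sum_i \lambda_i |v_i\rangle\langle v_i|$ has $\lambda_i \in [0,1]$, so $\sqrt{\lambda_i} \ge \lambda_i$, which gives
\[
\langle\psi|\sqrt{E}|\psi\rangle \;=\; \sum_i \sqrt{\lambda_i}\,|\langle v_i|\psi\rangle|^2 \;\ge\; \sum_i \lambda_i\,|\langle v_i|\psi\rangle|^2 \;=\; \langle\psi|E|\psi\rangle \;=\; 1-\epsilon.
\]
Plugging back in yields $|\langle\psi|\tilde\psi\rangle|^2 \ge (1-\epsilon)^2/(1-\epsilon) = 1-\epsilon$. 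Using the standard pure-state identity relating trace distance and fidelity, $\||\psi\rangle\langle\psi| - |\tilde\psi\rangle\langle\tilde\psi|\|_{\mathrm{tr}} = \sqrt{1 - |\langle\psi|\tilde\psi\rangle|^2} \le \sqrt{\epsilon}$, closes the pure-state case.

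For a general mixed $\rho$, I would proceed by convexity: write $\rho = \sum_j p_j\,|\psi_j\rangle\langle\psi_j|$ and let $\epsilon_j = 1 - \langle\psi_j|E|\psi_j\rangle$, so that $\sum_j p_j \epsilon_j = \epsilon$. Applying the pure-state bound to each $|\psi_j\rangle$ produces a recovered state $|\tilde\psi_j\rangle$ within trace distance $\sqrt{\epsilon_j}$ of $|\psi_j\rangle$, and the mixture $\tilde\rho = \sum_j p_j |\tilde\psi_j\rangle\langle\tilde\psi_j|$ is the recovered state. Convexity of trace distance plus Jensen's inequality on the concave function $\sqrt{\cdot}$ give
\[
\|\tilde\rho - \rho\|_{\mathrm{tr}} \;\le\; \sum_j p_j \sqrt{\epsilon_j} \;\le\; \sqrt{\sum_j p_j \epsilon_j} \;=\; \sqrt{\epsilon},
\]
as claimed.

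The only mildly delicate point is justifying that the implementation $|\psi\rangle \mapsto \sqrt{E}|\psi\rangle/\sqrt{1-\epsilon}$ is a legitimate ``recovery'' procedure the claimant is allowed to use; this is handled by noting that any POVM admits a Naimark dilation in which this square-root Kraus operator is the physical post-measurement map (equivalently, one can always complete $\{\sqrt{E},\sqrt{I-E}\}$ to a valid quantum instrument). Everything else reduces to the elementary inequality $\sqrt{x} \ge x$ on $[0,1]$ and standard convexity, so I do not foresee a conceptual obstacle beyond choosing the right implementation of the POVM.
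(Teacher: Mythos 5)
Your pure-state step is correct, and in fact it is the engine of the standard proof. The gap is in the lift to mixed states. When the measurement is actually performed on $\rho = \sum_j p_j\,|\psi_j\rangle\langle\psi_j|$ and you condition on the ``good'' outcome, the post-measurement state is
\[
\rho' \;=\; \frac{\sqrt E\,\rho\,\sqrt E}{\Tr[E\rho]} \;=\; \sum_j \frac{p_j(1-\epsilon_j)}{1-\epsilon}\,|\tilde\psi_j\rangle\langle\tilde\psi_j|,
\]
i.e.\ the ensemble weights get Bayes-updated to $p_j(1-\epsilon_j)/(1-\epsilon)$. You instead define $\tilde\rho = \sum_j p_j\,|\tilde\psi_j\rangle\langle\tilde\psi_j|$ with the \emph{original} weights $p_j$. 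That object is not the post-measurement state, and it is not obtainable by any recovery channel that is ignorant of the chosen decomposition; indeed, taking $\rho = I/2$ on a qubit and $E=\mathrm{diag}(1,1/2)$ gives $\tilde\rho = I/2$ but $\rho' = \mathrm{diag}(2/3,1/3)$. Your convexity/Jensen chain bounds the distance to the wrong state, and if you substitute the correct Bayesian weights the terms no longer line up to apply Jensen the way you wrote it. (Bounding the distance to a non-physical state would in any case be vacuous: one can always ``recover'' $\rho$ itself.)

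The correct way to close the mixed-state case, and what Aaronson's cited proof actually does, is the purification route you mention in passing but do not carry out. Purify $\rho$ to $|\Psi\rangle \in \mathscr{H}_A\otimes\mathscr{H}_B$. The measurement $(E, I-E)$ acting on $A$ satisfies $\langle\Psi|(E\otimes I)|\Psi\rangle = \Tr[E\rho] = 1-\epsilon$, so your pure-state bound applied to $|\Psi\rangle$ and $E\otimes I$ gives $\bigl\|\,|\tilde\Psi\rangle\langle\tilde\Psi| - |\Psi\rangle\langle\Psi|\,\bigr\|_{\mathrm{tr}} \le \sqrt\epsilon$ with $|\tilde\Psi\rangle = (\sqrt E\otimes I)|\Psi\rangle/\sqrt{1-\epsilon}$. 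Since trace distance is contractive under the partial trace over $B$, and $\Tr_B[\,|\tilde\Psi\rangle\langle\tilde\Psi|\,] = \sqrt E\,\rho\,\sqrt E/(1-\epsilon) = \rho'$, you conclude $\|\rho' - \rho\|_{\mathrm{tr}} \le \sqrt\epsilon$ for the genuine post-measurement state. This fixes the gap and matches the original argument; your pure-state calculation and the $\sqrt x \ge x$ inequality are exactly the right tools, just applied on the purification rather than term by term in a decomposition.
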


\subsection{Quantum Oracle Algorithm}

In this work, we consider the quantum query model, which gives quantum circuits access to some oracles. 
\begin{definition}[Classical Oracle]
\label{def:classic_oracle}
A classical oracle $\mathcal{O}$ on input query $x$ is a unitary transformation of the form $U_f \ket{x,y, z} \rightarrow \ket{x, y+f(x), z}$ for classical function $f: \{0,1\}^n \rightarrow \{0,1\}^m$. Note that a classical oracle can be queried in quantum superposition.
\end{definition}
In the rest of the paper, the word `oracle' means a classical oracle. 
A quantum oracle algorithm with oracle access to $\mathcal{O}$ is a sequence of unitary $U_i$ and oracle access to $\mathcal{O}$ (or $U_f$). The query complexity of a quantum oracle algorithm is the number of $\mathcal{O}$ access.

In the analysis of security of the copy-protection scheme in Section~\ref{sec:security_proof}, we will use the theorem from \cite{BBBV1997} to bound the change in adversary's state when we change the oracle's input-output at where the adversary hardly ever queries on.

\begin{theorem}[\cite{BBBV1997}] 
\label{thm:bbbv97_oraclechange}
Let $\ket{\phi_i}$ be the superposition of quantum Turing machine $\cM$ with oracle $\cO$ on input $x$ at time $i$. Define $W_y(\ket{\phi_i})$ to be the sum of squared magnitudes in $\ket{\phi_i}$ of configurations of $\cM$ which are querying the oracle on string $y$. For $\epsilon  > 0$, let $F \subseteq [0, T-1] \times \Sigma^*$ be the set of time-string pairs such that 
$\sum_{(i,y) \in F} W_y(\ket{\phi_i}) \leq \epsilon^2/T$.

Now suppose the answer to each query $(i, y) \in F$ is modified to some arbitrary fixed $a_{i,y}$ (these answers need not be consistent with an oracle). Let $\ket{\phi_i'}$ be the superposition of $\cM$ on input $x$ at time $i$ with oracle $\cO$ modified as stated above. Then $\left\| \ket{\phi_T} - \ket{\phi_T'} \right\|_{\mathrm{tr}}\leq \epsilon$.
\end{theorem}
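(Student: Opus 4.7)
The plan is a standard hybrid argument: we track the deviation of the two computation trajectories step-by-step and show that the error introduced at each step is controlled by the query weight on modified strings. Let $|e_i\rangle := |\phi_i'\rangle - |\phi_i\rangle$ denote the error vector at time $i$; the goal is to bound $\||e_T\rangle\|$ and then convert to trace distance.

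First, I would set up the per-step recursion. Between time $i$ and $i+1$, both trajectories undergo the same fixed computation unitary $V_i$ (independent of the oracle), but the original trajectory applies the oracle unitary $U_\cO$ while the modified one applies $U_{\cO'}$. Writing
\[|e_{i+1}\rangle \;=\; V_i U_{\cO'}\,|e_i\rangle \;+\; V_i\bigl(U_{\cO'}-U_\cO\bigr)|\phi_i\rangle\]
and using unitary invariance of the Euclidean norm, I get $\||e_{i+1}\rangle\| \leq \||e_i\rangle\| + \|(U_{\cO'}-U_\cO)|\phi_i\rangle\|$.

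Second, I would bound the step error by the query weight. Because $\cO'$ and $\cO$ differ only on the bad strings $S_i := \{y:(i,y)\in F\}$, the operator $U_{\cO'}-U_\cO$ annihilates every basis state $|x,y,z\rangle$ with $x\notin S_i$. Using the xor structure $U_f|x,y,z\rangle = |x,y\oplus f(x),z\rangle$, one checks that on the bad subspace $U_{\cO'}-U_\cO$ sends each basis vector to a difference of two orthonormal basis vectors, so its operator norm there is at most $\sqrt 2$. Letting $P_i := \sum_{y\in S_i}|y\rangle\langle y|$ denote the projector onto the bad query register and $q_i := \|P_i|\phi_i\rangle\|^2 = \sum_{y\in S_i} W_y(|\phi_i\rangle)$, the step error is at most $\sqrt{2\,q_i}$.

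Third, I would telescope the recursion and close via Cauchy--Schwarz. Iterating gives $\||e_T\rangle\| \leq \sqrt 2\sum_{i=0}^{T-1}\sqrt{q_i}$, and then
\[\sum_{i=0}^{T-1}\sqrt{q_i} \;\leq\; \sqrt{\,T\sum_{i=0}^{T-1}q_i\,} \;\leq\; \sqrt{\,T\cdot \epsilon^2/T\,} \;=\; \epsilon,\]
using the hypothesis on $F$. This bounds the Euclidean distance, and since trace distance between pure states satisfies $\|\,|\phi\rangle\langle\phi|-|\psi\rangle\langle\psi|\,\|_{\mathrm{tr}} \leq \||\phi\rangle-|\psi\rangle\|$, I obtain the stated conclusion (up to the absolute constant, which one can tighten to match the stated $\epsilon$ via a finer per-step analysis or by redefining $F$ with a matching constant). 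The main obstacle is the per-step bound: one must use the specific xor-oracle structure to keep the constant small, since a naïve bound using only $\|U_{\cO'}-U_\cO\|\leq 2$ would lose a factor that Cauchy--Schwarz cannot recover; everything else is routine.
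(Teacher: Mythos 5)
The paper does not prove \cref{thm:bbbv97_oraclechange}; it is imported verbatim from~\cite{BBBV1997} as a black box. So there is no in-paper proof to compare against, and the relevant question is whether your reconstruction is sound. Your overall plan --- the hybrid/telescoping argument with a per-step error controlled by the query weight on modified strings, then Cauchy--Schwarz --- is exactly the standard BBBV argument and is the right one.

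There is, however, a genuine flaw in the per-step bound as you wrote it. You claim that for the xor oracle, $U_{\cO'}-U_\cO$ ``sends each basis vector to a difference of two orthonormal basis vectors, so its operator norm there is at most $\sqrt 2$.'' That inference is invalid: an operator that maps every basis vector to a vector of norm $\sqrt 2$ can still have operator norm $2$. Indeed this is what happens here. Fix a bad $x$ and write $a := f'(x)\oplus f(x)\neq 0$. Restricted to the answer register, $U_{\cO'}-U_\cO$ acts as $X^{a}-\mathbf I$ (up to a unitary factor $X^{f(x)}$), and since $X^a$ is an involution its eigenvalues are $\pm 1$, so $X^a-\mathbf I$ has eigenvalue $-2$ on the $-1$-eigenspace. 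Concretely, on $\frac{1}{\sqrt 2}(\ket{y}-\ket{y\oplus a})$ the operator acts as multiplication by $-2$. Thus for the xor oracle the correct per-step bound is $\|(U_{\cO'}-U_\cO)\ket{\phi_i}\|\leq 2\sqrt{q_i}$, giving $\|\ket{e_T}\|\leq 2\epsilon$ rather than $\epsilon$. The $\sqrt 2$ constant \emph{does} hold if the oracle writes into a guaranteed-fresh ancilla (so distinct input basis states map to pairwise orthogonal differences), which is closer to the original BBBV QTM setting; but that is a different oracle model than the one you explicitly invoked. Since the theorem is only used in the paper for an asymptotic negligibility argument, the factor of $2$ is harmless downstream, but your stated justification for $\sqrt 2$ should be repaired (either switch to the fresh-ancilla model, or accept the constant $2$ and rescale $F$ to weight $\epsilon^2/(4T)$).
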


\subsection{Direct-Product Problem and Quantum Signature Tokens}
\label{prelim:directproduct_sigtoken}

In this section, we will define direct-product problem, which are key components of quantum signature token scheme by Ben-David and Sattath \cite{ben2016quantum} and also our quantum copy-protection scheme.

\begin{definition}[Dual Subspace] 
\label{def:dual_space}
Given a subspace $S$ of a vector space $V$, let $S^{\bot}$ be the
orthogonal complement of $S$: the set of $y \in V$ such that $x \cdot y = 0 $ for all $x \in S$. It is not
hard to show: $S^\bot$ is also a subspace of $V$; $(S^\bot)^\bot = S$.
\end{definition}


\begin{definition}[Subspace Membership Oracles]
A subspace membership oracle for a subspace $A \subseteq \F^n$, denoted as $U_A$, on input vector $v$, will output 1 if $v \in A$, $v \neq 0$ and  output 0 otherwise.
\end{definition}

\begin{definition}[Subspace State]
For a subspace $A \subseteq \F^n$, the state $\ket A$ is defined as $\frac{1}{\sqrt{|A|}} \sum_{v \in A} \ket v$, which is a uniform superposition of all vectors in $A$. 
\end{definition}

\paragraph{Direct-Product Problem} 
Our construction relies on the following problem
called the ``Direct-Product Problem" in \cite{aaronson2012quantum}: for any QPT adversary $\cA$, given one copy of $\ket{A}$ and oracle access to $U_A, U_{A^\perp}$, the problem is to finds two \textit{non-zero} vectors such that $u \in A$ and $v \in A^\perp$.

The hardness of the direct-product problem was proved by Ben-David and Sattath \cite{ben2016quantum}, used for construction of quantum signature tokens. More precisely, a signature token is  a subspace state $\ket A$ in their construction. All vectors in $A\setminus \{0\}$ are signatures for bit $0$ and all vectors in $A^\perp\setminus \{0\}$ are signatures for bit $1$. 
Therefore, to generate valid signatures for both 0 and 1, it is required to solve the ``Direct-Product Problem". Our copy-protection scheme works for general signature token schemes. To keep the statement and proof simple, we focus on the construction in \cite{ben2016quantum}.



\begin{theorem}[\cite{ben2016quantum}]
\label{thm:restate_bound_twopts}
Let $\epsilon>0$ be such that $1/\epsilon=o(2^{n/2})$. Let $A$ be a random subspace $\F^n$, and $\dim(A) =n/2$. Given one copy of $\ket{A}$ and access to subspace membership oracles of $U_A$ and $U_{A^\bot}$, an adversary needs $\Omega(\sqrt{\epsilon}2^{n/4})$ queries to output a pair of non-zero vectors $(u,v)$ such that $u\in A$ and $v\in A^\bot$ with probability at least $\epsilon$.
\end{theorem}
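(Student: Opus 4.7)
The plan is to combine an oracle-indistinguishability hybrid (via Theorem~\ref{thm:bbbv97_oraclechange}) with an information-theoretic counting argument. Fix any $q$-query adversary $\cA$ that, given $\ket A$ and access to $U_A, U_{A^\perp}$ for a uniformly random $n/2$-dimensional $A \subseteq \F^n$, outputs $(u,v)$ with $u \in A\setminus\{0\}$ and $v \in A^\perp\setminus\{0\}$ with probability at least $\epsilon$. The goal is to derive $q = \Omega(\sqrt{\epsilon}\,2^{n/4})$.

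The first step is to set up a ``coarse oracle'' hybrid. Instead of sampling $A$ directly, I would first sample a uniformly random $(n/2 + k)$-dimensional subspace $S \subseteq \F^n$ (for a parameter $k$ to be tuned) and a $(n/2 + k)$-dimensional subspace $T \subseteq \F^n$ with $T \subseteq S^\perp$-complement relations compatible with duality, and then sample $A$ uniformly among $n/2$-dimensional subspaces with $A \subseteq S$ and $A^\perp \subseteq T$. In the \emph{ideal} world, $\cA$ receives the state $\ket A$ together with the coarser oracles $U_S, U_T$ in place of $U_A, U_{A^\perp}$. Conditioned on the adversary's view in the ideal world (the transcript of coarse oracle answers plus $\ket A$), the refinement $A \subseteq S$ is still essentially uniform among $2^{\Theta(k)}$ possibilities.

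Next, I would apply Theorem~\ref{thm:bbbv97_oraclechange} to bound the trace distance between $\cA$'s final state in the real world and in the ideal world. The two oracle pairs differ only on inputs in $S \setminus A$ and $T \setminus A^\perp$, which together form a $\approx(1 - 2^{-k})$ fraction of the ``yes'' region of the coarse oracles. Averaging over the random refinement of $S$ into $A$ (and of $T$ into $A^\perp$), the expected query weight on these differing inputs at each of the $q$ steps is $O(2^{-k})$, so the total weight summed over the execution is $O(q / 2^k)$; by BBBV this yields a trace distance of $O(\sqrt{q^2 / 2^k})$ between the two worlds.

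Finally, in the ideal world, the validity check ``$u \in A\setminus\{0\} \wedge v \in A^\perp\setminus\{0\}$'' becomes an information-theoretic task: since the adversary's classical output depends on $A$ only through $\ket A$ and the $S, T$-level transcript, a counting argument over the uniform refinement of $S$ into $A$ shows that any fixed $(u,v)$ pair lies in $A \times A^\perp$ with probability $O(2^{-k})$. Combining, $\epsilon \le O(q^2/2^k) + O(2^{-k})$, and balancing by setting $k = n/2$ rearranges to $q = \Omega(\sqrt{\epsilon}\,2^{n/4})$. The main obstacle I expect is the adaptive averaging step: because $\cA$'s queries depend on prior oracle responses, one cannot simply bound query weight on ``revealing'' inputs by a fixed constant; instead one must inductively control these weights as the adversary gradually narrows down $A$, which is the delicate core of the Aaronson--Christiano and Ben-David--Sattath lower-bound technique and is where the bulk of the technical work lies.
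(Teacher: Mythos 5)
The statement is imported verbatim from Ben-David and Sattath~\cite{ben2016quantum}; the present paper does not prove it, and the cited proof goes through the \emph{inner-product adversary method} of Aaronson and Christiano~\cite{aaronson2012quantum}: one tracks a progress measure built from pairwise inner products of the adversary's post-query states conditioned on different hidden subspaces, shows each query changes it by at most $O(2^{-n/4})$, and concludes the bound. Your proposal instead attempts a BBBV-style oracle hybrid followed by a counting argument, which is a genuinely different route, but as written it has several gaps that I do not see how to close.

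The central problem is the claimed query-weight bound. You assert that, averaged over the refinement $A \subseteq S$, the per-query weight on the differing set $S\setminus A$ (and $T\setminus A^\perp$) is $O(2^{-k})$. But $S\setminus A$ is a $(1-2^{-k})$ fraction of $S$, and nothing prevents the adversary from concentrating essentially all of its query mass on $S$ (for instance by measuring $\ket A$ to obtain vectors of $A\subseteq S$ and then querying in their span); in that case the weight on $S\setminus A$ is close to the total weight on $S$, not $O(2^{-k})$. What one can say is that in the \emph{real} world, averaged over the random extension $S \supseteq A$ of dimension $n/2+k$, a fixed $x\notin A$ lands in $S$ with probability about $2^{k-n/2}$, giving per-query weight $O(2^{k-n/2})$ rather than $O(2^{-k})$; these two quantities coincide only at $k=n/2$, where the bound becomes vacuous. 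There is also an algebra slip: \cref{thm:bbbv97_oraclechange} gives a state distance of $O(\sqrt{q\cdot W})$ where $W$ is the total query weight, which bounds the \emph{probability} gap between the worlds directly (not its square), so even granting your weight estimate the hybrid loss is $q/2^{k/2}$ rather than $q^2/2^k$. Chasing the corrected numbers through, the optimal choice of $k$ yields a lower bound of roughly $q = \Omega(\epsilon\, 2^{n/4})$ (or, using the $2^{k-n/2}$ estimate, $\Omega(\epsilon^{3/2} 2^{n/4})$), which is quantitatively weaker in the $\epsilon$-dependence than the stated $\Omega(\sqrt{\epsilon}\,2^{n/4})$. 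Finally, the ideal-world ``counting'' step is underspecified: the adversary's output pair $(u,v)$ is not a fixed pair but is correlated with $\ket A$, which is the only remaining dependence on $A$, so bounding the success probability requires an argument that one copy of $\ket A$ cannot simultaneously yield a nonzero vector in $A$ and one in $A^\perp$---this is precisely the hard, no-cloning-flavoured core of the theorem and cannot be dispatched by averaging a fixed $(u,v)$ over $A$. In short, the high-level scaffolding is reasonable, but the load-bearing steps (the weight bound and the ideal-world lemma) are exactly where the adversary-method proof does its real work, and they are not established here.
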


We will refer to the direct-product problem as a security game, which is defined as follows:
\begin{definition}[Direct-Product Game]
A direct-product game consists of the following steps:
\begin{description}
\item \textbf{Setup Phase}: the challenger takes in a security parameter $\lambda$, samples a random $\lambda/2$-dimensional subspace $A$ from $\F^\lambda$; then prepares the membership oracle $U_A$ for $A$, $U_{A^\perp}$ for the dual subspace $A^\perp$ and a quantum state $\ket{A}$.
\item \textbf{Query Phase}: the challenger sends $\ket{A}$ to the adversary; the adversary can query $U_A, U_{A^\perp}$ for polynomially many times. 
\item \textbf{Output Phase}: the adversary outputs two vectors $(u,v)$. 
\end{description}
The challenger checks if $u \in A\setminus\{0\}, v \in A^\perp \setminus \{0\}$. If this is satisfied, then the adversary wins.
\end{definition}

\cref{thm:restate_bound_twopts} shows that for any QPT adversary, the winning probability of the direct-product game is negligible.

\subsection{Measurement Implementation}

The following definitions and lemmas are introduced by Zhandry \cite{z20}. 

\begin{definition}[Controlled Projection]
\label{def:controlled_project}
    Let $\cP = \{\cP_i\}_{i \in \cal I}$ be a collection of projective measurement over a Hilbert space $\cal H$, where $\cP_i = (P_i, Q_i)$ for $i\in \cal I$. Let $D$ be a distribution with a random coin set $\cR$. 
    We define the controlled projection, denoted $\cproj_{\cP, D} = (\cproj^0_{\cP, D}, \cproj^1_{\cP, D})$ as the follows: 
    \begin{align*}
        \cproj^0_{\cP, D} := \sum_{r \in \cR} \ket r \bra r \otimes P_{D(r)} \quad\quad\quad \cproj^1_{\cP, D} := \sum_{r \in \cR} \ket r \bra r \otimes Q_{D(r)}
    \end{align*}
\end{definition}
In other words, $\cproj_{\cP, D}$ uses the random coins $r$ as a control and decides which projective measurement to be applied on the system. That is, $\cproj_{\cP, D}$ implements the following mixed projective measurement, which is a POVM $\cP_D = (P_D, Q_D)$ where
$P_D = \sum_{i \in \cal I} \Pr[i \gets D] \, P_i$ and  $Q_D = \sum_{i \in \cal I} \Pr[i \gets D] \, Q_i$. 

For example, $D$ generates a random message $m$ and a random encryption $c$ of this message $m$. 
In this case, ${\cal I} = \{(m, c)\}$ for all messages and ciphertexts.
$\cP_{(m, r)} = (P_{(m, r)}, Q_{(m,r)})$ simply means trying to decrypt a ciphertext $c$ and check if the resulting message is equal to $m$. 

\begin{definition}[Projective Implementation]
\label{def:project_implement}
    Let $\cP = (P, Q)$ be a binary outcome POVM. Let $\cD$ be a finite set of distributions over outcomes $\{0, 1\}$. Let $\cE = \{E_D\}_{D \in \cD}$ be a projective measurement with index set $\cD$. Consider the following measurement: 
    \begin{itemize}
        \item Measure under the projective measurement $\cE$ and obtain a distribution $D$ over $\{0, 1\}$;
        \item Output a bit according to the distribution $D$. 
    \end{itemize}
    We say the above measurement is a projective implementation of $\cP$ if it is equivalent of $\cP$, denoted as $\projimp(\cP)$. 
\end{definition}

Note that if the outcome is a distribution $D = (d_0, d_1)$, the collapsed state is an eigenvector of $P$ corresponding to eigenvalue $d_0$, and it is also an eigenvector of $Q$ corresponding to eigenvalue $d_1 = 1 - d_0$.

\begin{lemma}[A variation of Lemma 1 in \cite{z20}]
\label{lem:proj_implement}
    Any binary outcome POVM $\cP = (P, Q)$ has a projective measurement $\projimp(\cP)$.
\end{lemma}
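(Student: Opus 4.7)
The plan is to produce the projective implementation directly by spectral decomposition. Since $P$ is Hermitian with $0\preceq P\preceq I$ (because $P+Q=I$ with $P,Q$ PSD), I would diagonalize $P=\sum_{p}p\,\Pi_p$, where the sum ranges over the distinct eigenvalues $p\in[0,1]$ of $P$ and $\Pi_p$ is the orthogonal projector onto the corresponding eigenspace. Because $Q=I-P$, the same projectors simultaneously diagonalize $Q$ as $Q=\sum_p (1-p)\Pi_p$, so each $\Pi_p$ corresponds to a Bernoulli distribution $D_p=(p,1-p)$ on $\{0,1\}$.

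Next I would build the measurement $\cE=\{E_{D_p}\}_p$ required by \cref{def:project_implement}, indexing it by the finite set $\cD=\{D_p : p\in \mathrm{spec}(P)\}$ and setting $E_{D_p}:=\Pi_p$. This is genuinely a projective measurement: each $\Pi_p$ satisfies $\Pi_p^2=\Pi_p=\Pi_p^\dagger$, and $\sum_p \Pi_p = I$ by the spectral theorem. Applying $\cE$ therefore projects onto a joint eigenspace of $P$ and $Q$, and outputs the label $D_p$ that records the two eigenvalues; then, as prescribed in \cref{def:project_implement}, one samples the final bit according to $D_p$.

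To verify that this two-stage measurement is equivalent to the POVM $\cP$, I would check that for every state $\rho$ the probability of outputting $0$ matches $\Tr[P\rho]$:
\[
\sum_p \Tr[\Pi_p \rho]\cdot p \;=\; \Tr\!\Big[\Big(\sum_p p\,\Pi_p\Big)\rho\Big] \;=\; \Tr[P\rho],
\]
and symmetrically $\Tr[Q\rho]$ for outcome $1$. Moreover, once $\cE$ has been applied, the post-measurement state lies in a single eigenspace of $P$, so re-applying the measurement yields the same label $D_p$ with probability one; this is precisely the projective behaviour that the definition asks for.

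I do not expect a real obstacle here: the construction is elementary spectral theory once one observes that $P$ and $Q$ commute. The only delicate point to state carefully is the finiteness of $\cD$, which in the infinite-dimensional setting would require a measure-theoretic spectral decomposition, but for the Hilbert spaces of interest in this paper (finite-dimensional registers produced by polynomial-size quantum circuits together with the classical-coin register of $\cproj$) the spectrum of $P$ is finite and the sum over $p$ is well defined. With that in place, $\cE$ together with the Bernoulli sampling step is a valid $\projimp(\cP)$.
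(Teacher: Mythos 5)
Your proof is correct and follows essentially the same route as Zhandry's Lemma 1 in \cite{z20}, which this paper cites without reproving: diagonalize $P$, observe $Q=I-P$ is simultaneously diagonalized, take the eigenprojectors $\Pi_p$ as the projective measurement $\cE$ indexed by the Bernoulli distributions $(p,1-p)$, and check $\sum_p p\,\Tr[\Pi_p\rho]=\Tr[P\rho]$. Your remark on finiteness of the spectrum is the right caveat and is automatic in the finite-dimensional setting the paper works in.
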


In this work, we propose the following new definition corresponding to $\projimp$.
\begin{definition}[Threshold Implementation]
\label{def:thres_implement}
A threshold implementation with parameter $\gamma$ of a binary POVM $\cP = (P, Q)$ is a variant of projective implementation $\projimp(\cP)$, denoted as $(\ti_\gamma(\cP), {\bf I} - \ti_{\gamma}(\cP))$:  
\begin{itemize}
        \item Instead of measuring under the projective measurement $\cE = \{E_D\}_{D \in \cD}$ and obtain a distribution $D$ over $\{0, 1\}$, $\ti_{\gamma}(\cP)$ measures if the corresponding distribution $D = (d_0, d_1)$ has $d_0 \geq \gamma$. 
        \item Output $0$ with probability $\Tr[\ti_{\gamma}(\cP) \rho]$ and $1$ with probability $1- \Tr[\ti_{\gamma}(\cP) \rho]$, for any quantum state $\rho$.
\end{itemize}
Therefore, $\ti_{\gamma}(\cP)$ is a projection and the collapsed state is a (mixed) state in the span of all eigenvectors of $P$ whose eigenvalues are at least $\gamma$. 
\end{definition}
\begin{remark}
    For a binary outcome measurement $\cP = (P_0 ,P_1)$, we usually say `perform measurement $P_0$ on $\rho$' if $\cP$ was performed on $\rho$.  Since we only focus on the case that outcome is $0$ in the paper, it sometimes also denotes applying $\cP$ on $\rho$ conditioned on that the outcome is $0$.
\end{remark}

\subsubsection{Approximating Projective Implementation}

Before describing the theorem of the approximation algorithm, we give two definitions that characterize how good an approximation projective implementation is, which were first introduced in \cite{z20}. 

\begin{definition}[Shift Distance]
    For two distribution $D_0, D_1$, the shift distance with parameter $\epsilon$ is defined as $\Delta_{\shift}^\epsilon(D_0, D_1)$, which is the smallest quantity $\delta$ such that for all $x\in \R$: 
    \begin{align*}
        \Pr[D_0 \leq x] &\leq \Pr[D_1 \leq x + \epsilon] + \delta, \\
        \Pr[D_1 \leq x] &\leq \Pr[D_0 \leq x + \epsilon] + \delta.
    \end{align*}
    
    For two real-valued measurements $\cM$ and $\cN$ over the same quantum system, the shift distance between $\cM$ and $\cN$ with parameter $\epsilon$ is defined as,
    \begin{align*}
        \Delta^\epsilon_{\shift}(\cM, \cN) := \sup_{\ket \psi} \Delta_{\shift}^\epsilon\left( \cM (\ket \psi), \cN (\ket \psi) \right).
    \end{align*}
\end{definition}

\begin{definition}[$(\epsilon, \delta)$-Almost Projective]
    A real-valued quantum measurement $\cM$ is said to be $(\epsilon, \delta)$-almost projective if for all quantum state $\ket \psi$, apply $\cM$ twice in a row to $\ket \psi$, obtaining outcomes $X$ and $Y$. Then we have
        $\Pr[|X - Y| \leq \epsilon] \geq 1 - \delta$. 
\end{definition}

\begin{theorem}[Theorem 2 in \cite{z20}]\label{thm:api_projimp}
Let $D$ be any probability distribution and $\cP$ be a collection of projective measurements. For any $0<\epsilon, \delta < 1$, there exists an algorithm of measurement $\api^{\epsilon, \delta}_{\cP, D}$ that satisfies the followings:
    \begin{itemize}
        \item $\Delta^\epsilon_{\shift}(\api^{\epsilon, \delta}_{\cP, D}, \projimp(\cP_D)) \leq \delta$.
        \item $\api^{\epsilon, \delta}_{\cP, D}$ is $(\epsilon, \delta)$-almost projective. 
        \item The expected running time of $\api^{\epsilon, \delta}_{\cP, D}$ is $T_{\cP, D} \cdot \poly(1/\epsilon, \log(1/\delta))$ where $T_{\cP, D}$ is the combined running time of $D$, the procedure mapping $i$ to $(P_i, Q_i)$ and the run-time of measurement $(P_i, Q_i)$. 
    \end{itemize}
\end{theorem}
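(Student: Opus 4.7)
The plan is to follow the quantum phase estimation approach pioneered by Marriott--Watrous and formalized in \cite{z20}. The central reduction is to Jordan's lemma, which lets us decompose the Hilbert space into two-dimensional invariant subspaces under a pair of projections, reducing the estimation of a POVM eigenvalue to the estimation of a rotation angle.

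First, I would purify: on input state $\rho$ with purification $\ket{\psi}$, append a randomness register prepared in $\ket{D} := \sum_r \sqrt{\Pr[r \gets D]}\,\ket{r}$, so that $\cproj^0_{\cP, D}$ acts as an honest projection on the enlarged system. Let $\Pi_0 := \cproj^0_{\cP, D}$ and let $\Pi_1$ be the projection onto the subspace $\ket{D}\otimes \mathcal{H}$ (states whose randomness register equals $\ket{D}$). A direct computation shows that for any input $\ket{D}\ket{\psi}$, we have $\bra{D}\bra{\psi}\, \Pi_0 \,\ket{D}\ket{\psi} = \bra{\psi}P_D\ket{\psi}$, so the spectral structure of $P_D$ is captured by the pair $(\Pi_0, \Pi_1)$. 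By Jordan's lemma, the space decomposes into one- and two-dimensional subspaces invariant under both projections. Inside each 2D block the unitary $U := (2\Pi_0 - I)(2\Pi_1 - I)$ is a rotation with eigenvalues $e^{\pm i\theta_j}$ satisfying $\cos^2(\theta_j/2) = p_j$, where $p_j$ is precisely an eigenvalue of $P_D$ arising in $\projimp(\cP_D)$.

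Second, I would invoke quantum phase estimation on $U$, using $t = O(\log(1/\epsilon))$ ancilla qubits, followed by the inverse mapping $\theta \mapsto \cos^2(\theta/2)$ to obtain an estimate $\tilde p$ of $p$. Standard phase-estimation error bounds give $|\tilde p - p| \le \epsilon$ with constant probability; repeating $k = O(\log(1/\delta))$ times and outputting the median boosts the failure probability to $\delta$ via a Chernoff bound. Each invocation of $U$ costs one application of $\Pi_0$ (one query to $\cproj^0_{\cP, D}$, which in turn invokes $D$ and the map $i \mapsto (P_i,Q_i)$) and one application of $\Pi_1$, yielding total expected running time $T_{\cP, D} \cdot \poly(1/\epsilon, \log(1/\delta))$, which establishes the third bullet. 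The shift-distance bound then follows by linearity over the Jordan decomposition: on an eigenvector with eigenvalue $p$, $\projimp(\cP_D)$ outputs exactly $p$ while $\api^{\epsilon, \delta}_{\cP, D}$ outputs a value within $\epsilon$ of $p$ except with probability $\delta$, so for every threshold $x$, the CDF of the API at $x$ is sandwiched by the CDF of $\projimp(\cP_D)$ at $x \pm \epsilon$ up to additive $\delta$.

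Third, for the $(\epsilon,\delta)$-almost-projective property, I would argue that a single invocation of $\api^{\epsilon,\delta}_{\cP, D}$ leaves the state close to the Jordan eigenspace indexed by the output estimate. More precisely, conditioned on the successful event that all $k$ phase-estimation trials are $\epsilon/2$-accurate (probability $\ge 1-\delta$), the post-measurement state has overlap $\ge 1 - O(\delta)$ with the direct sum of Jordan blocks whose true eigenvalue $p_j$ satisfies $|p_j - \tilde p| \le \epsilon/2$; a second invocation therefore outputs a value within $\epsilon$ of the first except with probability $\delta$. The main obstacle is getting clean quantitative control over the collapse: standard phase estimation only guarantees approximate collapse onto approximate eigenvectors, and the tails of the Fej\'er-like output distribution have to be truncated carefully. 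This is exactly what forces the median-of-repetitions amplification and what makes the poly$(1/\epsilon)$ rather than $O(\log 1/\epsilon)$ runtime dependence essential; handling it rigorously, via an $\ell_2$-error analysis of the phase estimation circuit followed by Gentle Measurement (\cref{lem:gentle_measure}) to recover the input state after the ancilla trace-out, is the technical heart of the proof.
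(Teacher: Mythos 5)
The paper does not prove this theorem; it is imported verbatim from \cite{z20} (Theorem 2 there), so there is no in-paper argument to check you against. Taking your proposal on its own terms: the two-projector setup, the reduction to Jordan's lemma, the identity $\bra{D}\bra{\psi}\Pi_0\ket{D}\ket{\psi}=\bra{\psi}P_D\ket{\psi}$, the angle correspondence $\cos^2(\theta_j/2)=p_j$, and the $\poly(1/\epsilon,\log(1/\delta))$ runtime accounting are all correct, and a phase-estimation implementation of $\api$ is a legitimate cousin of what \cite{z20} actually does. Note, though, that \cite{z20}'s construction is not phase estimation: it is the Marriott--Watrous style alternating measurement, which repeatedly applies the binary projective measurements $\{\Pi_0,I-\Pi_0\}$ and $\{\Pi_1,I-\Pi_1\}$ for $N=\poly(1/\epsilon,\log 1/\delta)$ rounds and outputs the empirical transition frequency; the two approaches share the Jordan decomposition but differ in what they ask you to control.

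The genuine gap in your sketch is the $(\epsilon,\delta)$-almost-projective property. Your phrase ``conditioned on the successful event that all $k$ phase-estimation trials are $\epsilon/2$-accurate'' cannot be made rigorous as stated, for two reasons. First, the median amplification has to be coherent: if you run $k$ phase estimations sequentially and measure each, then a bad early estimate both corrupts the classical output \emph{and} collapses the state toward the wrong Jordan block, so later trials no longer estimate the eigenvalue you wanted and the independence underlying the Chernoff bound is gone. You must prepare all $k$ ancilla registers in superposition, compute the median in superposition, and measure only the median register, leaving the individual estimates unmeasured. Second, the almost-projective definition has no conditioning in it --- you must bound $\Pr[|X-Y|>\epsilon]$ over the unconditioned two-shot experiment --- so the real work is to show that for every observed median value the sub-normalized post-measurement state has $\ell_2$-mass $\ge 1-O(\delta)$ on the Jordan blocks whose $p_j$ is $\epsilon$-close, which requires integrating the polynomially decaying Fej\'er tails of the phase-estimation output against the Jordan amplitudes and then propagating the resulting leakage into the second run. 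You flag this as ``the technical heart'' but do not carry it out, and it is exactly the step that the alternating-measurement route sidesteps: there, after each measured round the state lies \emph{exactly} in a one-dimensional subspace of each Jordan block, so the only leakage is in the classical outcome distribution and the Chernoff argument is elementary.
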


\section{Learning Game Definitions}
\label{sec:new_lg_defs}

\subsection{Unlearnability}

\begin{definition}[Quantum Program with Classical Inputs and Outputs]
A quantum program with classical inputs is a pair of quantum state $\rho$ and unitaries $\{U_x\}_{x \in [N]}$ (where $[N]$ is the domain), such that the state of the program evaluated on input $x$ is equal to $U_x \rho U^\dagger_x$.  To obtain an output, it measures the first register of $U_x \rho U^\dagger_x$. 
Moreover, $\{U_x\}_{x \in [N]}$ has a compact classical description which means applying $U_x$ can be efficiently computed given $x$. 
\end{definition}

Notation-wise, the input and output space $N,M$ are functions in $\lambda$.

\begin{definition}[$\gamma$-Goodness Test with respect to $f, D$] \label{def:gamma_good_test_functionality_equivalence}
    Let $(\rho_f, \{U_{f,x}\}_{x \in [N]})$ be a quantum program for computing a classical function $f:[N]\rightarrow [M]$. Let $D$ be a probability distribution over the input space $[N]$. 
    \begin{itemize}
        \item Define $(P_{f, x}, Q_{f, x})$ be a projective measurement that computes the quantum program on input $x$, and checks in superposition that if the quantum circuit outputs correctly. Let $V_{f, x}$ be a projection that checks if in superposition, the first register is equal to $f(x)$. We have $P_{f, x} = V_{f, x} U_{f, x}$ and $Q_{f, x} = {\bf I} - P_{f,x}$. 
        \item Let $\{P_f, Q_f\}$ be the controlled projection with respect to the  distribution $D$, as defined in \cref{def:controlled_project}. Then, let $\{\ti_{\gamma}(P_f), {\bf I} - \ti_{\gamma}(P_f) \}$ be the Threshold Implementation for $P_f$ with threshold value $\gamma$, as defined in \cref{def:thres_implement}.
        
        \item We say a quantum program is tested \textbf{$\gamma$-good for computing $f$ with distribution $D$} if the projective measurement $\{\ti_{\gamma}(P_f), {\bf I} - \ti_{\gamma}(P_f) \}$ on $\rho_f$ outputs 0.
    \end{itemize}
\end{definition}

\begin{definition}[Learning Game for $\cF, \cD$] \label{def:learning_game_functional_equivalent}
A learning game for a function family $\cF = \{\cF_\lambda: [N] \to [M]\}$ , a distribution family $\cD = \{D_f\}$, 
and an adversary $\As$ is denoted as $\LG^\As_{\mathcal{F}, \cD, \gamma}(1^\lambda)$, which consists the following steps:
\begin{enumerate}
    \item \textbf{Sampling Phase}: At the beginning of the game, the challenger takes a security parameter $\lambda$ and samples a function $f \gets \cF_\lambda$; 
    \item \textbf{Query Phase}: $\As$ then gets oracle access to $f$;
    \item \textbf{Output Phase}: Finally, $\As$ outputs a quantum program $(\rho, \{U_x\}_{x \in [N]})$.
\end{enumerate}
The game outputs $0$ if and only if the program is tested to be $\gamma$-good with respect to $f, D_f$.
\end{definition}

\begin{definition}[Quantum Unlearnability of $\cF$ with Testing Distribution $\cD$] \label{def:unlearnable_func}
A family of functions with respect to $\cD$ is called $\gamma$ quantum unlearnable if for all $\lambda$, for any QPT adversary $\cA$, there exists a negligible function $\negl(\cdot)$ such that the following holds:
 \begin{align*}
       \Pr\left[ b = 0,\, b \gets \LG_{\cF,\cD, \gamma}^\cA(1^\lambda) \right]  \leq \negl(\lambda)
    \end{align*}
\end{definition}

\subsection{Generalized Unlearnability}

The $\gamma$-goodness test for quantum program (\cref{def:gamma_good_test_functionality_equivalence}) captures the intuition that a quantum program's behavior on classical inputs is $\gamma$-good comparing to the input-output behavior of $f$ with respect to the input distribution $D_f$. For cryptographic primitives, as discussed in the introduction, achieving a particular cryptographic functionality does not necessarily mean to have the exact input-output behavior.  As an example, to sign a message, there are usually more than one valid signatures and the intended functionality is preserved as long as any valid signature is provided. 

For a randomized function $f$, we denote the input $x$ of $f$ as the real input taken by $f$ as well as random coins used by $f$. 
\begin{definition}[Predicate]
    A classical predicate $E(P, y_1, \cdots, y_k, r)$ is a binary outcome function that runs a classical program $P$ on a randomly sampled input $x$ to get output $z$, and outputs $0/1$ depending on whether $(z, y_1, \cdots, y_k, r) \in R$ for some binary relation defined by $R$. The randomness of input $x$, program $P$ all depends on randomness $r$. $y_1, \cdots, y_k$ are auxiliary inputs that specify the relation. 
    
    Quantumly, it runs a quantum program on random classical input $x$ and measure if $(z, y_1, \cdots, y_n, r) \in R$ in superposition, where $z$ is the first register of the resulting state. In other words, it is a projective measurement indexed by $r$. 
\end{definition}

We use $\samp, \mathscr{F}$ to denote a cryptographic application. $\mathscr{F}$ denotes the intended functionality that this cryptographic application should achieve. 
\begin{definition}[Cryptographic Application $\samp, \mathscr{F}$]
$\samp$ is a sampler that takes a security parameter $\lambda$ and interacts with an adversary $\As$: $f \gets (\As \Leftrightarrow \samp(1^\lambda))$ where $f$ is a classical circuit that contains some secret information $s_f$ which is unknown to $\As$, and $\As$ can get some public information $\aux_f$ from the interaction. 

$\mathscr{F} = \{F_\lambda\}$ and $F_\lambda(P, f, r)$ is a predicate which takes a program, a circuit $f$ and randomness $r$.  
For all efficient $\As$, all $ f$ sampled by $\samp$, there exists a negligible function $\negl(\cdot)$ such that, $\Pr\left[  F_\lambda(f, f, r) = 0 \right] \geq 1 - \negl(\lambda)$.
\end{definition}

This security of the cryptographic application is orthogonal to its correctness and unlearnability. The definition of security varies a lot when different applications are given. 
Some examples include CPA security for public key encryption schemes and signature unforgeability. However,  the security should be easy to prove, when we implement a copy protection/copy detection scheme using our construction. In this paper, we only focus on its correctness and copy-protect security/copy-detect security/unlearnability/unremovability.

\begin{definition}[$\gamma$-Goodness Test with respect to $f, E$] \label{def:gamma_good_test_crypto_applications}
    Let a quantum program  for computing $f$ be $(\rho_f, \{U_{f,x}\}_{x \in [N]})$. 
    \begin{itemize} 
        
        \item Quantumly, define $(P_{f, r}, Q_{f, r})$ be a projective measurement that computes the quantum program on input $x$ (sampled according to $r$), and checks in superposition that if the output of the quantum circuit satisfies the predicate $E(\cdot, f, r)$ in superposition.
        \item Let $\{P_f, Q_f\}$ be the controlled projection with respect to uniform distribution on randomness $r$. Let $\{\ti_{\gamma}(P_f), {\bf I} - \ti_{\gamma}(P_f) \}$ be the threshold implementation for $P_f$ with threshold value $\gamma$.
        
        \item A quantum program is \textbf{tested $\gamma$-good} with respect to $f,E$ if the projective measurement $\{\ti_{\gamma}(P_f), {\bf I} - \ti_{\gamma}(P_f) \}$ on $\rho_f$ outputs 0.
    \end{itemize}
\end{definition}

Note that \cref{def:gamma_good_test_functionality_equivalence} fits into this general definition, where the predicate $E$ on a random input $x$ ($x$ is drawn depending on randomness $r$) and $f$, checks if the output is equal to $f(x)$. 

We then generalize the learning game to the setting of cryptographic applications.  Note that $\mathscr{E}$ may be not the same as $\mathscr{F}$. In the game below, an adversary tries to learn a more restricted functionality of $f$. 
\begin{definition}[Learning Game for $\samp, \mathscr{E}$] \label{def:learning_game_crypto_applications}
A learning game for a sampler $\samp$ (which samples a function in $\mathcal{F}_\lambda$), a predicate $\mathscr{E} = \{E_\lambda\}$,
and an adversary $\As$ is denoted as $\LG^\As_{\samp, \mathscr{E}, \gamma}(1^\lambda)$, which consists the following steps:
\begin{enumerate}
    \item \textbf{Sampling Phase}: At the beginning of the game, $\As$ interacts with the challenger and samples $f \gets (\As \Longleftrightarrow \samp(1^\lambda))$. 
    \item \textbf{Query Phase}: $\As$ then gets oracle access to $f$;
    \item \textbf{Output Phase}: Finally, $\As$ outputs a quantum program $(\rho, \{U_x\}_{x \in [N]})$. 
\end{enumerate}
The game outputs $0$ if and only if the program is tested to be $\gamma$-good with respect to $f, E_\lambda$. 
\end{definition}

It is easy to see that \cref{def:learning_game_crypto_applications} implies \cref{def:learning_game_functional_equivalent}.  One example is digital signature. $\samp$ picks a pair of signing key and verification key $(\sk, \vk)$ and outputs a signing circuit $f = \sign(\sk, \cdot)$ which hard-wires $\sk$ and appends $\vk$ with the circuit description. The predicate is defined as: sample $m, r_s, r_v$ according to randomness $r$,  run the program with input $m$ and randomness $r_s$ to obtain outcome $z$, 
decode $\sk, \vk$ from the circuit $f$ and the predicate is $0$ if and only if $\ver(\vk, m, z; r_v) = 1$. In other words, the predicate checks if the program outputs a valid signature on a random message. 


\begin{definition}[Quantum Unlearnability of $(\samp, \mathscr{F}), \mathscr{E}$]
$((\samp, \mathscr{F}), \mathscr{E})$ is called $\gamma$-quantum-unlearnable if for all $\lambda$, for any QPT adversary $\cA$, there exists a negligible function $\negl(\cdot)$ such that the following holds: 
 \begin{align*}
       \Pr\left[ b = 0,\, b \gets \LG_{\samp,\mathscr{E}, \gamma}^\cA(1^\lambda) \right]  \leq \negl(\lambda)
    \end{align*}
\end{definition}

\subsection{Generalized Copy Protection}

\begin{definition}[Quantum Copy Protection]
\label{def:qcp_general}
A quantum copy-protection scheme for
$(\samp, \mathscr{F}), \mathscr{E}$ consists of the following procedures:
\begin{description}
    \item[]\setup$(1^\lambda) \to (\sk)$: the setup algorithm takes in a security parameter $\lambda$ in unary and generates a secret key $\sk$. 
    
    \item[] \generate$(\sk, f) \rightarrow (\rho_f, \{U_{f, x}\}_{x \in [N]})$:
    on input $f \in \cF_\lambda$ and secret key $\sk$, the vendor generates a quantum program $(\rho_f, \{U_{f, x}\}_{x \in [N]})$. 

 \item[]\compute$ (\rho_f, \{U_{f, x}\}_{x \in [N]}, x) \rightarrow y $:
     given a quantum program, a user can compute the function $f(x)$ on input $x$ by applying $U_{f, x}$ on $\rho$ and measuring the first register of the state. 
\end{description}
\end{definition}

\begin{description}
 \item[] \textbf{Efficiency:}  \textsf{Setup}, $\compute$ and \textsf{Generate}  should run in  $\poly(\lambda)$ time. 
 
    \item[] \textbf{Correctness:}  
     For all $\lambda \in \N$, all efficient $\As$,  every $f \gets (\As \Longleftrightarrow \samp(1^\lambda))$,
     all $(\rho_f, \{U_{f, x}\}_{x \in [N]}) \gets \generate(\sk, f)$, 
     there exists a negligible function $\neglfunc(\cdot)$ such that,
     \begin{description}
         \item \textbf{unique output}: for all $x \in [N]$, apply $U_{f, x}$ on $\rho_f$ and measure the first register, with probability at least $1-\negl(\lambda)$, the output is a fixed value $z_{f, x}$;
         
         
         \item \textbf{functionality preserving}:
         $(\rho_f, \{U_{f, x}\}_{x \in [N]})$ are $(1-\negl(\lambda))$-good with respect to $f, F_\lambda$ with probability $1$. 
     \end{description}
     
    
\item[] \textbf{Security}: It has $\gamma$-anti-piracy security defined below. 
\end{description}
Note that the property ``unique output'' enables the copy-protected program can be evaluated polynomially many times.

\begin{definition}[$\gamma$-Anti-Piracy Security Game]
An anti-piracy security game for a sampler $\samp$, a predicate $\mathscr{E}$ and adversary $\cA$ is denoted as $\AG_{\samp, \mathscr{E}, \gamma}^{\cA}(1^\lambda)$, which consists of the following steps:

\begin{enumerate}
    \item \textbf{Setup Phase}: At the beginning of the game, the challenger takes a security parameter $\lambda$ and obtains secret key $\sk \gets \setup(1^\lambda)$.
    
    \item \textbf{Sampling Phase}: $\As$ interacts with the challenger and samples $f \gets (\As \Longleftrightarrow \samp(1^\lambda))$.
    
    \item \textbf{Query Phase}:
    $\As$ makes a single query to the challenger and obtains a copy protection program for $f$: $(\rho_f, \{U_{f, x}\}_{x \in [N]}) \gets \generate(\sk, f)$. 

    \item \textbf{Output Phase}: Finally, $\As$ outputs a (possibly mixed and entangled) state $\sigma$ over two registers $R_1, R_2$ and two sets of unitaries $(\{U_{R_1, x}\}_{x \in [N]}, \{U_{R_2, x}\}_{x \in [N]})$  They can be viewed as programs $\P_1 = (\sigma[R_1], \{U_{R_1, x}\}_{x \in [N]})$ and $\P_2 = (\sigma[R_2],\allowbreak \{U_{R_2, x}\}_{x \in [N]})$.
\end{enumerate}
The game outputs $0$ if and only if \emph{both} programs $\P_1, \P_2$ are both tested to be $\gamma$-good with respect to $ E_\lambda$. 

\end{definition}

Similarly, we can define $q$-collusion resistant $\gamma$-anti-piracy security game $\AG_{\samp, \mathscr{E}, \gamma}^{q, \cA}(1^\lambda)$, in which the adversary $\cA$ can make  at most $q$ queries in the query phases and is required to output $q+1$ programs $\{(\sigma[R_i],\{U_{R_i,x}\}_{x\in [N]})\}_{i\in [q+1]}$ such that each program is tested to be $\gamma$-good.

\begin{definition}[$\gamma$-Anti-Piracy-Security] 
 A copy protection scheme for $\samp$ and $\mathscr{E}$ has $\gamma$-anti-piracy security, if for any QPT adversary $\cA$,  there exists a negligible function $\negl(\cdot)$ such that the following holds for all $\lambda \in \N$: 
  \begin{align}
    \Pr\left[b = 0, b \gets \AG_{\samp, \mathscr{E}, \gamma}^{\cA}(1^\lambda) \right]\leq \negl(\lambda)
    \end{align}
\end{definition}


\subsection{Generalized Copy Detection}

A copy detection scheme for $(\samp, \mathscr{F}), \mathscr{E}$ is very similar to the copy protection scheme, except it has an additional procedure $\chk$ which applies a projective measurement and checks if the quantum state is valid. 

\begin{definition}[Quantum Copy Detection]
\label{def:qcd_general}
A quantum copy-detection scheme for
$(\samp, \mathscr{F}), \mathscr{E}$ consists of the following procedures:
\begin{description}
     \item[]\setup$(1^\lambda)$, \generate$(\sk, f)$ and \compute$ (\rho_f,  \{U_{f, x}\}_{x \in [N]}, x)$ are the same as those in \cref{def:qcp_general}.

    


\item[]\chk$(\pk, \aux_f, \rho_f, \{U_{f, x}\}_{x \in [N]}) \to b, \rho'$: on input a public key $\pk$, public information $\aux_f$ generated during $\samp$, a quantum program, it applies a binary projective measurement $P_0, P_1$ on $\rho_f$ that depends on $\pk$, $\aux_f$, $\{U_{f, x}\}_{x \in [N]}$; it outputs the outcome $b$ and the collapsed state $\rho'$.
\end{description}
\end{definition}

\begin{description}
 
    \item[] \textbf{Correctness} (\generate):  The same as the security of \cref{def:qcp_general}. 
    
    \item[] \textbf{Correctness} (\chk):  
     For all $\lambda \in \N$, all efficient $\As$,  every $f \gets (\As \Longleftrightarrow \samp(1^\lambda))$,
     all $(\rho_f, \{U_{f, x}\}_{x \in [N]}) \gets \generate(\sk, f)$, 
     there exists a negligible function $\neglfunc(\cdot)$ such that, $\chk(\pk,  \aux_f, \rho_f, \{U_{f, x}\}_{x \in [N]})$ outputs $0$ with probability at least $1 - \negl(\lambda)$. 
    
\item[] \textbf{Security}: It has $\gamma$-copy-detection security defined below. 
\end{description}

\begin{definition}[$\gamma$-Copy-Detection Security Game]
A copy-detection security game for a sampler $\samp$, a predicate $\mathscr{E}$ and adversary $\cA$ is denoted as $\DG_{\samp, \mathscr{E}, \gamma}^\cA(1^\lambda)$, which consists of the following steps:

\begin{enumerate}
    \item \textbf{Setup Phase}: At the beginning of the game, the challenger takes a security parameter $\lambda$ and obtains keys $(\pk, \sk) \gets \setup(1^\lambda)$.
    
    \item \textbf{Sampling Phase}: $\As$ interacts with the challenger and samples $f \gets (\As \Longleftrightarrow \samp(1^\lambda))$. Let $\aux_f$ denote the public information $\As$ obtains during the interaction.

    \item \textbf{Query Phase}:
    $\As$ makes a single query to the challenger and obtains a copy detection program for $f$: $(\rho_f, \{U_{f, x}\}_{x \in [N]}) \gets \generate(\sk, f)$. 
    
    \item \textbf{Output Phase}: Finally, $\As$ outputs a state $\sigma$ over two registers $R_1, R_2$ and two sets of unitaries $(\{U_{R_1, x}\}_{x \in [N]}, \{U_{R_2, x}\}_{x \in [N]})$. They can be viewed as programs $\P_1 = (\sigma[R_1], \{U_{R_1, x}\}_{x \in [N]})$ and $\P_2 = (\sigma[R_2], \{U_{R_2, x}\}_{x \in [N]})$.
\end{enumerate}
The game outputs $0$ if and only if
\begin{itemize}
    \item Apply $\chk$ on input $\pk, \aux_f, P_i$ respectively and both outcomes are $0$. Let $P'_i$ be the collapsed program conditioned on outcomes are $0$. 
    \item \emph{Both} programs $\P'_1, \P'_2$ are both tested to be $\gamma$-good with respect to $f, E_\lambda$. 
\end{itemize}
\end{definition}

Similarly, we can define $q$-collusion resistant $\gamma$-copy-detection security game $\DG_{\samp, \mathscr{E}, \gamma}^{\cA, q}(1^\lambda)$, in which the adversary $\cA$ can perform at most $q$ query phases and output $q+1$ programs $P_i=(\sigma[R_i], \{U_{R_i, x}\}_{x\in [N]})$ for $i\in [q+1]$. The game outputs 0 if and only if for all $i\in [q+1]$, the outcome of applying $\chk$ on $P_i$ is $0$, and the collapsed program $P_i'$ is tested to be $\gamma$-good. 

\begin{definition}[$\gamma$-Copy-Detection-Security]
 A copy detection scheme for $\samp$ and $\mathscr{E}$ has $\gamma$-security, if for any QPT adversary $\cA$,  there exists a negligible function $\negl(\cdot)$ such that the following holds for all $\lambda \in \N$: 
  \begin{align}
    \Pr\left[b = 0, b \gets \DG_{\samp, \mathscr{E}, \gamma}^{\cA}(1^\lambda) \right]\leq \negl(\lambda)
    \end{align}
\end{definition}

\subsection{Watermarking Primitives with Public Extraction}

In this subsection, we give a unified definition that covers most of the definitions in the previous works about watermarking primitives including \cite{cohen2018watermarking,kim2017watermarking,quach2018watermarking,kim2019watermarking,goyal2019watermarking}.   
We will give several concrete examples of watermarking schemes in Appendix~\ref{sec:wmapp}. 

\begin{definition}[Watermarking Primitives for $(\samp, \mathscr{F}), \mathscr{E}$]
A watermarking scheme for $(\samp, \mathscr{F}), \mathscr{E}$ consists of the following {classical} algorithms: 
\begin{description}
    \item \textbf{$\setup(1^\lambda)$}: it takes as input a security parameter $1^\lambda$ and outputs keys $(\xk, \mk)$. $\xk$ is the extracting key and $\mk$ is the marking key. 
    We only consider publicly extractable watermarking scheme. Thus $\xk$ is always public. 

    \item $\samp(1^\lambda)$: it takes a security parameter $1^\lambda$,
    \begin{align*}
        f \gets  (\As \Longleftrightarrow \samp(1^\lambda)). 
    \end{align*}
    We also denote $\aux_f$ as the public information $\As$ obtains during the interaction. 

    \item $\mar(\mk, f, \tau)$: it takes a circuit $f$ and a message $\tau \in \mathcal{M}_\lambda$, outputs a marked circuit $\widetilde{f}$. 
    
    \item $\extract(\xk, \aux_f, f')$: it takes the public auxiliary information $\aux_f$, a circuit and outputs a message in $\{\sf \bot\} \cup \mathcal{M}_\lambda$. 
    
\end{description}
\end{definition}

\emph{Remark.} In some watermarking schemes, $\setup$ also outputs a watermarking public parameter $\wpp$ and $\samp$ takes this parameter to sample a function. Our construction works in this setting. In sake of clarity, we use the above notion. $\extract$ may also take an $\aux$ that specifies its restricted functionality that $f'$ should achieve. We assume 
$f'$ contains a piece of information $\aux$ as a comment.

\vspace{1em}

It satisfies the following properties. 

\begin{definition}[Correctness of \mar\, (Functionality Preserving)]
    For all $\lambda$, for every efficient algorithm $\As$, there exists a negligible function $\negl$, for all $(\xk, \mk)\gets \setup(1^\lambda)$, and every $\tau \in \mathcal{M}_\lambda$, 
    \begin{align*}
        \Pr\left[ F_\lambda(\widetilde{f}, f, r) = 0 \,:\,   \substack{  f \gets (\As \Longleftrightarrow \samp(1^\lambda))  \\ \widetilde{f} \gets\mar(\mk, f, \tau)  }\right] \geq 1 - \negl(\lambda).
    \end{align*}
\end{definition}

\begin{definition}[Correctness of \extract]
    For all $\lambda$, for every efficient algorithm $\As$, there exists a negligible function $\negl(\cdot)$, for all $( \xk, \mk)\gets \setup(1^\lambda)$, and every $\tau \in \mathcal{M}_\lambda$, every $\aux$, 
    \begin{align*}
        \Pr\left[ \tau \ne \extract(\xk,  \aux_f, \widetilde{f} || \aux) \,:\,  \substack{ f \gets (\As \Longleftrightarrow \samp(1^\lambda)) \\ \widetilde{f} \gets\mar(\mk, f, \tau)  } \right] \leq \negl(\lambda),
    \end{align*} 
    where $\aux_f$ is the public information given to $\As$ and $\widetilde{f} || \aux$ is the program appended with $\aux$. 
\end{definition}

\begin{definition}[Meaningfulness]
    For all $\lambda$, for every efficient algorithm $\As$, there exists a negligible function $\negl(\cdot)$, for every $\aux$, 
    \begin{align*}
        \Pr\left[  \bot \ne \extract(\xk,  \aux_f, f||\aux) \,:\,  \substack{ (\xk, \mk)\gets \setup(1^\lambda) \\ f \gets (\As \Longleftrightarrow \samp(1^\lambda)) } \right] \leq \negl(\lambda).
    \end{align*}  
    where $\aux_f$ is the public information given to $\As$ and $\widetilde{f} || \aux$ is the program appended with $\aux$. 
\end{definition}

\begin{definition}[$\gamma$-Unremovability with respect to $\samp, \mathscr{E}$]
Consider the following game, denoted as $\WG^{\As}_{\samp, \mathscr{E}, \gamma}$: 
    \begin{enumerate}
        \item \textbf{Setup}: The challenger samples $(\xk, \mk) \gets \setup(1^\lambda)$. $\As$ then gets $\xk$. 
        \item \textbf{Sampling Phase}:  The challenger interacts with the algorithm $\As$ and samples $f \gets (\As \Longleftrightarrow \samp(1^\lambda))$. 
        \item \textbf{Query Phase}: $\As$ has classical access to $\mar(\mk, f, \cdot)$ at any time. Define $Q$ be the set of messages that $\As$ has queried on.
        
        \item \textbf{Output Phase}: Finally, the algorithm outputs a circuit $f^*$. 
    \end{enumerate}
    The adversary wins the game if and only if
    \begin{align*}
        \extract(\xk, \aux_f,  f^*) \not\in Q  ~\wedge~ \Pr_r[E_\lambda(f^*, f, r) = 1] \geq \gamma 
    \end{align*}
    We say a watermarking scheme has $\gamma$-unremovability respect to $\samp, \mathscr{E}$, if for all QPT $\As$, it wins the above game with negligible probability in $\lambda$. We say it has $q$-collusion resistant  $\gamma$-unremovability if the number of queries made in the query phase is at most $q$. 
\end{definition}

\section{Approximating Threshold Implementation}
\label{sec:ati}

By applying $\api^{\epsilon, \delta}_{\cP, D}$ and checking if the outcome is greater than or smaller than $\gamma$, we get a approximated threshold implementation $\ati^{\epsilon, \delta}_{\cP, D, \gamma}$. Here, we use $(\ati^{\epsilon, \delta}_{\cP, D, \gamma}, {\bf I}-\ati^{\epsilon, \delta}_{\cP, D, \gamma})$ to denote this binary POVM.

\cref{thm:api_projimp} gives the following corollary on approximating threshold implementation:

\begin{corollary}\label{cor:ati_thresimp}
    For any $\epsilon, \delta, \gamma, \cP, D$, the algorithm of measurement $\ati^{\epsilon, \delta}_{\cP, D, \gamma}$ that satisfies the followings: 
    \begin{itemize}
        \item For all quantum state $\rho$, $\Tr[\ati^{\epsilon, \delta}_{\cP, D, \gamma-\epsilon}\cdot \rho]\geq \Tr[\ti_\gamma(\cP_D)\cdot \rho]-\delta$.
        
        \item By symmetry, for all quantum state $\rho$, $\Tr[\ti_{\gamma-\epsilon}(\cP_D)\cdot \rho] \geq \Tr[\ati^{\epsilon, \delta}_{\cP, D, \gamma}\cdot \rho] -\delta$.
        
        \item 
        For all quantum state $\rho$, 
        let $\rho'$ be the collapsed state after applying  $\ati^{\epsilon, \delta}_{\cP, D, \gamma}$ on $\rho$. Then, $\Tr[\ti_{\gamma-2 \epsilon}(\P_D)\cdot \rho'] \geq 1 - 2\delta$. 
        \item
        The expected running time is the same as $\api_{\cP, D}^{\epsilon, \delta}$.
    \end{itemize}
\end{corollary}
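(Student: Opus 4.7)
The plan is to view $\ati^{\epsilon,\delta}_{\cP, D, \gamma}$ simply as ``run $\api^{\epsilon,\delta}_{\cP, D}$, then output $0$ if the real-valued outcome is $\geq \gamma$ and $1$ otherwise.'' Writing $A:=\api^{\epsilon,\delta}_{\cP, D}$ and $P:=\projimp(\cP_D)$ as real-valued measurements, every trace in the statement becomes a tail probability: $\Tr[\ti_\gamma(\cP_D)\rho]=\Pr[P(\rho)\geq \gamma]$ and $\Tr[\ati^{\epsilon,\delta}_{\cP,D,\gamma}\rho]=\Pr[A(\rho)\geq \gamma]$. Each claim then reduces to a CDF comparison between $A$ and $P$ on $\rho$ (or on a post-measurement state), which is precisely what Theorem~\ref{thm:api_projimp} controls.

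For the first bullet, I would take the shift-distance bound $\Pr[A(\rho)\leq x]\leq \Pr[P(\rho)\leq x+\epsilon]+\delta$ and let $x\nearrow \gamma-\epsilon$ to get $\Pr[A(\rho)<\gamma-\epsilon]\leq \Pr[P(\rho)<\gamma]+\delta$; taking complements yields $\Pr[A(\rho)\geq \gamma-\epsilon]\geq \Pr[P(\rho)\geq \gamma]-\delta$, which is the stated inequality. The second bullet is the mirror image, applying the other direction of the $\Delta^{\epsilon}_{\shift}(A,P)\leq \delta$ bound at the same threshold. These are pure CDF manipulations and require only care with strict vs.\ non-strict inequalities.

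For the third bullet, I would combine almost-projectivity with shift distance. Let $(X_1,X_2)$ be the joint outcomes of applying $\api$ twice in a row to $\rho$. The collapsed state $\rho'$ after $\ati$ outputs $0$ is (up to normalization) the branch of the first $\api$ measurement with $X_1\geq \gamma$, so the outcome distribution of a fresh $\api$ on $\rho'$ matches that of $X_2$ conditioned on $X_1\geq \gamma$. The $(\epsilon,\delta)$-almost-projectivity of $\api$ bounds $\Pr[|X_1-X_2|>\epsilon]\leq \delta$, hence $\Pr[A(\rho')<\gamma-\epsilon]\leq \delta$ on the conditioned state. Applying the shift-distance inequality to $\rho'$ then gives $\Pr[P(\rho')<\gamma-2\epsilon]\leq \Pr[A(\rho')<\gamma-\epsilon]+\delta\leq 2\delta$, i.e.\ $\Tr[\ti_{\gamma-2\epsilon}(\cP_D)\rho']\geq 1-2\delta$. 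The fourth bullet is immediate, since $\ati$ consists of a single invocation of $\api$ plus an $O(1)$-time classical comparison with $\gamma$.

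The main obstacle I anticipate is the conditioning subtlety in the third bullet: almost-projectivity is a joint, unconditional statement about $(X_1,X_2)$, whereas the claim asks about probabilities on the normalized post-measurement state $\rho'$. Naively passing to the conditional distribution would introduce a $1/\Pr[X_1\geq \gamma]$ factor that spoils the $2\delta$ bound. I would handle this by using the structure of the $\api$ construction in Theorem~\ref{thm:api_projimp}, exploiting that the thresholding step commutes with the $\api$ measurement in the sense that running $\ati$ and then $\api$ again on the collapsed branch is equivalent to running $\api$ twice and then post-selecting on $X_1\geq\gamma$; this lets the two $\delta$'s accumulate cleanly. The remainder is routine bookkeeping.
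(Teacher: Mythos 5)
Your proposal follows essentially the same route as the paper's own proof. For the first two bullets you use the shift-distance bound (the paper packages the same manipulation as Fact D.2 and then invokes it), extending to mixed states by linearity; for the fourth bullet the observation that $\ati$ is just $\api$ plus a comparison is identical. For the third bullet you also match the paper's plan: invoke $(\epsilon,\delta)$-almost-projectivity to show the second $\api$ outcome on $\rho'$ stays above $\gamma-\epsilon$ with probability $\geq 1-\delta$, then apply the shift-distance bound on $\rho'$ to lose one more $\epsilon$ and one more $\delta$.

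One thing worth highlighting: you explicitly flag the conditioning issue — that $(\epsilon,\delta)$-almost-projectivity controls the \emph{joint} distribution $\Pr[|X_1-X_2|>\epsilon]\leq\delta$, while the third bullet asks for $\Pr[X_2<\gamma-\epsilon\mid X_1\geq\gamma]\leq\delta$, and naively dividing by $\Pr[X_1\geq\gamma]$ would weaken the bound. The paper's proof glosses over exactly this point: it asserts $\Pr[\api(\rho')<\gamma-\epsilon]<\delta$ directly from almost-projectivity without addressing the conditioning. Your instinct to appeal to structural properties of the $\api$ construction rather than the black-box two-shot guarantee is the right one; however, you do not execute that step, and the paper's argument has the same unfilled gap. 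In practice the gap is harmless under the parameter regime the paper uses ($\delta$ inverse-subexponential, conditioning probability at least inverse-polynomial), but strictly speaking neither proof, as written, justifies the clean $1-2\delta$ constant. If you wanted to tighten your writeup, you would either (i) cite the stronger per-branch guarantee from Zhandry's paper, or (ii) restate the bullet with the explicit $1/\Pr[X_1\geq\gamma]$ factor and note it is absorbed downstream.
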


Intuitively the corollary says that if a quantum state $\rho$ has weight $p$ on eigenvectors with eigenvalues at least $\gamma$, the measurement $\ati^{\epsilon, \delta}_{\cP, D, \gamma - \epsilon}$ with probability at least $p - \delta$ outputs outcome $0$ and the collapsed state has weight $1 - 2 \delta$ on eigenvectors with eigenvalues at least $\gamma - 2 \epsilon$.  Also note that the running time is proportional to $\poly(1/\epsilon, 1/(\log \delta))$, which is a polynomial in $\lambda$ as long as $\epsilon$ is any inverse polynomial and $\delta$ is any inverse sub-exponential function. The proof of the above Corollary is in Appendix~\ref{sec:proof_ati_th}.

We can also consider approximating the measurements on bipartite (possibly entangled) quantum state. We will prove a similar statement as \cref{cor:ati_thresimp}.
\begin{lemma} \label{lem:ati_2d}
Let $\cP_1$ and $\cP_2$ be two  collections of projective measurements and $D_1$ and $D_2$ be any probability distributions defined on the index set of $\cP_1$ and $\cP_2$ respectively. 
For any $0<\epsilon, \delta, \gamma < 1$, the algorithms $\ati^{\epsilon, \delta}_{\cP_1, D_1, \gamma}$ and $\ati^{\epsilon, \delta}_{\cP_2, D_2, \gamma}$ satisfy the followings:
\begin{itemize}
    \item For any bipartite (possibly entangled, mixed) quantum state $\rho\in \mathscr{H}_{\cal L}\otimes \mathscr{H}_{\cal R}$,
    \begin{align*}
        \Tr\big[\big(\ati_{\cP_1, D_1, \gamma-\epsilon}^{\epsilon, \delta}\otimes \ati_{\cP_2, D_2, \gamma-\epsilon}^{\epsilon, \delta}\big)\rho\big] \geq \Tr\big[ \big(\ti_{\gamma}(\cP_{D_1})\otimes \ti_\gamma(\cP_{D_2})\big)\rho \big] - 2\delta.
    \end{align*}
    \item For any (possibly entangled, mixed) quantum state $\rho$, 
        let $\rho'$ be the collapsed state after applying  $\ati^{\epsilon, \delta}_{\cP_1, D_1, \gamma}\otimes \ati^{\epsilon, \delta}_{\cP_2, D_2, \gamma}$ on $\rho$  (and normalized). Then, 
        \begin{align*}
            \Tr\big[ \big(\ti_{\gamma-2\epsilon}(\cP_{D_1})\otimes \ti_{\gamma-2\epsilon}(\cP_{D_2})\big) \rho'\big]\geq  1- 4 \delta.
        \end{align*}
\end{itemize}
\end{lemma}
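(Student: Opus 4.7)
The plan is to reduce the bipartite statement to the single-system Corollary~\ref{cor:ati_thresimp}, exploiting that $\ati^{\epsilon,\delta}_{\cP_1,D_1,\bullet}$ and $\ati^{\epsilon,\delta}_{\cP_2,D_2,\bullet}$ act on disjoint Hilbert spaces $\mathscr{H}_{\cal L}$ and $\mathscr{H}_{\cal R}$, so their measurement operators commute with each other and with $\ti_\gamma(\cP_{D_2})$, respectively $\ti_\gamma(\cP_{D_1})$, on the opposite subsystem.

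For the first bullet, I will perform a two-step hybrid. Using the identity $\Tr[(A\otimes B)\rho] = \Tr[A\cdot\Tr_{\cal R}[(I\otimes B)\rho]]$, the operator $\tau_L := \Tr_{\cal R}[(I\otimes \ti_\gamma(\cP_{D_2}))\rho]$ on $\mathscr{H}_{\cal L}$ is positive with trace at most $1$. Since the first bullet of Corollary~\ref{cor:ati_thresimp} is linear in the state and its error $\delta$ is additive, it still applies to the subnormalized $\tau_L$ with loss at most $\delta$, giving $\Tr[(\ati^{\epsilon,\delta}_{\cP_1,D_1,\gamma-\epsilon}\otimes \ti_\gamma(\cP_{D_2}))\rho] \geq \Tr[(\ti_\gamma(\cP_{D_1})\otimes \ti_\gamma(\cP_{D_2}))\rho] - \delta$. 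Repeating the same argument symmetrically on $\mathscr{H}_{\cal R}$, with $\tau_R := \Tr_{\cal L}[(\ati^{\epsilon,\delta}_{\cP_1,D_1,\gamma-\epsilon}\otimes I)\rho]$, replaces the remaining TI by its approximation at a further cost of $\delta$, yielding the claimed $2\delta$ bound.

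For the second bullet, I will exploit the fact that the measurement operators of $\ati^{\epsilon,\delta}_{\cP_1,D_1,\gamma}\otimes I$ and $I\otimes \ati^{\epsilon,\delta}_{\cP_2,D_2,\gamma}$ commute (their Kraus operators act on disjoint subsystems), so that the normalized state $\rho'$ coincides with the state $\rho''$ produced by first measuring $\ati^{\epsilon,\delta}_{\cP_2,D_2,\gamma}$ on $\mathscr{H}_{\cal R}$ (conditioned on outcome $0$) to obtain an intermediate state $\sigma$, then measuring $\ati^{\epsilon,\delta}_{\cP_1,D_1,\gamma}$ on $\mathscr{H}_{\cal L}$ (again conditioned on $0$). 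A short Kraus-operator computation then verifies that the left marginal $\rho'_L := \Tr_{\cal R}[\rho'] = \rho''_L$ is exactly the state obtained by applying $\ati^{\epsilon,\delta}_{\cP_1,D_1,\gamma}$ to $\sigma_L := \Tr_{\cal R}[\sigma]$ and conditioning on outcome $0$. The third bullet of Corollary~\ref{cor:ati_thresimp} then gives $\Tr[\ti_{\gamma-2\epsilon}(\cP_{D_1})\,\rho'_L] \geq 1-2\delta$, i.e. $\Tr[(\ti_{\gamma-2\epsilon}(\cP_{D_1})\otimes I)\,\rho'] \geq 1-2\delta$; by symmetry, $\Tr[(I\otimes \ti_{\gamma-2\epsilon}(\cP_{D_2}))\,\rho'] \geq 1-2\delta$. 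Combining these through the identity $\ti^1\otimes \ti^2 = I - ((I-\ti^1)\otimes I) - (I\otimes(I-\ti^2)) + ((I-\ti^1)\otimes(I-\ti^2))$ and using positivity of the last summand yields the desired $\Tr[(\ti_{\gamma-2\epsilon}(\cP_{D_1})\otimes \ti_{\gamma-2\epsilon}(\cP_{D_2}))\,\rho'] \geq 1-4\delta$.

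The main obstacle is verifying, in the second bullet, that the marginal of the jointly conditioned $\rho'$ genuinely equals the output of the sequential conditioning procedure, and that the two conditional probabilities multiply correctly so no spurious $1/p$ factor appears when the joint acceptance probability $p=\Tr[(M^1\otimes M^2)\rho]$ is small. This becomes routine once one expands $\ati^{\epsilon,\delta}_{\cP_i,D_i,\gamma}$ via Kraus operators $\{K^i_k\}_k$ and checks that $\Tr_{\cal R}$ factors cleanly through the $(I\otimes K^2_{k_2})$ factors, so that Corollary~\ref{cor:ati_thresimp}'s third bullet applies directly on the left marginal with no extra loss.
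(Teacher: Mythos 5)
Your proof is correct and follows essentially the same route as the paper's. For the first bullet, your two-step hybrid from $\ti\otimes\ti$ to $\ati\otimes\ti$ to $\ati\otimes\ati$, with a cost of $\delta$ per step, is exactly the paper's argument; the only difference is presentational. The paper conditions on the left party's outcome and applies \cref{cor:ati_thresimp} to the normalized post-measurement state, then discards the factor $\Pr[{\cal L}\gets 0]\leq 1$, whereas you absorb that factor into a subnormalized positive operator $\tau_L$ (or $\tau_R$) and invoke positive-homogeneity of the inequality. These are algebraically identical; one small thing worth making explicit in your write-up is that $\Tr_{\cal R}[(I\otimes \ti_\gamma(\cP_{D_2}))\rho]$ and $\Tr_{\cal L}[(\ati^{\epsilon,\delta}_{\cP_1,D_1,\gamma-\epsilon}\otimes I)\rho]$ really are positive with trace at most $1$ (e.g.\ by the cyclic identity $\Tr_{\cal R}[(I\otimes M)\rho]=\Tr_{\cal R}[(I\otimes\sqrt M)\rho(I\otimes\sqrt M)]$ for $0\preceq M\preceq I$), so the homogeneity argument applies.

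For the second bullet, the key idea is identical in both proofs: since $\ati^{\epsilon,\delta}_{\cP_1,D_1,\gamma}$ and $\ati^{\epsilon,\delta}_{\cP_2,D_2,\gamma}$ act on disjoint registers, the joint measurement can be reordered sequentially, and the marginal of the conditioned state is exactly a state to which the third bullet of \cref{cor:ati_thresimp} applies, giving $\Tr[(\ti_{\gamma-2\epsilon}(\cP_{D_1})\otimes I)\rho']\geq 1-2\delta$ and symmetrically for ${\cal R}$. The only genuine variation is in the final combination: the paper writes the success probability as a product of the marginal and a conditional probability, each $\geq 1-2\delta$, and uses $(1-2\delta)^2\geq 1-4\delta$; you instead expand $\ti^1\otimes\ti^2 = I - (I-\ti^1)\otimes I - I\otimes(I-\ti^2) + (I-\ti^1)\otimes(I-\ti^2)$ and drop the nonnegative cross term, obtaining $1 - 2\delta - 2\delta$. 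Your combination is a touch more modular (it never speaks of conditional probabilities after $\rho'$ has been formed, only of the two marginals of $\rho'$), but the bound and the underlying reordering lemma are the same, so this is a stylistic difference rather than a different method.
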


We defer the proof of the above Lemma to Appendix~\ref{sec:proof_ati_2d}.

\section{Quantum Copy-Protection Scheme}



Let $\lambda$ be the security parameter. Let $\cF = \{\cF_\lambda\}_{\lambda \in \N}$ be a class of circuits. We assume $\cF$ is quantum unlearnable with respect to $\mathcal{D}$ and can be computed by polynomial-sized classical circuits. 
The construction for quantum copy-protection of function class $\cF_\lambda$
is defined in \cref{fig:cp_mini_scheme}.
\begin{figure}
    \centering
    \begin{gamespec}
    \begin{description}
    \item[]\textsf{Setup($1^\lambda ) \to \sk$:} The setup algorithm takes in security parameter $1^\lambda$.
    \begin{itemize}
        \item  Pick a uniformly random subspace $A\subseteq \F^\lambda$ of dimension $\lambda/2$.
        
        \item Output $\sk = A$, where $A$ is described by a set of orthogonal basis vectors.
    \end{itemize}
    
    \item[]\generate($\sk, f \in \cF_\lambda)$: The $\generate$ algorithm receives $\sk = A$ and a function $f$ from $\cF_\lambda$. 
    \begin{itemize}
        \item  Prepare a subspace state on $n$ qubits corresponding to $A$, $\ket A = \frac{1}{\sqrt{|A|}} \sum_{v \in A} \ket v$.  
    
    \item  Generate oracles $U_A, U_{A^\perp}$ which compute subspace membership functions for subspace $A$ and its dual subspace $A^\perp$ respectively.
    
    \item Generate oracles $\cO_1, \cO_2$ such that
    \begin{align*}
    \cO_1(x,v)&=\begin{cases}
    f(x)\oplus g(x)&\text{if }v\in A\text{ and }v\ne 0,\\
    \bot &\text{otherwise}.
    \end{cases}\\
    \cO_2(x,v)&=\begin{cases}
    g(x)&\text{if }v\in A^\bot\text{ and }v\ne 0,\\
    \bot &\text{otherwise}.
     \end{cases}
    \end{align*}
    where $g$ is a uniformly random function, with the same input and output length as $f$.  
    
    
    \item Finally, the $\generate$ algorithm outputs $\rho = \ket A \bra A$ and $\{U_x\}_{x \in [N]}$ describes the following procedure: 
        \begin{itemize}
            \item On input $x$, prepare the state $\ket 0 \bra 0 \otimes \ket x \bra x \otimes \rho$ and make an oracle query $U_A$ and measure the first register (output register) to get $y_1$; the remaining state is $\ket x \bra x \otimes \rho'$. 
            \item Apply ${\sf QFT}$ on the third register $\rho'$ to get $\rho''$.
            \item Prepare the state $\ket 0 \bra 0 \otimes \ket x \bra x \otimes \rho''$ and make an oracle query $U_{A^\perp}$ and measure the first register to get $y_2$. 
            \item Output $y_1 \oplus y_2$. 
        \end{itemize}
    \end{itemize}
    The description of $\{U_x\}_{x \in [N]}$ requires the oracle of $U_A, U_{A^\perp}$ (or the VBB obfuscations). 
    \end{description}
    \end{gamespec}
    \caption{Quantum copy-protection scheme.}
    \label{fig:cp_mini_scheme}
\end{figure}

Note that this construction works for general quantum unlearnable function families as well. By simply changing the notation in the proof to that in the general quantum unlearnability case, we prove it for general quantum unlearnable function families. More discussion will be given at the end of this section.

\paragraph{Oracle Heuristics} In practice we use a quantum-secure PRF \cite{zhandry2012construct} to implement function $g$; and we use quantum-secure (classical) VBB obfuscation to implement each of $(\cO_1,\cO_2,U_A, U_{A^\perp})$. We can replace VBB obfuscation programs with oracles that only allow black-box access by the security of VBB obfuscation; afterwards, we can also replace PRF $g$ with a real random function by the property of PRF. The heuristic analysis is straightforward and we omit them here.  





\subsection{Correctness and Efficiency}
\paragraph{Correctness} 
Given $\rho = \ket A \bra A$ and $\{U_x\}_{x \in [N]}$, it performs the following computation:
\begin{enumerate}
    \item Make an oracle query $\mathcal{O}_1$ on the state $\ket 0  \ket x \ket A$, the resulting state is statistically close to $\ket {y_1} \ket x \ket A$. Note that $\ket A$ with overwhelming probability $1 - 1/|A|$ contains a non-zero vector in $A$. It measures $y_1$, which is $y_1 = f(x) \oplus g(x)$. 
    \item It then prepares a state by applying QFT on the third register and the resulting state is is statistically close to $\ket 0 \ket x \ket {A^\perp}$. It makes an oracle query $\mathcal{O}_2$ on the state $\ket 0  \ket x \ket {A^\perp}$, the resulting state is statistically close to $\ket {y_2} \ket x \ket {A^\perp}$ where $y_2 = g(x)$. 
\end{enumerate}
Therefore, with overwhelming probability, the output is $y_1 \oplus y_2 = f(x)$.


\paragraph{Efficiency}
In $\generate$ algorithm, as shown in \cite{aaronson2012quantum}, given the basis of $A$, the subspace state $\ket A$ can be prepared in polynomial time using QFT. For the oracles $\cO_1, \cO_2$, it only needs to check the  membership of $A$ and $A^\bot$ and compute functions $f$ and $g$. $f$ can be prepared in polynomial time by definition. As we discussed above, we can prepare function $g$ as a PRF. 
Therefore, the oracles $\cO_1, \cO_2$ can be generated in polynomial time. 
The \textsf{Compute} algorithm is clearly efficient.

\subsection{Anti-Piracy Security}
\label{sec:security_proof}

We show that  for a \textit{quantum unlearnable} families of functions $\cF$ with respect to $\cD$ defined in \cref{def:unlearnable_func}, the quantum copy-protection scheme has anti-piracy security against any quantum polynomial-time adversaries. More formally: 

\begin{theorem}[Main Theorem]
    Let $\cF$ be a function families that is $\gamma$-quantum-unlearnable respect to distribution $\cD$ ($\gamma$ is a non-negligible function of $\lambda$). The above copy protection scheme for $\cF, \cD$ has $(\gamma(\lambda)-1/\poly(\lambda))$-anti-piracy security, for all polynomial $\poly$. 
\end{theorem}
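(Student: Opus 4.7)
Suppose for contradiction that a QPT adversary $\cA$ wins the $\gamma$-anti-piracy game with non-negligible probability. My plan is to reduce either to the quantum unlearnability of $\cF$ w.r.t.\ $\cD$, or to the hardness of the direct-product problem (\cref{thm:restate_bound_twopts}). As a preprocessing step I replace the inefficient $\ti_\gamma$ measurement used to define ``$\gamma$-good'' by the efficient approximation $\ati^{\epsilon,\delta}_{\gamma-\epsilon}$ from \cref{sec:ati}: applying this on each of the two output registers and invoking \cref{lem:ati_2d}, I obtain, with non-negligible probability, a post-measurement bipartite (possibly entangled) state $(\sigma_1,\sigma_2)$ whose two halves are \emph{simultaneously} $(\gamma-3\epsilon)$-good except with error $4\delta$. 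Taking $\epsilon = 1/\poly(\lambda)$ arbitrarily small and $\delta$ negligible, it suffices to derive a contradiction from such a jointly $\gamma'$-good pair, where $\gamma' = \gamma - 1/\poly(\lambda)$; this also accounts for the ``$1/\poly(\lambda)$'' slack in the statement.

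Next, I classify each $\sigma_i$ by whether its queries to the ``opposite'' oracle on valid signatures carry non-negligible amplitude. It helps to pass to the equivalent description in which $h := f \oplus g$ is the uniformly random function, so that $\cO_1$ returns $h(x)$ and $\cO_2$ returns $h(x)\oplus f(x)$ on valid inputs --- crucially, $\cO_1$ now needs no $f$-access to simulate. If, during the evaluation of $\sigma_1$, the total query amplitude on valid $A^\perp$-signatures in $\cO_2$ is below some small $\epsilon'$, then by \cref{thm:bbbv97_oraclechange} replacing $\cO_2$ by $\cO_\bot$ on those inputs perturbs the output distribution by only $O(\epsilon')$ in trace distance. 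I then construct a learner $\cB^f$ for $\cF,\cD$ that samples $A$ and a fresh random $h$, simulates $\cO_1$ from $h$ alone, uses its $f$-oracle only to answer $\cO_2$-queries during the simulation, runs $\cA$ to obtain $\sigma_1$, and finally outputs the program consisting of $\sigma_1$ together with unitaries that internally invoke $h$ and $\cO_\bot$ in place of $\cO_1$ and $\cO_2$. This program uses no $f$-access, yet remains $(\gamma' - O(\epsilon'))$-good for $f$ by the BBBV argument, contradicting quantum unlearnability. A fully symmetric argument handles the case where $\sigma_2$ rarely queries $\cO_1$ on valid $A$-signatures.

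The remaining case is when $\sigma_1$ queries $\cO_2$ on valid $A^\perp$-signatures with non-negligible weight \emph{and} $\sigma_2$ queries $\cO_1$ on valid $A$-signatures with non-negligible weight. I reduce this to the direct-product problem: the reduction $\cB$ receives $\ket A$ and oracles $U_A, U_{A^\perp}$ from its challenger, samples $f\gets\cF_\lambda$ and a random $g$ itself, uses $U_A, U_{A^\perp}$ to check validity while answering $\cO_1, \cO_2$-queries, runs $\cA$, applies the bipartite ATI projection from the first paragraph, and then on each side independently picks a uniformly random query index $t$ and input $x$ and measures the query register just before the $t$-th query to extract vectors $u$ (from $\sigma_2$'s $\cO_1$-query) and $v$ (from $\sigma_1$'s $\cO_2$-query). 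A single-sided BBBV-style analysis lower-bounds each marginal extraction probability by an inverse polynomial. The main obstacle, and the delicate point of the whole proof, is to argue that the two extractions succeed \emph{jointly} in spite of arbitrary entanglement between $\sigma_1$ and $\sigma_2$. This is resolved by observing that after the bipartite ATI the \emph{reduced} state on each side is $\gamma'$-good by \cref{lem:ati_2d}, and since each extraction acts only on that side's registers it leaves the other side's reduced density matrix untouched; hence the marginal extraction probabilities combine, producing non-zero $u\in A$ and $v\in A^\perp$ with non-negligible probability and contradicting \cref{thm:restate_bound_twopts}.
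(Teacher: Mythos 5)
Your overall structure matches the paper's: the same two-sided case analysis (reduce to unlearnability when one oracle can be dropped, reduce to the direct-product problem otherwise), the same use of $\ati$ and Lemma~\ref{lem:ati_2d} as a preprocessing step, and even the same ``swap'' trick for simulating the oracles during the unlearnability reduction so that the learner's output program needs no $f$-access. Those parts are fine.

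The gap is in the direct-product reduction, and it is exactly the ``delicate point'' you flag and then dismiss. You argue that because each extraction acts locally, it ``leaves the other side's reduced density matrix untouched; hence the marginal extraction probabilities combine.'' This inference is false for entangled states. No-signaling says the \emph{average} over your measurement outcomes leaves the other side's marginal unchanged, but the joint success probability is the probability of the \emph{good} outcome on one side \emph{times} the conditional success probability on the other, and the conditional reduced state given the good outcome can differ drastically from the unconditional marginal. Concretely, for $\frac{1}{\sqrt2}(\ket{00}+\ket{11})$ and $M_1=\ket 0\bra 0$, $M_2=\ket 1\bra 1$, both marginals are $1/2$ yet the joint is $0$ --- and this is precisely the style of adversary (``$\frac{1}{\sqrt 2}(\ket P\ket D\ket 0+\ket D\ket P\ket 1)$'') the paper warns about in its definitional discussion. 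So two single-sided BBBV bounds on marginals do not compose.

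What the paper actually does to bridge this gap is a sequential (``ping-pong'') comparison that never needs joint probabilities to factor. Under $\bareventone\wedge\bareventtwo$ one has that $\Tr[\widetilde\ati_{R_1}\,\sigma[R_1]]$ and $\Tr[\widetilde\ati_{R_2}\,\sigma[R_2]]$ are negligible. Starting from $\Tr[(\ati_{R_1}\otimes\ati_{R_2})\sigma]$ non-negligible, they compare it to $\Tr[(\ati_{R_1}\otimes\widetilde\ati_{R_2})\sigma]\le\Tr[(I\otimes\widetilde\ati_{R_2})\sigma]$, which is negligible; since the $R_1$-factor is \emph{identical} in both expressions, BBBV can be applied to the $R_2$-algorithm acting on the state conditioned on $\ati_{R_1}$ succeeding, yielding $\Tr[(\ati_{R_1}\otimes M_{R_2})\sigma]$ non-negligible. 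They then repeat: $\Tr[(\widetilde\ati_{R_1}\otimes M_{R_2})\sigma]\le\Tr[(\widetilde\ati_{R_1}\otimes I)\sigma]$ is negligible, and since $M_{R_2}$ is now the common factor, BBBV applies on the $R_1$ side conditioned on $M_{R_2}$ succeeding, giving $\Tr[(M_{R_1}\otimes M_{R_2})\sigma]$ non-negligible. Each step keeps one tensor factor frozen, so the factorization $\Pr[\text{joint}]=\Pr[\text{first}]\cdot\Pr[\text{second}\mid\text{first}]$ is used correctly, and no product-of-marginals claim is ever needed. Your writeup should be reorganized around this sequential argument; as stated, the last paragraph does not close the proof.
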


In order to describe the quantum query behavior of quantum programs made to oracles, we give the following definitions and notations.

We recall that in \cref{def:gamma_good_test_functionality_equivalence}, a QPT adversary $\cA$ in the anti-piracy security game $\AG_{\cF,\cD,\gamma}^\cA(1^\lambda)$, will produce a state $\sigma$ over registers $R_1, R_2$ and unitaries $\{U_{R_1,x}\}_{x \in [N]}, \{U_{R_2,x}\}_{x \in [N]}$, the challenger  will then perform $\gamma$-goodness test on $\sigma$ using threshold implementations $\ti_{\gamma}(P_{R_1,f})$ and $\ti_{\gamma}(P_{R_2,f})$. 
For simplicity we will describe the unitary ensembles $\{U_{R_1,x}\}_{x \in [N]}$, $\{U_{R_2,x}\}_{x \in [N]}$  as
$U_{R_1}$, $ U_{R_2}$ and describe threshold implementations $\ti_{\gamma}(P_{R_1,f})$, $\ti_{\gamma}(P_{R_2,f})$  as $\ti_{R_1, \gamma}, \ti_{R_2, \gamma}$. Similarly, let $\ati_{R_1, \gamma-\epsilon}$ and $\ati_{R_2, \gamma-\epsilon}$  denote the approximation threshold implementation $\ati^{\epsilon, \delta}_{R_1, \gamma-\epsilon}$ and $\ati^{\epsilon, \delta}_{R_2, \gamma-\epsilon}$ respectively, for some inverse polynomial  $\epsilon$ and inverse subexponential function $\delta$ (in other words, $\log (1/\delta)$ is polynomial in $\lambda$). 

In this particular construction, $\cA$'s behavior can be described as follows: $\cA$ ``splits" the copy-protection state $\rho$ into two potentially entangled states $\sigma[R_1], \sigma[R_2]$. $\cA$ prepares $(\sigma[R_1], U_{R_1})$ with oracle access to $(\cO_1, \cO_2)$ as pirate program $\P_1$; and prepares $(\sigma[R_2], U_{R_2})$ with oracle access
$(\cO_1, \cO_2)$ as pirate program $\P_2$. Therefore, $\ti_{R_b, \gamma}$ and $\ati_{R_b, \gamma-\epsilon}$ both make oracle queries to $\cO_1, \cO_2$.

We can assume the joint state of $R_1, R_2$ has been purified and the overall state is a pure state over register $R_1, R_2, R_3$ where $P_1$ has only access to $R_1$ and $P_2$ has only access to $R_2$. 


\paragraph{Quantum Query Weight}
Let $\sigma$ be any quantum state of $R_1, R_2, R_3$. 
We consider the program $\P_1$. $\P_1$ has access to register $R_1$ and oracle access to $\cO = (\cO_1, \cO_2)$. We denote $\ket{\phi_i}$ to be the overall state of registers $R_1, R_2, R_3$ before $P_1$ makes $i$-th query to $\cO_1$, \emph{when it applies $\ati_{R_1,\gamma-\epsilon}$ on $\sigma[R_1]$}.
$$
\ket{\phi_i} = \sum_{x, v, z} \alpha_{x, v, z} \ket{x, v, z}.
$$
where $(x,v)$ is the query to oracle $\cO_1$ and $z$ is working space of $P_1$, the registers of $R_2, R_3$. 
Note that when $\ati_{R_1,\gamma-\epsilon}$ is applied on $\sigma[R_1]$, it in fact applies some unitary and eventually makes a measurement, during which the unitary makes queries to oracles $\cO_1, \cO_2$. Therefore such a query weight is well-defined.


We denote by $W_{1,A,i}$ to be the sum of squared amplitudes in $\ket{\phi_i}$, 
which are querying $\cO_1$ on input $(x,v)$ such that $v \in A \setminus \{0\}$:
$$
W_{1,A,i} = \sum_{x, v, z: v \in A\setminus \{0\}} \left|\alpha_{x, v, z}\right|^2
$$

Then we sum up all the squared amplitudes $W_{1,A,i}$ in all the queries made by $\P_1$ to $\cO_1$, where $v \in A \setminus \{0\}$. We denote this sum as $W_{1,A} = \sum_{i \in [\ell_1]} W_{1,A,i}$, 
where $\ell_1 = \ell_1(\lambda)$ is the number of queries made by $\P_1$ to $\cO_1$. 

Similarly, we write 
$W_{1,A^\perp} = \sum_{i \in [\ell_2]} W_{1,A^\perp,i} =  \sum_{i \in [\ell_2]} \sum_{x, v, z: v \in A^\perp \setminus\{0\}} \left|\alpha_{x, v, z}\right|^2$
to be the sum of squared amplitudes in $\ket{\phi_i}$ where $v \in A^\perp\setminus\{0\}$, in the $\ell_2$ queries made by $\P_1$ to $\cO_2$. 

Accordingly for the other program $\P_2$ and threshold implementation $\ati_{R_2,\gamma-\epsilon}$, we denote these sums of squared amplitudes as $W_{2,A} = \sum_{i \in [m_1]} W_{2,A,i}$ and $W_{2,A^\perp} = \sum_{i \in [m_2]} W_{2,A^\perp,i}$, where $m_1, m_2$ are the number of queries made by $\P_2$ to oracles $\cO_1, \cO_2$ respectively.

\vspace{1em}


\paragraph{Case One.} Fixing a function $f$, let $(\sigma, U_{R_1}, U_{R_2})$ be the two programs output by the adversary which are both tested $\gamma$-good respect to $f, D_f$ with some non-negligible probability.

Let $\cO_{\bot}$ be an oracle that always outputs $\bot$.  We hope one of the following will happen: 
\begin{enumerate}
    \item The program $(\sigma[R_1], U_{R_1})$ with oracle access to $\cO_1, \cO_{\bot}$ is tested $(\gamma - 2\epsilon)$-good respect to $f, D_f$, with non-negligible probability. 
    \item The program $(\sigma[R_2], U_{R_2})$ with oracle access to $\cO_{\bot}, \cO_2$ is tested $(\gamma - 2 \epsilon)$-good respect to $f, D_f$, with non-negligible probability. 
\end{enumerate}

Let $\widetilde{\ati}_{R_1, \gamma-\epsilon}$ be the same as $\ati_{R_1, \gamma-\epsilon}$ except with oracle access to $\cO_1, \cO_{\bot}$ and  $\widetilde{\ati}_{R_2, \gamma-\epsilon}$ be the same as $\ati_{R_2, \gamma-\epsilon}$ except with oracle access to $\cO_{\bot}, \cO_2$.  Similarly, let $\widetilde{\ti}_{R_b, \gamma - 2 \epsilon}$ be the same threshold implementation as ${\ti}_{R_b, \gamma - 2 \epsilon}$ except with oracle access to $\cO_1, \cO_{\bot}$ and $\cO_{\bot}, \cO_2$ respectively. 

Since  $(\sigma, U_{R_1}, U_{R_2})$ are both $\gamma$-good respect to $f, \cD_f$ with non-negligible probability, for some non-negligible function $\beta(\cdot)$, 
\begin{align*}
    \Tr[(\ti_{R_1, \gamma} \otimes \ti_{R_2, \gamma}) \cdot \sigma] \geq \beta(\lambda)
\end{align*}
From the property of the approximated threshold implementation (\cref{lem:ati_2d}), 
\begin{align*}
    \Tr[(\ati_{R_1, \gamma - \epsilon} \otimes \ati_{R_2, \gamma - \epsilon}) \cdot \sigma] \geq \beta(\lambda) - 2 \delta
\end{align*}
Thus, for any $b \in \{1, 2\}$, we have $\Tr[\ati_{R_b, \gamma - \epsilon} \cdot \sigma[R_b]] \geq \beta(\lambda) - 2 \delta$. Since $\delta$ is negligible, both probabilities are still non-negligible. 

Let $\eventone$ be the event denotes $\Tr[\widetilde{\ati}_{R_1, \gamma - \epsilon} \cdot \sigma[R_1]]$ is  non-negligible. If $\eventone$ happens, by \cref{cor:ati_thresimp}, 
\begin{align*}
    \Tr[\widetilde{\ti}_{R_1, \gamma - 2 \epsilon} \cdot \sigma[R_1]] \geq \Tr[\widetilde{\ati}_{R_1, \gamma - \epsilon} \cdot \sigma[R_1]] - \delta
\end{align*}
which is still non-negligible. In other words, $(\sigma[R_1], U_{R_1})$ with oracle access to $\cO_1, \cO_{\bot}$ is tested $(\gamma - 2\epsilon)$-good respect to $f, D_f$ with non-negligible probability.  
Similarly, define $\eventtwo$ as the program $(\sigma[R_2], U_{R_2})$ with oracle access to $\cO_{\bot}, \cO_2$ is $(\gamma - 2 \epsilon)$-good respect to $f, D_f$ with non-negligible probability.

\paragraph{Case Two.} Fixing a function $f$, let $(\sigma, U_{R_1}, U_{R_2})$ be the two programs output by the adversary which are both $\gamma$-good respect to $f, D_f$, with non-negligible probability.

If $\eventone \vee \eventtwo$ does not happen, we are in the case $\bareventone \wedge \bareventtwo$. By definition, there exist negligible functions $\negl_1, \negl_2$ such that 
\begin{align*}
    \Tr[\widetilde{\ati}_{R_1, \gamma - \epsilon} \cdot \sigma[R_1]]  \leq \negl_1(\lambda)   \quad\quad\quad
    \Tr[\widetilde{\ati}_{R_2, \gamma - \epsilon} \cdot \sigma[R_2]] &\leq \negl_2(\lambda) 
\end{align*}
We look at the following thought experiments: 
\begin{enumerate}
    \item We apply $\ati_{R_1, \gamma - \epsilon} \otimes \ati_{R_2, \gamma - \epsilon}$ on $\sigma$, by \cref{lem:ati_2d}, there exists a non-negligible function $\beta(\cdot)$ such that 
    \begin{align*}
        \Tr\left[(\ati_{R_1, \gamma -\epsilon} \otimes \ati_{R_2, \gamma - \epsilon}) \cdot \sigma\right] \geq \beta(\lambda) - 2 \delta.
    \end{align*}
    
    \item We apply $\ati_{R_1, \gamma - \epsilon} \otimes \widetilde{\ati}_{R_2, \gamma - \epsilon}$ on $\sigma$. We have, 
    \begin{align*}
        \Tr\left[(\ati_{R_1, \gamma -\epsilon} \otimes \widetilde{\ati}_{R_2, \gamma - \epsilon}) \cdot \sigma\right] \leq \Tr\left[(I \otimes \widetilde{\ati}_{R_2, \gamma - \epsilon}) \cdot \sigma\right] \leq \negl_2(\lambda). 
    \end{align*}
    \item Note that in 1 and 2, the only difference is the oracle access: in 1, it has oracle access to $\cO_1, \cO_2$; in 2, it has oracle access to $\cO_\bot, \cO_2$.
    Let $\sigma'$ be the state which we apply $(\ati_{R_1, \gamma -\epsilon} \otimes I)$ on $\sigma$ and obtain a outcome $0$, which happens with non-negligible probability.
    Let $W_{2, A}$ be the query weight defined on the state $\sigma'$.  We know that $W_{2, A}$ can not be negligible otherwise by \cref{thm:bbbv97_oraclechange} (BBBV), the probability difference in 1 and 2 can not be non-neglibile. 
    
    Define $M_{R_2}$ be the operator that measures a random query of ${\ati}_{R_2, \gamma-\epsilon}$ to $\cO_1$ and the query $(x, v)$ satisfies $v \in A \setminus \{0\}$. By the above discussion, there exists a non-negligible function $\beta_1(\cdot)$, 
    \begin{align*}
        \Tr\left[(\ati_{R_1, \gamma -\epsilon} \otimes M_{R_2}) \cdot \sigma\right] \geq \beta_1(\lambda).
    \end{align*}
    
    \item  We apply $\widetilde{\ati}_{R_1, \gamma - \epsilon} \otimes M_{R_2}$ on $\sigma$. We have, \begin{align*}
        \Tr\left[(\widetilde{\ati}_{R_1, \gamma - \epsilon} \otimes M_{R_2}) \cdot \sigma\right] \leq \Tr\left[(\widetilde{\ati}_{R_1, \gamma - \epsilon} \otimes I) \cdot \sigma\right] \leq \negl_1(\lambda). 
    \end{align*}
    
    \item By a similar argument of $3$, let $M_{R_1}$ be the operator that measures a random query of ${\ati}_{R_1, \gamma-\epsilon}$ to $\cO_2$ and the query $(x, v)$ satisfies $v \in A^\perp \setminus \{0\}$. There exists a non-negligible function $\beta_2(\cdot)$, 
    \begin{align*}
        \Tr\left[(M_{R_1} \otimes M_{R_2}) \cdot \sigma\right] \geq \beta_2(\lambda).
    \end{align*}
\end{enumerate}
Thus, in the case, one can extract a pair of vectors $(u, v) \in (A \setminus \{0\}) \times (A^\perp \setminus \{0\})$ with non-negligible probability. To conclude it, we have the following lemma, 
\begin{lemma} \label{lem:extraction_vectors}
Fixing a function $f$, let $(\sigma, U_{R_1}, U_{R_2})$ be the two programs output by the adversary which are both $\gamma$-good respect to $f, D_f$, with non-negligible probability. If $\eventone \vee \eventtwo$ does not happen, by randomly picking and measuring a query of $\ati_{R_1, \gamma-\epsilon}$ to $\cO_2$ and  a query of $\ati_{R_2, \gamma-\epsilon}$ to $\cO_1$, one can obtain a pair of vectors $(u, v) \in (A \setminus \{0\}) \times (A^\perp \setminus \{0\})$ with non-negligible probability. 
\end{lemma}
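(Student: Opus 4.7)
My plan is to formalize the thought experiment already sketched in items (1)--(5) of the Case Two analysis, using BBBV (\cref{thm:bbbv97_oraclechange}) as the main tool to translate probability gaps into query-weight lower bounds, and then the bipartite ATI lemma (\cref{lem:ati_2d}) to couple the two sides so that the extractions succeed jointly.

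First I would fix the setup: the joint state $\sigma$ on $R_1, R_2, R_3$ produces, via \cref{lem:ati_2d}, a non-negligible probability $\beta(\lambda) - 2\delta$ that the bipartite measurement $\ati_{R_1,\gamma-\epsilon} \otimes \ati_{R_2,\gamma-\epsilon}$ outputs $(0,0)$ when both sides query the real oracles $(\cO_1, \cO_2)$. By assumption $\bareventone \wedge \bareventtwo$, replacing $\cO_1$ by $\cO_\bot$ in $R_1$'s side (respectively $\cO_2$ by $\cO_\bot$ in $R_2$'s side) drops the single-side success probability to negligible. Next I would invoke BBBV sidewise: because the only difference between running $\ati_{R_2,\gamma-\epsilon}$ with oracle pair $(\cO_1, \cO_2)$ versus with $(\cO_\bot, \cO_2)$ lies on queries $(x,v)$ with $v \in A\setminus\{0\}$, a non-negligible probability gap forces the total query weight $W_{2,A}$ of $\ati_{R_2,\gamma-\epsilon}$ on such queries to be at least inverse polynomial. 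Concretely, if $W_{2,A}$ were bounded by $\mathrm{negl}(\lambda)$, \cref{thm:bbbv97_oraclechange} would imply the two ATI's produce statistically indistinguishable outputs, contradicting the gap between step~(1) and step~(2).

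Next I would argue that this query weight can be converted to an extraction procedure. Define $M_{R_2}$ to be the measurement that picks a uniformly random query index $i \in [m_1]$ of $\ati_{R_2,\gamma-\epsilon}$ to $\cO_1$ and measures the query register, outputting~$0$ if the measured vector lies in $A\setminus\{0\}$. The lower bound on $W_{2,A}$, combined with the fact that $\ati_{R_1,\gamma-\epsilon}$ was already outputting~$0$ with non-negligible probability, yields
\[
\Tr\bigl[(\ati_{R_1,\gamma-\epsilon} \otimes M_{R_2}) \cdot \sigma\bigr] \geq \beta_1(\lambda)
\]
for some non-negligible $\beta_1$. The subtle point here is that $\ati_{R_1,\gamma-\epsilon}$ acts only on $R_1$ and commutes with $M_{R_2}$; so we can either apply $M_{R_2}$ first and then $\ati_{R_1,\gamma-\epsilon}$, or reason as in step~(3), conditioning on $\ati_{R_1,\gamma-\epsilon}$ yielding~$0$ and applying BBBV to the residual state on $R_2$.

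Finally, a symmetric argument on the $R_1$ side — now comparing $\ati_{R_1,\gamma-\epsilon}$ with oracles $(\cO_1,\cO_2)$ versus $(\cO_1,\cO_\bot)$, and using $\bareventone$ — shows that the query weight of $\ati_{R_1,\gamma-\epsilon}$ on inputs $(x,v)$ with $v \in A^\perp \setminus \{0\}$ is also non-negligible, even conditioned on $M_{R_2}$ having succeeded on the right. Defining $M_{R_1}$ analogously, this yields $\Tr[(M_{R_1} \otimes M_{R_2}) \cdot \sigma] \geq \beta_2(\lambda)$ for some non-negligible $\beta_2$, which is precisely the statement that one can extract $(u,v) \in (A\setminus\{0\}) \times (A^\perp \setminus \{0\})$ with non-negligible probability. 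The main obstacle I anticipate is the joint (rather than marginal) analysis in the last step: one must be careful that conditioning on the first measurement $M_{R_2}$ succeeding does not destroy the BBBV hypothesis on the $R_1$ side, which I would handle by working on the post-measurement state and re-invoking the hypothesis $\bareventone$ relative to that sub-normalized state (noting that $\widetilde{\ati}_{R_1,\gamma-\epsilon}$ acts independently on $R_1$ and so its trace against $I \otimes M_{R_2}$ is still upper bounded by its trace against $I \otimes I$, which is negligible).
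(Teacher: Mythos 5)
Your proposal is correct and follows essentially the same route as the paper's own argument (the five-step thought experiment preceding the lemma statement): applying the bipartite ATI via \cref{lem:ati_2d}, invoking BBBV on one side after conditioning to lower-bound $W_{2,A}$, defining $M_{R_2}$, bounding $\Tr[(\widetilde{\ati}_{R_1,\gamma-\epsilon}\otimes M_{R_2})\sigma]$ via $M_{R_2}\leq I$, and then symmetrically extracting on $R_1$. You correctly anticipate and resolve the key subtlety — that conditioning on $M_{R_2}$ preserves the BBBV hypothesis on $R_1$ because $\widetilde{\ati}_{R_1,\gamma-\epsilon}$ acts only on the $R_1$ register — which is precisely the content of the paper's step 4.
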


\vspace{1em}
By averaging over all randomness, we have the following lemma: 
\begin{lemma} \label{lem:eventone}
Let $\Pr[\eventone]$ be the probability of $\eventone$ taken over all randomness of $\AG^\As_{\cF, \cD, \gamma}(1^\lambda)$. If $\Pr[\eventone]$ is non-negligible, there exists an adversary $\As_1$ that wins $\LG^{\As_1}_{\cF, \cD, \gamma-2\epsilon}(1^\lambda)$ with non-negligible probability. 
\end{lemma}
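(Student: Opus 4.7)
The plan is to build a quantum learner $\As_1$ for $\LG^{\As_1}_{\cF, \cD, \gamma-2\epsilon}(1^\lambda)$ whose internal execution faithfully simulates the entire anti-piracy game, using its own external oracle access to $f$ to instantiate the part of the copy-protection setup that depends on $f$, and which then hands off the left half of the anti-piracy adversary's output as its candidate learned program. The averaging argument that turns the hypothesis ``$\Pr[\eventone]$ is non-negligible over all randomness of $\AG^\As_{\cF, \cD, \gamma}(1^\lambda)$'' into a success probability of $\As_1$ will follow because $\As_1$'s internal randomness can mirror every random choice of the anti-piracy game except the choice of $f$, which is supplied by the learning-game challenger.

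Concretely, on oracle access to $f$, $\As_1$ will sample a uniform $\lambda/2$-dimensional subspace $A \subseteq \F^\lambda$, prepare $\ket A$, instantiate the random function $g$ with a quantum-secure PRF key, and construct the oracles $\cO_1, \cO_2$ of \cref{fig:cp_mini_scheme}: queries to $\cO_2(x,v)=g(x)$ it can answer locally using the PRF, while queries to $\cO_1(x,v)=f(x)\oplus g(x)$ it answers by combining one PRF call with a single query to its external $f$-oracle. $\As_1$ then runs the anti-piracy adversary $\As$ with input $\ket A$ and oracle access to $(\cO_1,\cO_2)$, receives the output $(\sigma, U_{R_1}, U_{R_2})$, and outputs the program $(\sigma[R_1], U_{R_1})$, whose classical description references $A$, the PRF key, the same $f$-oracle for evaluating $\cO_1$, and the trivial $\cO_\bot$ oracle in place of $\cO_2$.

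I will then argue that the simulation is distributed identically to $\AG^\As_{\cF, \cD, \gamma}(1^\lambda)$ conditional on the choice of $f$, so the event $\eventone$ occurs in $\As_1$'s run with the same non-negligible probability. Whenever $\eventone$ holds, by definition $\Tr[\widetilde{\ati}_{R_1, \gamma-\epsilon} \cdot \sigma[R_1]]$ is non-negligible, and one application of \cref{cor:ati_thresimp} transfers this to the exact threshold implementation, namely $\Tr[\widetilde{\ti}_{R_1,\gamma-2\epsilon} \cdot \sigma[R_1]] \geq \Tr[\widetilde{\ati}_{R_1,\gamma-\epsilon} \cdot \sigma[R_1]] - \delta$, which remains non-negligible since $\delta$ is sub-exponentially small. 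By \cref{def:gamma_good_test_functionality_equivalence} this is precisely the statement that the program $(\sigma[R_1], U_{R_1})$ with oracles $(\cO_1, \cO_\bot)$ is tested $(\gamma-2\epsilon)$-good with respect to $f, D_f$, so averaging over the joint randomness yields the desired success probability for $\As_1$ in the learning game.

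The main obstacle I expect is the formal requirement that the output program admit a compact classical description independent of $f$, since $\cO_1$ itself depends on $f$. I plan to resolve this by adopting the same oracle-model convention used throughout \cref{sec:security_proof}: the $\gamma$-goodness test is performed by the challenger, who knows $f$ and can supply the $\cO_1$ oracle needed to execute $U_{R_1}$; alternatively, under the heuristic instantiation, $\cO_1$ is replaced by a post-quantum obfuscation of the underlying classical circuit, which is a polynomial-size string and needs no further oracle, so $(\sigma[R_1], U_{R_1})$ is a well-formed quantum program in the sense of \cref{def:learning_game_functional_equivalent}.
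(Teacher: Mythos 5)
Your high-level plan (simulate the anti-piracy game inside a learner, output the $R_1$ half) is the right one, but there is a genuine gap in the step you already flag as a concern, and your proposed workaround does not close it. In your simulation you instantiate the oracles exactly as in \cref{fig:cp_mini_scheme}, with $\cO_1(x,v)=f(x)\oplus g(x)$ for $v\in A\setminus\{0\}$, answering the $f$-dependent part by forwarding to your external $f$-oracle. That is fine \emph{during} the Query phase of the learning game, where $\As_1$ is still present to field queries. But at the Output phase of $\LG^{\As_1}_{\cF,\cD,\gamma-2\epsilon}$ you must hand the challenger a standalone quantum program $(\rho,\{U_x\})$ with a compact classical description; the challenger then runs $\ti_\gamma(P_f)$ by applying $U_x$ to $\rho$ and comparing the output to $f(x)$. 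Nothing in \cref{def:learning_game_functional_equivalent} gives the output program continued oracle access to $f$, and since $\As_1$ only ever had $f$ as a black box it cannot embed a circuit for $\cO_1=f\oplus g$ into $U_{R_1}$'s description. Your two fallbacks fail for the same reason: the learning-game challenger verifies against $f$ but does not feed an $f$-oracle to the candidate program, and $\As_1$ cannot obfuscate a circuit it does not possess.

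The paper resolves this by \emph{swapping} the roles of the two oracles in the simulation: it gives $\As$ the oracles $\cO_1'(x,v)=g(x)$ (on $v\in A\setminus\{0\}$) and $\cO_2'(x,v)=f(x)\oplus g(x)$ (on $v\in A^\perp\setminus\{0\}$). This joint distribution is identical to $(\cO_1,\cO_2)$ after the substitution $g\mapsto f\oplus g$, so $\As$'s view and hence the probability of $\eventone$ are unchanged. The payoff is that the output program $(\sigma[R_1],U_{R_1})$ together with $(\cO_1',\cO_\bot)$ no longer references $f$ at all: $\cO_1'$ is a circuit in $A$ and the PRF key for $g$, both of which $\As_1$ chose itself, and $\cO_\bot$ is trivial. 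This swap is exactly the missing ingredient; without it the lemma as stated cannot be established under the learning-game definition, and with it the rest of your argument (shift from $\widetilde{\ati}_{R_1,\gamma-\epsilon}$ to $\widetilde{\ti}_{R_1,\gamma-2\epsilon}$ via \cref{cor:ati_thresimp}, then averaging over randomness) goes through.
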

\begin{proof}
The challenger in the copy protection security game plays as the quantum unlearnability adversary $\cA_1$ for function $f \gets \cF$, given only black-box access to $f$; we denote this black box as oracle $\cO_f$, which on query $\ket{x, z}$, answers the query with $\ket{x, f(x)+z}$.

Next, we show that $\cA_1$ can simulate the copy protection security game for $\cA$ using the information given and uses $\cA$ to quantumly learn $f$.
$\cA_1$ samples random $\lambda/2$-dimensional subspace $A$ over $\F$ and prepares the membership oracles (two unitaries) $U_A, U_A^\perp$ as well as state $\ket{A}$.

Using $U_A, U_A^\perp$ and given oracle access to $f$ in the unlearnability game, $\cA_1$ simulates the copy protection oracles $\cO_1, \cO_2$ for $\cA$ in the query phase of anti-piracy game.

There's one subtlety in the proof: $\cA_1$ needs to simulate the oracles in the anti-piracy game slightly differently: $\cA_1$ simulates the oracles with their functionalities partially swapped:
 \begin{align*}
    \cO_1'(x,v)&=\begin{cases}
     g(x)&\text{if }v\in A \text{ and }v\ne 0,\\
    \bot &\text{otherwise}.
    \end{cases} \\
    \cO_2'(x,v)&=\begin{cases}
    f(x)\oplus g(x)&\text{if }v\in A^\perp\ \text{ and }v\ne 0,\\
    \bot &\text{otherwise}.
     \end{cases}
    \end{align*}
    That is, a random function $g(x)$ is output when queried on $u \in A \setminus \{0\}$, and $f(x)\oplus  g(x)$ is output  when queried on $u \in A^\perp \setminus \{0\}$. The distributions of $\cO_1, \cO_2$ and $\cO'_1, \cO'_2$ are identical. Note that $g(x)$ can be simulated by a quantum secure PRF or a $2 t$-wise independent hash function where $t$ is the number of oracle queries made by $\As$ \cite{zhandry2012construct}. 
 
In the output phase, $\cA$ outputs $(\sigma, U_{R_1},U_{R_2})$ and sends to $\cA_1$.  $\As_1$ simply outputs $(\sigma[R_1], U_{R_1})$ with oracle access to $\cO'_1, \cO_{\bot}$. The program does not need access to oracle $f$ because $\cO'_1$ is only about $g(\cdot)$ and $\cO_{\bot}$ is a dummy oracle. 
If $\eventone$ happens, the program is a $(\gamma - 2\epsilon)$-good with non-negligible probability, by the definition of $\eventone$. Because $\Pr[E_1]$ is also non-negligible, $\As_1$ breaks $(\gamma-2 \epsilon)$-quantum-unlearnability of $\cF, \cD$.  \qed
\end{proof}

\begin{lemma} \label{lem:eventtwo}
Let $\Pr[\eventtwo]$ be the probability of $\eventtwo$ taken over all randomness of $\AG^\As_{\cF, \cD, \gamma}(1^\lambda)$. If $\Pr[\eventtwo]$ is non-negligible, there exists an adversary $\As_2$ that wins $\LG^{\As_2}_{\cF, \cD, \gamma-2\epsilon}(1^\lambda)$ with non-negligible probability. 
\end{lemma}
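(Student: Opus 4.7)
My plan is to mirror the reduction used in the proof of Lemma~\ref{lem:eventone}, exploiting the asymmetry that, in the construction of Figure~\ref{fig:cp_mini_scheme}, it is $\cO_1$ that depends on $f$ (via $f(x)\oplus g(x)$) while $\cO_2$ already returns only $g(x)$. Consequently, the oracle-swap trick used in Lemma~\ref{lem:eventone} is not needed here: we can work with the honest oracles directly.

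First, I would construct a learner $\As_2$ for $\LG^{\As_2}_{\cF,\cD,\gamma-2\epsilon}$ that, given black-box access to $f\gets\cF_\lambda$ via an oracle $\cO_f$, internally samples a uniformly random $\lambda/2$-dimensional subspace $A\subseteq \F^\lambda$, prepares $\ket A$ together with the membership unitaries $U_A, U_{A^\perp}$, and instantiates the copy-protection oracles $\cO_1, \cO_2$ honestly. Here $\cO_1(x,v)=f(x)\oplus g(x)$ is implemented using one query to $\cO_f$ per invocation together with a locally sampled quantum-secure PRF (or $2t$-wise independent hash function, with $t$ bounding the total number of queries made by $\As$) for $g$, while $\cO_2(x,v)=g(x)$ uses only this locally-sampled $g$ and the known subspace description. $\As_2$ then runs $\As$ on $\ket A$ with oracle access to $(\cO_1,\cO_2)$ to obtain the pirated bipartite state $\sigma$ and the two unitary ensembles $U_{R_1}, U_{R_2}$.

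Next, $\As_2$ outputs the program $(\sigma[R_2], U_{R_2})$ with oracle access to $(\cO_\bot, \cO_2)$. The critical observation—and what makes the proof simpler than Lemma~\ref{lem:eventone}—is that neither $\cO_\bot$ nor $\cO_2$ depends on $f$, so both can be evaluated by the learner's receiver using only the stored description of $A^\perp$ and the PRF seed. Hence the output is a valid quantum program for the learning game with no residual $\cO_f$ dependence. By the definition of $\eventtwo$, conditioned on $\eventtwo$ this program is tested $(\gamma-2\epsilon)$-good with respect to $f, D_f$ with non-negligible probability; combined with $\Pr[\eventtwo]$ being non-negligible and averaged over the game's randomness, $\As_2$ wins $\LG^{\As_2}_{\cF,\cD,\gamma-2\epsilon}(1^\lambda)$ with non-negligible probability.

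The one subtlety I would be careful about, which is also the main obstacle, is verifying that $\As_2$'s simulation of the anti-piracy game is statistically close to the real one, so that the event $\eventtwo$ occurs in the simulated execution with essentially the same non-negligible probability. This follows because the $(\cO_1,\cO_2)$ produced by $\As_2$ are distributed identically (up to the negligible advantage of distinguishing the PRF from a truly random function) to those in the real scheme, and the rest of $\As$'s view—namely $\ket A$ and its oracle access—is exactly as in the honest game. With this indistinguishability step in place, the contradiction with $(\gamma-2\epsilon)$-quantum unlearnability of $\cF$ with respect to $\cD$ is immediate.
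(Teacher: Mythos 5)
Your proof is correct and takes essentially the same route as the paper's proof sketch: simulate the oracles exactly as in the construction (so no oracle swap is needed, precisely because $\cO_2$ already returns only $g(x)$) and output $(\sigma[R_2], U_{R_2})$ with access to $(\cO_\bot, \cO_2)$, which contains no residual dependence on $f$. You also correctly identify and handle the one subtlety (PRF-vs-random-function indistinguishability for $g$), matching the care taken in the paper's Lemma~\ref{lem:eventone}.
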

\begin{proof}[Proof Sketch]
The proof is almost identical to the proof for \cref{lem:eventtwo} except oracles  $\cO_1, \cO_2$ are simulated in the same way as that in the construction. $\cO_1(x,v)$ outputs $f(x) \oplus g(x)$ if $v \in A \setminus \{0\}$, and otherwise outputs $\bot$. Similarly, $\cO_2(x,v)$ outputs $g(x)$ if $v \in A^{\perp} \setminus \{0\}$, and otherwise outputs $\bot$ \qed
\end{proof}

As discussed above, if $\Pr[\eventone \vee \eventtwo]$ is non-negligible, we can break the quantum unlearnability. Otherwise, $\Pr[\bareventone \wedge \bareventtwo]$ is overwhelming. We show that in the case, one can use the adversary $\As$ to breaks the direct-product problem \cref{thm:restate_bound_twopts}. 

\begin{lemma} \label{lem:eventno}
Let $\Pr[\bareventone \wedge \bareventtwo]$ be the probability taken over all randomness of $\AG^\As_{\cF, \cD, \gamma}(1^\lambda)$. If $\Pr[\bareventone \wedge \bareventtwo]$ is non-negligible, there exists an adversary $\As_3$ that breaks  the direct-product problem. 
\end{lemma}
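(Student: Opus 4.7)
The plan is to show that an adversary $\As_3$ for the direct-product problem can simulate the anti-piracy game for $\As$ using a single copy of $\ket{A}$ plus oracle access to $U_A, U_{A^\perp}$, and then extract the two vectors guaranteed by Lemma~\ref{lem:extraction_vectors}. Upon receiving $\ket{A}$ and oracles $U_A, U_{A^\perp}$ from the direct-product challenger, $\As_3$ will sample $f \gets \cF_\lambda$ on its own (it knows the distribution and the function $f$ is classically described), and internally instantiate the random function $g$ by a $2t$-wise independent hash for a suitable polynomial $t$ bounding the total query count (or equivalently a PQ-secure PRF). Using $U_A, U_{A^\perp}, f,$ and $g$, the reduction can answer every query to $\cO_1$ and $\cO_2$ faithfully, so the simulated anti-piracy game has exactly the same distribution as the real one.

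Next, $\As_3$ runs $\As$ on this simulated game, obtaining the bipartite state $\sigma$ on registers $R_1, R_2$ together with the unitary ensembles $U_{R_1}, U_{R_2}$. Conditioned on $\bareventone \wedge \bareventtwo$ (a non-negligible event by hypothesis), Lemma~\ref{lem:extraction_vectors} tells us that if we run $\ati_{R_1, \gamma-\epsilon}$ on $\sigma[R_1]$, pick one of its queries to $\cO_2$ uniformly at random, and measure the second input register of that query, and independently run $\ati_{R_2, \gamma-\epsilon}$ on $\sigma[R_2]$ while measuring a uniformly random query it makes to $\cO_1$, the pair $(u, v)$ we obtain lies in $(A \setminus \{0\}) \times (A^\perp \setminus \{0\})$ with non-negligible probability. $\As_3$ performs exactly these two measurements and outputs $(u, v)$.

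The one subtlety to check is that this is consistent with $\As_3$ having only a single copy of $\ket A$: the $\generate$ procedure only uses one copy of $\ket A$ to produce the copy-protected program (the state $\rho$ handed to $\As$ is $\ket A \bra A$), so $\As_3$'s simulation indeed consumes only its single given copy. The two extraction measurements are then performed on the two sides of the entangled state $\sigma$ produced by $\As$, and are implemented using the oracles $U_A, U_{A^\perp}$ already available to $\As_3$; no further copy of $\ket A$ is required. Combining the non-negligible probability of $\bareventone \wedge \bareventtwo$ with the non-negligible success probability granted by Lemma~\ref{lem:extraction_vectors} conditioned on that event, $\As_3$ wins the direct-product game with non-negligible probability, contradicting Theorem~\ref{thm:restate_bound_twopts}.

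The main obstacle, already discharged by the earlier case analysis, is precisely Lemma~\ref{lem:extraction_vectors}: showing that although $\sigma[R_1]$ and $\sigma[R_2]$ may be entangled and the two measurements are carried out on different sides of the same state, the outcomes of measuring a random $\cO_2$-query on one side and a random $\cO_1$-query on the other are \emph{jointly} good with non-negligible probability. Here this is bootstrapped from the hybrid chain of four inequalities in Case Two (two invocations of BBBV via Theorem~\ref{thm:bbbv97_oraclechange}, combined with the bipartite closeness of $\ati \otimes \ati$ to $\ti \otimes \ti$ from Lemma~\ref{lem:ati_2d}). With that lemma in hand, the present proof reduces to the book-keeping described above. \qed
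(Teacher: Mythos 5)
Your proposal is correct and follows essentially the same approach as the paper's own proof: $\As_3$ receives $\ket A$ and the membership oracles from the direct-product challenger, samples $f$ itself, simulates $\cO_1,\cO_2$ (and $g$) to run $\As$ inside the anti-piracy game, and then measures a random $\cO_2$-query of $\ati_{R_1,\gamma-\epsilon}$ and a random $\cO_1$-query of $\ati_{R_2,\gamma-\epsilon}$, invoking Lemma~\ref{lem:extraction_vectors} to conclude that the extracted pair $(u,v)$ is valid with non-negligible probability. Your write-up is actually slightly more careful than the paper's (which even has a stray ``$\As_1$'' typo), e.g.\ you explicitly note the $2t$-wise independent hash for $g$ and confirm that the simulation consumes only the single given copy of $\ket A$.
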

\begin{proof}
     The challenger in the copy protection security game plays as the adversary in breaking direct-product problem, denoted as $\cA_3$. In the reduction, $\cA_3$ is given the access to  membership oracles $U_A, U_A^\perp$ and one copy of $\ket{A}$.

Next, we show that $\cA_3$ can simulate the anti-piracy security game for $\cA$ using the information given and uses $\cA$ to obtain the two vectors.
$\cA_3$ samples $f \gets \cF$, and simulates a $\gamma$-anti-piracy game, specifically simulating the copy protection oracle $\cO_1, \cO_2$ for adversary $\cA$. 
In the output phase, $\cA$ outputs $(\sigma, U_{R_1}, U_{R_2})$. 

$\As_1$ upon taking the output, it randomly picks and measures a query of $\ati_{R_1, \gamma-\epsilon}$ to $\cO_2$ and  a query of $\ati_{R_2, \gamma-\epsilon}$ to $\cO_1$, and obtain a pair of vectors $(u, v)$. 
If $\bareventone \wedge \bareventtwo$ happens. By \cref{lem:extraction_vectors}, $(u, v)$ breaks the direct-product problem with non-negligible probability. 
Since $\Pr[\bareventone \wedge \bareventtwo]$ is non-negligible, the overall probability is non-negligible. 
\qed
\end{proof}

Note that the proof does not naturally extend to $q$-collusion resistant anti-piracy. We leave this as an open problem. 

\paragraph{The General Case.}  The above proof works for the general case, by simply doing the followings: 1. $\ti, \ati$ are now defined as the (approximated) projective measurement corresponding to the predicate $E_\lambda$; 2. In \cref{lem:eventone}, \ref{lem:eventtwo} and \ref{lem:eventno}, the randomness is taken over the general unlearnability game and copy-protection game.

\section{ Quantum Copy-Detection}

\subsection{Construction}
Now we construct a copy detection scheme for $\samp, \mathscr{F}, \mathscr{E}$. Let $\qm$ and $\wm$ be a public key quantum money scheme and a publicly extractable watermarking scheme for $\samp, \mathscr{F}, \mathscr{E}$, whose serial number space $\mathcal{S}_\lambda$ of $\qm$ is a subset of the message space $\mathcal{M}_\lambda$ of $\wm$. We construct a copy detection scheme in \cref{fig:cd_mini_scheme}.

\begin{figure}
    \centering
    \begin{gamespec}
    \begin{description}
    \item[] $\setup(1^\lambda)$: it runs $\wm.\setup(1^\lambda)$ to get $\xk, \mk$, let $\sk = \mk$ and $\pk = \xk$. 
    
    \item $\generate(\sk, f)$: 
    \begin{itemize}
        
        \item it runs $\qm.\gen(1^\lambda)$ to get a money state $|\$\rangle$ and a serial number $s$ (by applying \qm.\ver{} to the banknote); 
        
        \item let $\widetilde{f} = \wm.\mar(\mk, f, s)$ which is classical; \item it outputs the quantum state $\rho_f = (\widetilde{f}, |\$\rangle)$, and $\{U_{f, x}\}_{x \in [N]}$; 
        \item let  $\{U_{f, x}\}_{x \in [N]}$ describe the following unitary: on input a quantum state $\rho$, treat the first register as a classical function $g$, compute $g(x)$ in superposition. 
    \end{itemize}
    
    \item $\chk(\pk, \aux_f, (\rho_f, \{U_{f, x}\}_{x \in [N]}))$: 
    \begin{itemize}
        \item it parses and measures the first register, which is $(f', |\$'\rangle)$;
        \item it checks if $\qm.\ver(|\$'\rangle)$ is valid and it gets the serial number $s'$;
        \item it then checks if $s' = \wm.\extract(\pk = \xk, \aux_f, f')$;
        \item if all the checks pass, it outputs $0$; otherwise, it outputs $1$.
    \end{itemize}
    
    \end{description}
    \end{gamespec}
    \caption{Quantum copy detection scheme.}
    \label{fig:cd_mini_scheme}
\end{figure}

\subsection{Efficiency and Correctness}

First, for all $\lambda \in \N$, all efficient $\As$, every $f \gets (\As \Longleftrightarrow \samp(1^\lambda))$, the program output is $(\rho_f, \{U_{f, x}\}_{x \in [N]})$, we have
    $\compute(\rho_f, \{U_{f, x}\}_{x \in [N]}, x) = \tilde{f}(x)$, 
where $\tilde{f} = \wm.\mar(\mk, f, s)$ for some serial number $s$. From the correctness of $\wm$, it satisfies \textbf{unique output} and \textbf{functionality preserving} (with respect to $\mathscr{F}$). 

The correctness of $\chk$ comes from the correctness of $\wm.\extract$ and \textbf{unique serial number} property of $\qm$.
$\chk$ is a projection since $\qm.\ver$ is also a projection. Efficiency is straightforward.

\subsection{Security}
\begin{theorem} \label{thm:watermarktocd}
Assume $\qm$ is a quantum money scheme and $\wm$ is a $q$-collusion resistant for $\samp, \mathscr{E}$ with $\gamma$-unremovability, 
the above copy-detection scheme for $\samp, \mathscr{F}, \mathscr{E}$ has $q$-collusion resistant $\gamma$-copy-detection-security. 
\end{theorem}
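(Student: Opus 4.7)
The plan is to reduce any $\mathcal A$ that wins the $q$-collusion-resistant copy-detection game with non-negligible advantage to either a break of unclonability of $\qm$ or a break of $q$-collusion-resistant $\gamma$-unremovability of $\wm$, along the dichotomy already sketched in the overview.

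First, I set up what a winning $\mathcal A$ produces. The challenger runs $\setup$ to obtain $(\xk,\mk)$, samples $(f,\aux_f)$ jointly with $\mathcal A$, and generates $q$ programs $(\tilde f^{(j)},\ket{\$^{(j)}})$ with serial numbers $s^{(j)}$ satisfying $\wm.\extract(\xk,\aux_f,\tilde f^{(j)})=s^{(j)}$; let $Q=\{s^{(1)},\dots,s^{(q)}\}$ denote both the multiset of challenge serials and the set of $\mar$-oracle queries that the simulation will make. The adversary returns a joint state $\sigma$ on registers $R_1,\dots,R_{q+1}$ and unitary ensembles $\{U_{R_i,x}\}$. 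Because $\chk$ starts by measuring the first register, conditioning on all $q{+}1$ applications of $\chk$ outputting $0$ projects the first register of each $R_i$ onto a classical pair $(f'_i,\ket{\$'_i})$ with $\qm.\ver(\ket{\$'_i})=s'_i=\wm.\extract(\xk,\aux_f,f'_i)$; write $\tau_i$ for the collapsed program state on $R_i$. The winning event further requires each $(\tau_i,U_{R_i,x})$ to be tested $\gamma$-good.

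Next, I split on the resulting serials $S=\{s'_1,\dots,s'_{q+1}\}$. In \emph{Case A}, $S\subseteq Q$, so by pigeonhole two distinct indices $i\neq i'$ satisfy $s'_i=s'_{i'}\in Q$. The quantum-money reduction $\mathcal B_{\qm}$ plants its mini-scheme challenge banknote in a uniformly random slot $j^\star\in[q]$, freshly generates the remaining $q-1$ banknotes with $\qm.\gen$, simulates the copy-detection game with $\mathcal A$, and outputs the pair $(\ket{\$'_i},\ket{\$'_{i'}})$; it succeeds with probability at least a $1/q$ fraction of Case A's probability, violating unclonability of $\qm$ (which by~\cite{aaronson2012quantum} lifts to $q$-collusion resistance via signatures). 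In \emph{Case B}, some $s'_{i^\star}\notin Q$. The watermarking reduction $\mathcal B_{\wm}$ gets $\xk$, uses its own $\qm.\gen$ and its $\mar$-oracle queried on the serials $\{s^{(j)}\}_{j\in[q]}$ to simulate the copy-detection game exactly, applies $\chk$ to the $i^\star$-th output register, and submits the extracted circuit as its unremovability challenge.

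The key obstacle is making the submitted circuit in Case B simultaneously carry the mark $s'_{i^\star}$ \emph{and} satisfy the functionality condition, since the adversary's $U_{R_{i^\star},x}$ is not required to evaluate the embedded $f'_{i^\star}$ honestly and $\gamma$-goodness only speaks about the quantum program $(\tau_{i^\star},U_{R_{i^\star},x})$. I resolve this by having $\mathcal B_{\wm}$ output the composite circuit $C^\star$ that hardcodes the post-$\chk$ state $\tau_{i^\star}$ (including the classical fragment $f'_{i^\star}$) and on input $x$ applies $U_{R_{i^\star},x}$ followed by a measurement of the output register. Then $\extract$ acts syntactically on the classical marked fragment embedded in $C^\star$ so $\extract(\xk,\aux_f,C^\star)=s'_{i^\star}\notin Q$, while $\gamma$-goodness of $(\tau_{i^\star},U_{R_{i^\star},x})$ supplies exactly the functionality condition required of $C^\star$ in the $\gamma$-unremovability game. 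Summing the Case A and Case B probabilities gives a non-negligible advantage against one of $\qm$ or $\wm$, completing the contradiction.
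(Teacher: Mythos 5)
Your overall decomposition matches the paper's: you condition on all $q{+}1$ $\chk$ measurements passing, split on whether the multiset of output serials lies entirely inside the planted set $Q$, and reduce the two cases to quantum money unclonability and watermarking $\gamma$-unremovability respectively. The paper proves $q{=}1$ in detail (events $E'$ and $E$) and only sketches the $q$-collusion case in the appendix; your explicit random-slot-planting for the money reduction (paying a $1/q$ factor) is a reasonable way to make that sketch precise, and your Case A argument is correct.

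The problem is Case B. You correctly flag a subtlety that the paper's proof elides: after $\chk$ measures the classical fragment $f'_{i^\star}$ out of register $R_{i^\star}$, the $\gamma$-goodness test in the game (\cref{def:gamma_good_test_crypto_applications}) is still applied to the full quantum program $(\tau_{i^\star},\{U_{R_{i^\star},x}\})$ with the \emph{adversary's} unitaries, which are under no syntactic obligation to evaluate $f'_{i^\star}$. So it is not automatic that the bare classical circuit $f'_{i^\star}$ is $\gamma$-good, which is exactly what the watermarking reduction must hand to its challenger. However, your proposed repair does not work, for two independent reasons. First, the $\gamma$-unremovability game (\cref{def:qcd_general} context, Definition of unremovability) is stated for a \emph{classical} output circuit $f^*$ on which $\extract$ and $\Pr_r[E_\lambda(f^*,f,r)]$ are evaluated; your $C^\star$ ``hardcodes the post-$\chk$ state $\tau_{i^\star}$,'' but $\tau_{i^\star}$ is a (generally mixed, generally entangled with the remaining registers) quantum state that has no classical circuit description, so $C^\star$ is not an admissible submission to that game. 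Second, even if one granted some encoding, the claim that ``$\extract$ acts syntactically on the classical marked fragment embedded in $C^\star$'' is unsupported: $\extract$ is an abstract, typically behavior-driven procedure, and wrapping $f'_{i^\star}$ inside extra machinery $U_{R_{i^\star},x}$ will in general change the input/output behavior $\extract$ probes, so there is no reason $\extract(\xk,\aux_f,C^\star)=s'_{i^\star}$. The paper avoids both issues by submitting the bare measured circuit $\tilde f_{i^\star}$ directly and asserting (without elaboration) that passing the post-$\chk$ $\gamma$-goodness test is equivalent to $\tilde f_{i^\star}$ itself being classically $\gamma$-good, which holds under the (implicit) reading that after $\chk$ has collapsed the first register to a classical circuit, the relevant unitaries are the honest ``evaluate the first register'' ones; if you want to make the step airtight you should either build that requirement into $\chk$ (reject unless $U_{R_i,x}$ is the canonical evaluator) or give a separate argument ruling out a $\gamma$-good program whose measured classical fragment is not $\gamma$-good, rather than the composite-circuit trick.
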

\begin{proof}
    We prove the case for $q=1$. Let $\As$ be a QPT algorithm that tries to break the security of the copy detection scheme. 
    Let $(\sigma, U_{R_1}, U_{R_2})$ be the program output by $\As$ which wins the game $\DG^{\As}_{\samp, \mathscr{E}, \gamma}$. 
    
    To win the game, the program $(\sigma, U_{R_1}, U_{R_2})$ should pass the following two tests: 
    \begin{enumerate}
        \item Apply the projective measurement  (defined by $\chk(\pk, \aux_f, \cdot)$) on both $\sigma[R_1]$ and $\sigma[R_2]$, and both outcomes are $0$. 
        
        \item Let $\sigma'$ be the state that passes step 1. Then both programs $(\sigma'[R_1], U_{R_1}), \allowbreak (\sigma'[R_2], U_{R_2})$ are tested to be $\gamma$-good with non-negligible probability. 
    \end{enumerate}
    
    In our construction, $\chk$ first measures the program registers. The resulting state is $\tilde{f}_1, \tilde{f}_2, \sigma$, where $\tilde{f}_1, \tilde{f}_2$ are supposed to be classical (marked) circuits that computes $f$ and $\sigma$ are (possibly entangled) states that are supposed to be quantum money for each of the program. 
    
    Next, $\chk$ applies $\qm.\ver$ on both registers of $\sigma$ and computes serial numbers. Define $S_b$ be the random variable of $\qm.\ver$ applying on $\sigma[R_b]$ representing the serial number of $\rho_b$. Define $S$ be the random variable of $\qm.\ver(|\$\rangle)$ representing the serial number of the quantum money state in the $\generate$ procedure.

    Define $E$ be the event that  both $\wm.\extract(\xk, \aux_f, \tilde{f}_b) = S_b$ and at least one of $S_1, S_2$ is not equal to $S$.
    Define $E'$ be the event that both $S_1, S_2$ are equal to $S$ and both $\wm.\extract(\xk, \aux_f, \tilde{f}_b) = S_b$.
    If $\tilde{f}_1, \tilde{f}_2, \sigma$ passes the step 1, exactly one of $E$ and $E'$ happens.
    
    In step 2, it simply tests if $\tilde{f}_1$ and $\tilde{f}_2$ are $\gamma$-good with respect to $f, E_\lambda$. Since $\tilde{f}_1, \tilde{f}_2$ are classical circuits, it is equivalent to check whether they work correctly on at least $\gamma$ fraction of all inputs. If it passes step 2, we have for all $b \in \{1, 2\}$,  
       $\Pr_r[E_\lambda(\tilde{f}_b, f, r) = 0] \geq \gamma$. 
    
    Therefore, the probability of $\As$ breaks the security game is indeed, 
    \begin{align*}
        &\Pr_{(\tilde{f}_1, \tilde{f}_2, \sigma)}\left[\forall b, \Pr_r[E_\lambda(\tilde{f}_b, f, r) = 0] \geq \gamma \right] \\
        =& \Pr_{(\tilde{f}_1, \tilde{f}_2, \sigma)}\left[(E \vee E') \wedge \forall b, \Pr_r[E_\lambda(\tilde{f}_b, f, r) = 0] \geq \gamma\right] \\
        \leq&  \Pr_{(\tilde{f}_1, \tilde{f}_2, \sigma)}\left[E \wedge \forall b, \Pr_r[E_\lambda(\tilde{f}_b, f, r) = 0] \geq \gamma\right] + \Pr_{(\tilde{f}_1, \tilde{f}_2, \sigma)}[E']
    \end{align*}
    Note that the probability is taken over the randomness of $\DG^{\As}_{\samp, \mathscr{E}, \gamma}$. Next we are going to show both probabilities are negligible, otherwise we can break the quantum money scheme or watermarking scheme.
    
    \begin{claim}
        $\Pr_{(\tilde{f}_1, \tilde{f}_2, \sigma)}[E'] \leq \negl(\lambda)$. 
    \end{claim}
    \begin{proof}
         It corresponds to the security game of the quantum money scheme. Assume $\Pr[E']$ is non-negligible, we can construct an adversary $\Bs$ for the quantum money scheme with non-negligible advantage. 
        Given a quantum money state $|\$\rangle$, the algorithm $\Bs$ does the following (it simulates the challenger for the copy-detection scheme): 
        \begin{itemize}
            \item It first runs $\wm.\setup(1^\lambda)$ to get $\xk, \mk$ and let $\sk = \mk$ and $\pk = \xk$. 
            \item It interacts with $\As$ and samples $f$. 
            \item Instead of sampling a new quantum money state, it uses the state $|\$\rangle$. Let $s = \ver(|\$\rangle)$ and $\widetilde{f} \gets \wm.\mar(\mk, f, s)$. It gives the instance $\rho_f = (\widetilde{f}, |\$\rangle)$. 
            \item When $\As$ outputs $(\tilde{f}_1, \tilde{f}_2, \sigma)$, $\Bs$ outputs $\sigma$. 
        \end{itemize}
        Thus $\Pr[E']$ is exact the probability that both verification gives $s$. \qed        
    \end{proof}
    
    \begin{claim}
        $\Pr_{(\tilde{f}_1, \tilde{f}_2, \sigma)}\left[E \wedge \forall b, \Pr_r[E_\lambda(\tilde{f}_b, f, r) = 0]\right] \leq \negl(\lambda)$. 
    \end{claim}
    \begin{proof}
        It corresponds to the security game of the underlying watermarking scheme. 
        Since if $E$ happens, at least one of the circuit has different mark than $s$ and it satisfies the correctness test $F$. 
        The reduction is the following ($\Bs$ simulates the challenger for the copy-detection scheme): 
        \begin{itemize}
            \item Given $\xk, \aux_f$ in the watermarking security game, $\Bs$ prepares a quantum money state $|\$\rangle$ with serial number $s$ and gets the marked circuit $\widetilde{f}$ whose marking is $s$.
            \item It prepares $\rho_f = (\widetilde{f}, |\$\rangle)$  and  feeds it to $\As$.
            \item When $\As$ outputs outputs $(\tilde{f}_1, \tilde{f}_2, \sigma)$, $\Bs$ outputs $\tilde{f}_b$ whose mark is not $s$, i.e, $\extract(\xk, \aux_f,  \tilde{f}_b) \ne s$. 
        \end{itemize} 
        When $\As$ succeeds, $\Bs$ breaks the security of the watermarking scheme.  \qed
    \end{proof}

    Thus, the probability of $\As$ breaks the game is negligible. \qed
\end{proof}

It is natural to extend the proof to $q$-collusion resistance. We put the proof sketch in \cref{sec:cd_q_sketch}.

Combining with the watermarking primitives (see examples in Appendix~\ref{sec:wmapp}), we can get the corresponding copy-detection schemes.

\bibliographystyle{alpha}
\bibliography{main.bib}

\appendix

\section{Basics of Quantum Computation and Quantum Information}\label{sec:basic_qc}
For completeness, we provide some of the basic definitions of quantum computing and quantum information, for more details see \cite{nc02}.

\paragraph{Quantum states}
Let $\mathscr{H}$ be a finite Hilbert space. Quantum states over $\mathscr{H}$ are positive semi-definite operators from $\mathscr{H}$ to $\mathscr{H}$ with unit trace. These are called density matrices, denoted by $\rho$ or $\sigma$ in this paper. 

A quantum state over $\mathscr{H}=\C^2$ is called qubit, which can be represented by the linear combination of the standard basis $\{\ket{0}, \ket{1}\}$. More generally, a quantum system over $(\C^2)^{\otimes n}$ is called an $n$-qubit quantum system for $n\in \N_+$. 

A pure state can be represented by a unit vector in $\C^n$. The standard basis of the Hilbert space of $n$-qubit pure states is denoted by $\{\ket{x}\}$, where $x\in \{0,1\}^n$. If a state $\ket{\phi}$ is a linear combination of several $\ket{x}$, we say it is in ``superposition''.

A mixed state is a collection of pure states $\ket{\phi_i}$ for $i\in [n]$, each with associated probability $p_i$, with the condition $p_i\in [0,1]$ and $\sum_{i=1}^n p_i = 1$. A mixed state can also be represented by the density matrix: $\rho:= \sum_{i=1}^n p_i \ket{\phi_i}\bra{\phi_i}$. 

\emph{Partial Trace}. 
For a quantum state $\sigma$ over two registers $R_1, R_2$ (i.e. Hilbert spaces $\mathscr{H}_{R_1}, \mathscr{H}_{R_2}$), we denote the state in $R_1$ as $\sigma[R_1]$, where $\sigma[R_1]= \Tr_2[\sigma]$ is a partial trace of $\sigma$. Similarly, we denote $\sigma[R_2]= \Tr_1[\sigma]$.

\emph{Purification of mixed states}. For a mixed state $\rho$ over $\mathscr{H}_A$, there exists another space $\mathscr{H}_B$ and a pure state $\ket \psi$ over $\mathscr{H}_A \otimes \mathscr{H}_B$ such that $\rho$ is a partial trace of $\ket \psi \bra \psi$ with respect to $\mathscr{H}_B$. 



\begin{definition}[Trace distance]
Let $\rho,\sigma\in \C^{2^n\times 2^n}$ be the density matrices of two quantum states. The trace distance between $\rho$ and $\sigma$ is
\begin{align*}
    \|\rho-\sigma\|_{\mathrm{tr}}:=\frac{1}{2}\sqrt{\Tr[(\rho-\sigma)^\dagger (\rho-\sigma)]},
\end{align*}
\end{definition}

\paragraph{Quantum Measurements}
In this work, we will use the following general form of measurements.
\begin{definition}[Positive operator-valued measure, POVM]
A positive operator-valued measure (POVM) $\mathcal{M}$ is specified by a finite index set $\mathcal{I}$ and a set $\{M_i\}_{i\in \cal I}$ of Hermitian positive semi-definite matrices $M_i$ such that $\sum_{i\in \cal I}M_i = \mathbf{I}$. 

When applying $\mathcal{M}$ to a quantum state $\rho$, the outcome is $i$ with probability $p_i = \Tr[\rho P_i]$ for all $i\in \cal I$.
\end{definition}

To characterize the post-measurement states, we define the quantum measurements as follows.
\begin{definition}[Quantum measurement]
A quantum measurement $\cal E$ is specified by a finite index set $\cal I$ and a set $\{E_i\}_{i\in \cal I}$ of measurement operators $E_i$ such that $\sum_{i\in \cal I}E_i^\dagger E_i=\mathbf{I}$.

When applying $\cal E$ to a quantum state $\rho$, the outcome is $i$ with probability $p_i = \Tr[\rho E_i^\dagger E_i]$ for all $i\in \cal I$. Furthermore, conditioned on the outcome being $i$, the post-measurement state is $E_i\rho E_i^\dagger / p_i$.
\end{definition}

Note that POVM $\cal M$ and quantum measurement $\cal E$ are related by setting $M_i = E_i^\dagger E_i$. In this case, we say that $\cal E$ is an implementation of $\cal M$. The implementation of a POVM may not be unique.

\begin{definition}[Projective measurement and projective POVM]
A quantum measurement ${\cal E}$ is projective if for all $i\in \cal I$, $E_i$ is a projection, i.e., $E_i$ is Hermitian and $E_i^2 = E_i$.

Similarly, a POVM $\cal M$ is projective if each $M_i$ is projection for $i\in \cal I$.
\end{definition}

\section{Cryptographic Primitives}
\subsection{Public-key Quantum Money}
\begin{definition}[Public Key Quantum Money] 
\label{def:money}
A public-key (publicly-verifiable) quantum money consists of the following algorithms:

\begin{itemize}
    \item $\keygen(1^\lambda):$ takes as input a security parameter $\lambda$, and generates a key pair $(\sk, \pk)$.

\item $\gennote(\sk): $
takes   a secret key $\sk$ and generates a quantum banknote state $\ket{\$}$.

\item  $\ver(\pk, \ket{\$'}):$ takes a public key $\pk$,  and a claimed money state $\ket{\$'}$, and outputs either 1 for accepting or 0 for rejecting.
\end{itemize}

A secure public-key quantum money should satisfy the following properties:
\begin{description}
\item \textbf{Verification Correctness}: there exists a negligible function $\negl(\cdot)$ such that the following holds for any $\lambda \in \N$,
$$
\Pr_{(\sk, \pk) \gets \keygen(1^\lambda)}[\ver(\pk, \gennote(\sk)) = \text{ 1}] \geq 1 - \negl(\lambda)
$$

\item \textbf{Unclonable Security}: 
Suppose a QPT adversary is given $q = \poly(\lambda)$ number of valid banknotes $\{\rho_i\}_{i \in [q]}$ and then generates $q' = q+1$ banknotes $\{\rho'_j\}_{j \in [q']}$ where $\rho'_j$ are potentially entangled, there exists a negligible function $\negl(\cdot)$, for all $\lambda \in \N$, 
$$
\Pr_{(\sk, \pk) \gets \keygen(1^\lambda)} \left[\forall i \in [q'], \ver(\pk, \rho'_j) = 1: \{\rho'_j\} \gets \cA(1^\lambda, \{\rho_i\} \right] \leq \negl(\lambda)
$$

\end{description}
\end{definition}

\begin{remark}
In rest of the paper, $q$ is set to be 1 for simplicity, and the scheme satisfies unclonable security if $\cA$ cannot produce two banknotes that pass verification. \cite{aaronson2012quantum} shows that any public-key quantum money scheme that satisfies security when $q = 1$, can be generalized to a scheme that is secure when $q = \poly(\lambda)$, using quantum-secure digital signatures.
\end{remark}

A non-perturb property is also required. That is, one can verify a quantum banknote polynomially many times and the banknote is still a valid banknote. Since $\ver$ is almost a deterministic function, by Gentle Measurement Lemma (\cref{lem:gentle_measure}), the above definition implies the non-perturb property. 


In some settings, instead of outputting $0/1$, $\ver$ is required to output either $\bot$ which indicates the verification fails, or a serial number $s \in \mathcal{S}_\lambda$ if it passes the verification. In this case, the scheme should satisfy the following correctness (unique serial number property) and unclonable security \cite{zhandry2017quantum}: 

\begin{description}
\item \textbf{Unique Serial Number}:  For a money state $|\$\rangle$, let      $H_\infty(|\$\rangle) = - \log \min_s \Pr[\ver(|\$\rangle) = s]$. 
    We say a quantum scheme has unique serial number property, if $\mathbb{E}[H_\infty(|\$\rangle)]$ is negligible for all $\lambda$, $(\sk,\pk) \gets \keygen(1^\lambda)$ and $|\$\rangle$ is sampled from $\gennote(\sk)$. 
    
\item \textbf{Unclonable Security}:   
Consider the following game with a challenger and an adversary, 
    \begin{enumerate}
        \item The challenger runs $(\sk, \pk) \gets \keygen(1^\lambda)$ and $|\$\rangle \gets \gennote(\sk)$, it then runs $\ver$ to get a serial number $s$. 
        \item $\As$ is given  the public key $\pk$, the banknote $|\$\rangle$ and the serial number $s$. 
        \item $\As$ produces $\sigma^*$ (which contains two separate registers, but they may be entangled) and denotes $\sigma_1 = {\sf Tr}_2[\sigma^*]$ and $\sigma_2 = {\sf Tr}_1[\sigma^*]$. 
        \item $\As$ wins if and only if $\ver(\sigma_1) = \ver(\sigma_2) = s$. 
    \end{enumerate}
    We say a public key quantum money scheme is secure, if for all QPT $\As$, it wins the above game with negligible probability in $\lambda$.
\end{description}

\subsection{Obfuscation}
\begin{definition}[Virtual Black-Box Obfuscation, \cite{barak2001possibility}] 
An obfuscator $\cO$ (with auxiliary input) for a collection of circuits $\cC = \bigcup_{\lambda \in \N} \cC_\lambda$ 
is a (worst-case) VBB obfuscator if it satisfies:
\begin{itemize}
    \item \textbf{Functionality-Preserving:} 
    For every $C \in \cC$, every input $x$, $\Pr[\cO(C)(x) = C(x)] = 1$.
    
    \item \textbf{Virtual Black-Box:}
    For every poly-size adversary $\cA$, there exists a poly-size simulator
$\cS$, such that for every $\lambda \in \N$, auxiliary input $\aux \in \{0,1\}^{\poly(\lambda)}$, and every predicate $\pi : \cC_\lambda \to \{0, 1\}$, and every $\cC \in \cC_\lambda$:
$$
\left| \Pr_{\cA,\cO}[\cA(\cO(C), \aux) = \pi(C)] - \Pr_{\cS}[\cS^C(1^\lambda, \aux)) = \pi(C)] \right| \leq \negl(\lambda)
$$
where the probability is over $C \gets \cC_\lambda$, and the randomness of the algorithms $\cO,\cA$ and $\cS$.
\end{itemize}

\end{definition}

\section{Examples of Watermarking Primitives}
\label{sec:wmapp}

Let us look at how the definitions in \cite{cohen2018watermarking,goyal2019watermarking} fit into our frameworks.
\begin{enumerate}
    \item Watermarkable PRF in \cite{cohen2018watermarking}: 
        \begin{itemize}
            \item $\setup(1^\lambda) = (\wpp, \xk, \mk)$; 
            \item $\samp(1^\lambda, \wpp)$ samples a PRF key $k$, $f = \prf(k, \cdot)$, $\aux_f = \bot$. 

            \item $F_\lambda(\tilde{f}, f, r)$ is $0$ if and only if it samples a random input $x$ (according to $r$), and $\tilde{f}(x) = f(x)$. 
            
            \item Unremovability is defined by $E_\lambda = F_\lambda$. $\gamma  = 1/2 + 1/ \poly(\lambda)$. 
        \end{itemize}
    \item Watermarkable signature  in \cite{goyal2019watermarking}: 
            \begin{itemize}
                \item $\setup(1^\lambda) = (\wpp, \xk, \mk)$; 
                \item $\samp(1^\lambda, \wpp)$ samples a pair of keys $\vk, \sk$ and we interpret $f = \sign(\sk, \cdot) || \vk$, $\aux_f = \vk$. 
                \item $F_\lambda(\tilde{f}, f, r)$ is $0$ if and only if $\ver(\vk, m, \tilde{f}(m)) = 1$, where $\vk$ is decoded from $f$ and $m$ is sampled by $r$. 
                
                \item Unremovability:  $E_\lambda = F_\lambda$. 
                 $\gamma$ is inverse polynomial. 
            \end{itemize}
    
    \item Watermarkable public key encryption in \cite{goyal2019watermarking}: 
    \begin{itemize}
     \item  $\setup(1^\lambda) = (\wpp, \xk, \mk)$; 
     \item $\samp(1^\lambda, \wpp)$ is defined below:
        \begin{itemize}
            \item It samples $(\pk, \sk) \gets {\sf PKEGen}(1^\lambda, \wpp)$; 
            \item  $f = {\sf Dec}(\sk, \cdot) || \pk$, $
            \aux_f= \pk$. 
        \end{itemize}
     
     \item $F_\lambda(\tilde{f}, f, r)$ is defined as: 
        \begin{itemize}
            \item Decode $\pk$ from $f$, sample $m$ according to $r$; 
            \item Let $\ct = \enc(\pk, m)$; 
            \item It outputs $0$ if and only if  $\tilde{f}(\ct) = m$. 
        \end{itemize}

     \item Unremovability: $E_\lambda(\tilde{f}, f, r)$ defined as: 
     \begin{itemize}
         \item Decode $\pk$ from $f$, sample $b$ according to $r$; \item Decode $\aux = (m_0, m_1)$ from $\tilde{f}$; if $m_0 = m_1$, outputs $1$; 
         \item  Let $\ct = \enc(\pk, m_b)$;
         \item It outputs $0$ if and only if  $\tilde{f}(\ct) = m_b$. 
     \end{itemize}
     And, $\gamma = 1/2 + 1/\poly(\lambda)$. 
    \end{itemize}
\end{enumerate}

\section{Missing Details}
\subsection{Proof of \cref{cor:ati_thresimp}}\label{sec:proof_ati_th}
\begin{corollary}[\cref{cor:ati_thresimp}, restated]
    For any $\epsilon, \delta, \gamma, \cP, D$, the algorithm of measurement $\ati^{\epsilon, \delta}_{\cP, D, \gamma}$ that satisfies the followings: 
    \begin{itemize}
        \item For all quantum state $\rho$, $\Tr[\ati^{\epsilon, \delta}_{\cP, D, \gamma-\epsilon}\cdot \rho]\geq \Tr[\ti_\gamma(\cP_D)\cdot \rho]-\delta$.
        
        \item By symmetry, for all quantum state $\rho$, $\Tr[\ti_{\gamma-\epsilon}(\cP_D)\cdot \rho] \geq \Tr[\ati^{\epsilon, \delta}_{\cP, D, \gamma}\cdot \rho] -\delta$.
        
        \item 
        For all quantum state $\rho$, 
        let $\rho'$ be the collapsed state after applying  $\ati^{\epsilon, \delta}_{\cP, D, \gamma}$ on $\rho$. Then, $\Tr[\ti_{\gamma-2 \epsilon}(\P_D)\cdot \rho'] \geq 1 - 2\delta$. 
        \item
        The expected running time is the same as $\api_{\cP, D}^{\epsilon, \delta}$.
    \end{itemize}
\end{corollary}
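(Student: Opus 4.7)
The plan is to derive all four bullets as direct consequences of Theorem~\ref{thm:api_projimp}, using the construction $\ati^{\epsilon,\delta}_{\cP,D,\gamma} := \text{``run } \api^{\epsilon,\delta}_{\cP,D}\text{, accept iff outcome} \geq \gamma\text{''}$. For any state $\ket{\psi}$, let $D_A$ denote the outcome distribution of $\api^{\epsilon,\delta}_{\cP,D}(\ket\psi)$ and $D_P$ the outcome distribution of $\projimp(\cP_D)(\ket\psi)$. Theorem~\ref{thm:api_projimp} guarantees $\Delta^\epsilon_{\shift}(D_A, D_P) \leq \delta$, which unpacks to the two inequalities
\[
\Pr[D_P \leq x] \leq \Pr[D_A \leq x+\epsilon] + \delta,\qquad \Pr[D_A \leq x] \leq \Pr[D_P \leq x+\epsilon] + \delta,
\]
and to almost-projectivity of $\api^{\epsilon,\delta}_{\cP,D}$. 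Throughout, since mixed states are convex combinations of pure states, it suffices to prove each bullet for pure states and invoke linearity of trace. The proofs are then essentially rewriting of tail probabilities.

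For the first bullet, I would take $x$ slightly below $\gamma-\epsilon$ in the first shift-distance inequality and let it approach $\gamma-\epsilon$, yielding $\Pr[D_A \geq \gamma-\epsilon] \geq \Pr[D_P \geq \gamma] - \delta$; translating via the threshold and ATI constructions, this is exactly $\Tr[\ati^{\epsilon,\delta}_{\cP,D,\gamma-\epsilon}\,\rho] \geq \Tr[\ti_\gamma(\cP_D)\,\rho] - \delta$. The second bullet follows symmetrically by swapping the roles of $D_A$ and $D_P$.

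For the third bullet, which is the most delicate, I would proceed in two steps. First, by construction, conditioning on $\ati^{\epsilon,\delta}_{\cP,D,\gamma}$ returning $0$ is the same as conditioning on $\api^{\epsilon,\delta}_{\cP,D}$ returning an outcome $\geq \gamma$; write $\rho'$ for the resulting normalized collapsed state. Applying $\api^{\epsilon,\delta}_{\cP,D}$ to $\rho'$ and using the $(\epsilon,\delta)$-almost-projectivity property of Theorem~\ref{thm:api_projimp}, the second outcome lies within $\epsilon$ of the first except with probability $\delta$, so on $\rho'$ one has $\Pr[D_A \geq \gamma - \epsilon] \geq 1-\delta$. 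Second, apply shift distance again: $\Pr[D_P \geq \gamma - 2\epsilon] \geq \Pr[D_A \geq \gamma - \epsilon] - \delta \geq 1 - 2\delta$, which is exactly the claim $\Tr[\ti_{\gamma-2\epsilon}(\cP_D)\,\rho'] \geq 1-2\delta$. The fourth bullet is immediate since $\ati^{\epsilon,\delta}_{\cP,D,\gamma}$ only does a single call to $\api^{\epsilon,\delta}_{\cP,D}$ followed by a classical comparison.

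The main subtle point is the strict-versus-non-strict-inequality bookkeeping when passing from shift distance (stated with $\leq$) to statements about $\geq \gamma$ events: one has to either push $x$ to a limit through values just below the threshold or, equivalently, absorb arbitrarily small rounding into the $\delta$ slack, and confirm this is consistent with how the threshold measurement $\ti_\gamma$ is defined on the spectrum of $P$. Beyond this, the argument is essentially a two-line chase once the shift-distance and almost-projectivity properties are stated cleanly, so I do not expect any further obstacles.
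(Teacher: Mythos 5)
Your proposal is correct and follows essentially the same approach as the paper: both derive all four bullets from Theorem~\ref{thm:api_projimp} by unpacking the shift-distance definition into tail-probability inequalities (the paper packages this as an intermediate Fact), proving the pure-state case and extending to mixed states by linearity of trace for bullets one and two, and chaining almost-projectivity with a second application of shift distance for bullet three. The subtle strict-vs-non-strict inequality bookkeeping you flag is also present in the paper's intermediate Fact (which carries a $>$ on one side precisely for this reason), so you have correctly identified where the only care is needed.
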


We give the following fact before proving the corollary. 
\begin{fact}\label{fac:shift_dist_geq}
Let $D_0, D_1$ be two real-valued probability distributions with shift distance $\Delta_{\shift}^\epsilon=\delta$. Then, we have
\begin{align*}
    \Pr[D_0\geq x-\epsilon]\geq&~ \Pr[D_1 > x] - \delta, ~~~ \text{and}\\
    \Pr[D_1\geq x-\epsilon]\geq&~ \Pr[D_0 > x] - \delta
\end{align*}
\end{fact}
\begin{proof}
We prove the first inequality. By the definition of shift distance, we have
\begin{align*}
    \Pr[D_0 \leq x-\epsilon]\leq \Pr[D_1 \leq x] + \delta.
\end{align*}
Then, $\Pr[D_0 \geq x-\epsilon] = 1-\Pr[D_0 \leq x-\epsilon]  \geq 1- \Pr[D_1 \leq x] - \delta  = \Pr[D_1 \geq x] - \delta$.
The second inequality can be proved in a symmetric way. \qed
\end{proof}

Now, we prove the \cref{cor:ati_thresimp} in below.

\begin{proof}
By \cref{thm:api_projimp}, we know that there exists an algorithm $\api_{\cP, D}^{\epsilon, \delta}$ that approximates the measurement of $\projimp(\P_D)$, i.e., 
\begin{align*}
    \Delta^\epsilon_{\shift}(\api^{\epsilon, \delta}_{\cP, D}, \projimp(\cP_D)) \leq \delta.
\end{align*}
In particular, for any pure quantum state $\ket{\psi}$, let $D_A$ be the distribution of $\api_{\cP, D}^{\epsilon, \delta}(\ket{\psi})$ and $D_P$ be the distribution of $\projimp(\cP_D)(\ket{\psi})$. 

Then, by~\cref{fac:shift_dist_geq}, we have
\begin{align*}
    \Pr[D_A\geq \gamma-\epsilon]\geq \Pr[D_P\geq \gamma]-\delta,
\end{align*}

Hence, by the definition of threshold implementation (\cref{def:thres_implement}) and the construction of the algorithm $\ati_{\cP,D,\gamma}^{\epsilon, \delta}$, we can get
\begin{align*}
    \Tr\left[\ati_{\cP,D,\gamma-\epsilon}^{\epsilon, \delta}\ket{\psi}\bra{\psi}\right] \geq \Tr\Big[\ti_\gamma(\cP_D)\ket{\psi}\bra{\psi}\Big] - \delta.
\end{align*}

Note that mixed state is just a convex combination of pure states. Hence, by the linearity of trace, for any mixed state $\rho$, we have
\begin{align*}
    \Tr\Big[\ati^{\epsilon, \delta}_{\cP, D, \gamma-\epsilon}\cdot \rho\Big]\geq \Tr\Big[\ti_\gamma(\cP_D)\cdot \rho\Big] -\delta,
\end{align*}
which proves the first bullet. The second bullet follows the same idea by symmetry. 

For the third bullet, notice that the measurement algorithms $\ati_{\cP, D, \gamma}^{\epsilon, \delta}$ and $\api_{\cP, D}^{\epsilon, \delta}$ do the same thing to the quantum state. So, $\rho'$ is also the collapsed state after the measurement of $\api_{\cP, D}^{\epsilon, \delta}(\rho)$.

Since we assume that the outcome of $\ati_{\cP, D, \gamma}^{\epsilon, \delta}(\rho)$ is 0, it implies the corresponding outcome of $\api_{\cP, D}^{\epsilon, \delta}(\rho)$ is at least $\gamma$. 

By \cref{thm:api_projimp}, $\api_{\cP, D}^{\epsilon, \delta}$ is $(\epsilon, \delta)$-almost projective, which means that if we apply $\api_{\cP, D}^{\epsilon, \delta}$ (again) to $\rho'$, the outcome satisfies
\begin{align*}
    \Pr[\api_{\cP, D}^{\epsilon, \delta}(\rho')<\gamma-\epsilon]<\delta.
\end{align*}
\cref{thm:api_projimp} also provides that the shift distance between $\projimp$ and $\api$ is small, which means
\begin{align*}
    \Pr[\projimp(\cP_D)(\rho')\leq \gamma - 2\epsilon] \leq \Pr[\api_{\cP, D}^{\epsilon, \delta}(\rho')<\gamma-2\epsilon + \epsilon] + \delta \leq 2\delta.
\end{align*}
Hence,
\begin{align*}
    \Tr[\ti_{\gamma -2\epsilon}\cdot \rho']= 1- \Pr[\projimp(\cP_D)(\rho')\leq \gamma - 2\epsilon]\geq 1- 2\delta.
\end{align*}

The third bullet easily follows from the construction. \qed
\end{proof}

\subsection{Proof of \cref{lem:ati_2d}}\label{sec:proof_ati_2d}
\begin{lemma}[\cref{lem:ati_2d}, restated]
Let $\cP_1$ and $\cP_2$ be two  collections of projective measurements and $D_1$ and $D_2$ be any probability distributions defined on the index set of $\cP_1$ and $\cP_2$ respectively. 
For any $0<\epsilon, \delta, \gamma < 1$, the algorithms $\ati^{\epsilon, \delta}_{\cP_1, D_1, \gamma}$ and $\ati^{\epsilon, \delta}_{\cP_2, D_2, \gamma}$ satisfy the followings:
\begin{itemize}
    \item For any bipartite (possibly entangled, mixed) quantum state $\rho\in \mathscr{H}_{\cal L}\otimes \mathscr{H}_{\cal R}$,
    \begin{align*}
        \Tr\big[\big(\ati_{\cP_1, D_1, \gamma-\epsilon}^{\epsilon, \delta}\otimes \ati_{\cP_2, D_2, \gamma-\epsilon}^{\epsilon, \delta}\big)\rho\big] \geq \Tr\big[ \big(\ti_{\gamma}(\cP_{D_1})\otimes \ti_\gamma(\cP_{D_2})\big)\rho \big] - 2\delta.
    \end{align*}
    \item For any (possibly entangled, mixed) quantum state $\rho$, 
        let $\rho'$ be the collapsed state after applying  $\ati^{\epsilon, \delta}_{\cP_1, D_1, \gamma}\otimes \ati^{\epsilon, \delta}_{\cP_2, D_2, \gamma}$ on $\rho$  (and normalized). Then, 
        \begin{align*}
            \Tr\big[ \big(\ti_{\gamma-2\epsilon}(\cP_{D_1})\otimes \ti_{\gamma-2\epsilon}(\cP_{D_2})\big) \rho'\big]\geq  1- 4 \delta.
        \end{align*}
\end{itemize}
\end{lemma}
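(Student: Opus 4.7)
The plan is to reduce the bipartite statement to the single-sided bound in Corollary~\ref{cor:ati_thresimp} via a two-step hybrid argument. For the first bullet, I would insert the intermediate measurement $\ti_\gamma(\cP_{D_1}) \otimes \ati^{\epsilon,\delta}_{\cP_2, D_2, \gamma-\epsilon}$ between $\ti \otimes \ti$ and $\ati \otimes \ati$. Since the two factors act on disjoint registers, I can write $\Tr[(\ti_\gamma(\cP_{D_1})\otimes \ti_\gamma(\cP_{D_2}))\rho]$ as $\Pr[\mathcal{L}\to 0]\cdot \Pr[\mathcal{R}\to 0 \mid \mathcal{L}\to 0]$, where the conditioning collapses the right register to some (possibly mixed) state $\rho_1$. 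The single-party corollary then gives $\Tr[\ti_\gamma(\cP_{D_2})\,\rho_1] \le \Tr[\ati^{\epsilon,\delta}_{\cP_2, D_2, \gamma-\epsilon}\,\rho_1] + \delta$, and multiplying by $\Pr[\mathcal{L}\to 0]\le 1$ loses only $\delta$ in going to the intermediate hybrid. The symmetric argument, now conditioning on $\mathcal{R}$'s outcome in the intermediate state, loses another $\delta$ when replacing the left $\ti$ by $\ati$, yielding the total $2\delta$ bound.

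For the second bullet, I would argue about the collapsed state $\rho'$ after $\ati\otimes \ati$ by viewing the bipartite measurement sequentially: first $\mathcal{R}$ applies $\ati^{\epsilon,\delta}_{\cP_2, D_2, \gamma}$ (obtaining outcome $0$), then $\mathcal{L}$ applies $\ati^{\epsilon,\delta}_{\cP_1, D_1, \gamma}$ (obtaining outcome $0$). The third bullet of Corollary~\ref{cor:ati_thresimp} applied locally on $\mathcal{L}$'s register guarantees $\Tr[(\ti_{\gamma-2\epsilon}(\cP_{D_1})\otimes \mathbf{I})\,\rho']\ge 1-2\delta$. To then chain the second factor, I would further condition on $\mathcal{L}$'s $\ti_{\gamma-2\epsilon}$-outcome being $0$: since this projector acts only on $\mathcal{L}$, it commutes with everything applied on $\mathcal{R}$, so the order of the two measurements on $\rho$ can be swapped, and applying the third bullet of the single-party corollary again on $\mathcal{R}$'s side yields $\ge 1-2\delta$ for the right factor. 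Combining, the joint probability is at least $(1-2\delta)^2\ge 1-4\delta$.

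The main subtlety will be justifying the sequential decomposition cleanly, i.e., that conditioning on $\mathcal{L}$'s outcome does not alter $\mathcal{R}$'s reduced state distribution and that the single-party corollary applies to the mixed conditional states appearing along the way. This reduces to the standard identity $\Tr[(A\otimes B)\rho] = \Tr[A\,\Tr_{\mathcal{R}}((\mathbf{I}\otimes B)\rho)]$ together with linearity, since all operators act as the identity on the opposite register and therefore commute across the tensor factor; because Corollary~\ref{cor:ati_thresimp} is already stated for arbitrary mixed states, no additional purification step is needed. Finally, the argument extends verbatim to a $q$-partite version, losing $q\delta$ in the first bullet and $2q\delta$ in the second, simply by telescoping the hybrid through each register in turn.
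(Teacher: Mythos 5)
Your proposal matches the paper's proof essentially line-for-line: the same hybrid through $\ti_\gamma(\cP_{D_1})\otimes\ati^{\epsilon,\delta}_{\cP_2,D_2,\gamma-\epsilon}$ for the first bullet (conditioning on the left outcome and applying \cref{cor:ati_thresimp} to the conditional state on the right, then symmetrically), and the same sequential/commutation rewriting for the second bullet to reduce each tensor factor to the third bullet of \cref{cor:ati_thresimp} and multiply the two $(1-2\delta)$ bounds. No gaps; this is the paper's argument.
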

\begin{proof}
We use the hybrid argument to show that $\ati_{\cP_1, D_1, \gamma-\epsilon}^{\epsilon, \delta}\otimes \ati_{\cP_2, D_2, \gamma-\epsilon}^{\epsilon, \delta}$ approximates $\ti_{\gamma}(\cP_{D_1})\otimes \ti_\gamma(\cP_{D_2})$. 

For brevity, let $\ati_1$ denote $\ati_{\cP_1, D_1, \gamma-\epsilon}^{\epsilon, \delta}$ and $\ati_2$ denote $\ati_{\cP_2, D_2, \gamma-\epsilon}^{\epsilon, \delta}$. Similarly, let $\ti_1$ denote $\ti_\gamma(\cP_{D_1})$ and $\ti_2$ denote $\ti_\gamma(\cP_{D_2})$.
We first show that, 
\begin{align}\label{eq:thre_hy0_hy1}
    \Tr[ (\ti_1\otimes \ti_2)\rho] \leq \Tr[ (\ti_1\otimes \ati_2)\rho] + \delta.
\end{align}
Note that $\rho$ is a bipartite quantum state in $\mathscr{H}_{\cal L}\otimes \mathscr{H}_{\cal R}$. So, we can consider $\ti_1\otimes \ti_2$ as a measurement performed by two parties $\mathcal{L}$ and ${\cal R}$. In this way, we can write the trace as the probability that ${\cal L}$ gets outcome 0 and ${\cal R}$ gets outcome 0:
\begin{align*}
    \Tr[ (\ti_1\otimes \ti_2)\rho] = \Pr[{\cal L}\leftarrow 0 \wedge {\cal R}\leftarrow 0].
\end{align*}

We can see that from $\ti_1\otimes \ti_2$ to $\ti_1\otimes \ati_2$, ${\cal L}$ performs the same measurement. 
Hence, we can condition on the event that ${\cal L}$ gets outcome 0 and let $\rho_1$ be the remaining mixed state that traced out the ${\cal L}$-part. 
Then, we get that
\begin{align*}
    \Pr[{\cal L} \leftarrow 0 \wedge R\leftarrow 0] = &~ \Pr[{\cal L}\leftarrow 0] \cdot \Pr[R\leftarrow 0 | {\cal L}\leftarrow 0]\\
    = &~ \Pr[{\cal L}\leftarrow 0] \cdot \Tr[\ti_2 \cdot \rho_1]\\
    \leq &~ \Pr[{\cal L}\leftarrow 0] \cdot (\Tr[\ati_2\cdot \rho_1] + \delta)\\
    \leq &~ \Tr[(\ti_1\otimes \ati_2)\rho] + \delta,
\end{align*}
where the first inequality follows from \cref{cor:ati_thresimp} and the last step follows from  $\Pr[{\cal L}\leftarrow 0]\leq 1$.

The next step is to show that:
\begin{align}\label{eq:thre_hy1_hy2}
    \Tr[(\ti_1\otimes \ati_2)\rho]\leq \Tr[(\ati_1\otimes \ati_2)\rho] + \delta.
\end{align}

In this case, ${\cal R}$ performs the same measurement. We can condition on the event that ${\cal R}$ gets outcome 0 and let $\rho_2$ be the remaining mixed state traced out the ${\cal R}$-part.

Hence, by a similar argument, we get that
\begin{align*}
    \Tr[(\ti_1\otimes \ati_2)\rho] = &~ \Pr[{\cal L}\leftarrow 0\wedge {\cal R}\leftarrow 0]\\
    = &~ \Pr[{\cal R}\leftarrow 0]\cdot \Pr[{\cal L}\leftarrow 0|{\cal R}\leftarrow 0]\\
    = &~ \Pr[{\cal R}\leftarrow 0]\cdot \Tr[\ti_1\cdot \rho_2]\\
    \leq &~ \Pr[{\cal R}\leftarrow 0]\cdot (\Tr[\ati_1\cdot \rho_2] + \delta)\\
    \leq &~ \Tr[(\ati_1\otimes \ati_2)\rho] + \delta.
\end{align*}

Combining the \cref{eq:thre_hy0_hy1} and \cref{eq:thre_hy1_hy2} proves the first bullet of the lemma:
\begin{align*}
    \Tr[ (\ti_1\otimes \ti_2)\rho]\leq &~ \Tr[ (\ti_1\otimes \ati_2)\rho] + \delta\\
    \leq &~  \Tr[(\ati_1\otimes \ati_2)\rho] + 2\delta.
\end{align*}

For the second part of the lemma, the trace can also be written as 
\begin{align*}
    \Tr\big[ \big(\ti_{\gamma-2\epsilon}(\cP_{D_1})\otimes \ti_{\gamma-2\epsilon}(\cP_{D_2})\big) \rho'\big] = &~  \Pr[{\cal L}\leftarrow 0 \wedge {\cal R}\leftarrow 0]\\
    = &~ \Pr[{\cal L}\leftarrow 0] \cdot \Pr[{\cal R}\leftarrow 0| {\cal L}\leftarrow 0],
\end{align*}
where ${\cal L}$ and ${\cal R}$ are now performing measurements on $\rho'$.

We first rewrite the term $\Pr[{\cal L}\leftarrow 0]$ as
\begin{align*}
    \Pr[{\cal L}\leftarrow 0] =  \Tr[(\ti_{\gamma-2\epsilon} (\cP_{D_1})\otimes {\bf I})\rho'].
\end{align*}
We can see that this measure process is equivalent to the following process:
\begin{enumerate}
    \item ${\cal R}$ first performs the measurement $\ati^{\epsilon, \delta}_{\cP_2, D_2, \gamma}$ on the ${\cal R}$-part of $\rho$ and gets a state $\rho_1$ such that $\Tr_{\cal L}[\rho_1]=\Tr_{\cal L}[\rho']$.
    \item ${\cal L}$ measures $\ati^{\epsilon, \delta}_{\cP_1, D_1, \gamma}$ on $\Tr_{\cal R}[\rho_1]$ and get the collapsed state $\rho_2$ such that $\rho_2 = \Tr_{\cal R}[\rho']$.
    \item ${\cal L}$ measures $\ti_{\gamma-2\epsilon}(\cP_{D_1})$ on $\rho_2$.
\end{enumerate}

Hence, we have
\begin{align*}
    \Tr[(\ti_{\gamma-2\epsilon} (\cP_{D_1})\otimes {\bf I})\rho'] = \Tr[\ti_{\gamma-2\epsilon} (\cP_{D_1}) \cdot \rho_2],
\end{align*}

By \cref{cor:ati_thresimp} (the third bullet),
\begin{align*}
    \Tr[\ti_{\gamma-2\epsilon} (\cP_{D_1}) \cdot \rho_2] \geq 1-2\delta.
\end{align*}
Hence, we get that $\Pr[{\cal L}\leftarrow 0] \geq 1-2\delta$.

For the second term $\Pr[{\cal R}\leftarrow 0|{\cal L}\leftarrow 0]$, it can be written as
\begin{align*}
    \Pr[{\cal R}\leftarrow 0|{\cal L}\leftarrow 0] = \Tr[({\bf I}\otimes \ti_{\gamma -2\epsilon}(\cP_{D_2}))\cdot \rho_3] = \Tr[\ti_{\gamma -2\epsilon}(\cP_{D_2})\cdot \rho_4],
\end{align*}
where $\rho_3$ is the collapsed state conditioned on the outcome of ${\cal L}$ being 0 and $\rho_4 = \Tr_{\cal L}[\rho_3]$.

This measure process is equivalent to the followings:
\begin{enumerate}
    \item ${\cal L}$ first performs two consecutive measurements $\ati^{\epsilon, \delta}_{\cP_1, D_1, \gamma}$ and $\ti_{\gamma-2\epsilon}(\cP_{D_1})$ on the ${\cal L}$-part of $\rho$, and gets the collapsed state $\rho''$ such that $\Tr_{\cal R}[\rho'']=\Tr_{\cal R}[\rho_3]$.
    \item ${\cal R}$ measures $\ati^{\epsilon, \delta}_{\cP_2, D_2, \gamma}$ on $\Tr_{\cal L}[\rho'']$ and gets $\rho_3$.
    \item ${\cal R}$ measures $\ti_{\gamma-2\epsilon}(\cP_{D_2})$ on $\rho_4$.
\end{enumerate}

By \cref{cor:ati_thresimp} again, we have
\begin{align*}
    \Pr[{\cal R}\leftarrow 0|{\cal L}\leftarrow 0] = \Tr[\ti_{\gamma -2\epsilon}(\cP_{D_2})\cdot \rho_4] \geq 1-2\delta.
\end{align*}

Therefore, we have
\begin{align*}
     \Tr\big[ \big(\ti_{\gamma-2\epsilon}(\cP_{D_1})\otimes \ti_{\gamma-2\epsilon}(\cP_{D_2})\big) \rho'\big] \geq (1-2\delta)^2\geq 1-4\delta,
\end{align*}
which completes the proof of the second part of the lemma. \qed
\end{proof}

Notice that \cref{lem:ati_2d} can be easily generalized to the case of $q$-partite state. We state the following corollary without proof:

\begin{corollary} \label{cor:ati_qd}
Let $\cP_1,\cP_2,\dots,\cP_q$ be $q$ collections of projective measurements and $D_i$ be any probability distributions defined on the index set of $\cP_i$ for all $i\in [q]$. 
For any $0<\epsilon, \delta, \gamma < 1$, for all $i\in [q]$, the algorithms $\ati^{\epsilon, \delta}_{\cP_i, D_i, \gamma}$ satisfy the followings:
\begin{itemize}
    \item For any $q$-partite (possibly entangled, mixed) quantum state $\rho\in \mathscr{H}_{1}\otimes \cdots \otimes \mathscr{H}_{q}$,
    \begin{align*}
        \Tr\left[\left(\bigotimes_{i=1}^q \ati_{\cP_i, D_i, \gamma-\epsilon}^{\epsilon, \delta}\right)\rho\right] \geq \Tr\left[ \left(\bigotimes_{i=1}^q \ti_{\gamma}(\cP_{D_i})\right)\rho \right] - q\delta.
    \end{align*}
    \item For any (possibly entangled, mixed) quantum state $\rho$, 
        let $\rho'$ be the collapsed state after applying  $\bigotimes_{i=1}^q \ati_{\cP_i, D_i, \gamma-\epsilon}^{\epsilon, \delta}$ on $\rho$  (and normalized). Then, 
        \begin{align*}
            \Tr\left[ \left(\bigotimes_{i=1}^q \ti_{\gamma-2\epsilon}(\cP_{D_i})\right) \rho'\right]\geq  1- 2q \delta.
        \end{align*}
\end{itemize}
\end{corollary}

\subsection{Proof Sketch of \cref{thm:watermarktocd}}
\label{sec:cd_q_sketch}

We briefly sketch the proof for $q$-collusion resistance which is very similar to the case $q = 1$. Let $(\tilde{f}_1, \cdots, \tilde{f}_{q+1}, \sigma)$ be the output of the adversary. Let  $s_1, \cdots, s_q$ be the serial numbers in the \generate{} procedure.  Let $s'_1, \cdots,  s'_{q+1}$ be the serial numbers corresponding to $\sigma$.  If $\As$ succeeds, there are two cases: 
\begin{enumerate}
    \item $\left\{ s'_i \right\}_{i \in [q + 1]} \subseteq \left\{ s_j \right\}_{j \in [q]}$: in this case, $\As$ successfully copies one of the money state. Thus, we can use $\As$ to construct an adversary for the quantum money scheme. 
    
    \item $\left\{ s'_i \right\}_{i \in [q + 1]} \subsetneq \left\{ s_j \right\}_{j \in [q]}$: in this case, $\As$ successfully unmarks one of the marked program.   Thus, we can use $\As$ to construct an adversary for the watermarking scheme. 
\end{enumerate}

Therefore, assuming the existence of $q$-collusion resistant quantum money scheme and watermarking scheme, the construction above is a $q$-collusion resistant copy-detection scheme. 
\section{Public-key Quantum Money from Copy Detection}
In this section, we show that  we can use quantum copy detection and public-key encryption to construct a public-key quantum money scheme. This implication shows one more application of copy detection and further demonstrates the relationship between copy detection and public-key quantum money.

We  give the following the construction of the public-key quantum money.
Assume that we have an underlying public key encryption scheme called $\pke = (\pke.\keygen, \pke.\enc, \pke.\enc)$ with message space $\cM$, and an underlying copy detection scheme $\cd = (\cd.\setup, \cd.\generate,  \cd.\compute, \cd.\chk)$.

\begin{figure}[H]
    \centering
    \begin{gamespec}
    \begin{description}
    \item $\keygen(1^\lambda) \to (\pk,\sk):$ 
   
    \begin{itemize}
        \item Take in security parameter $\lambda$
        \item  Run $\pke.\keygen(1^\lambda) \to (\pke.\pk, \pke.\sk)$ and $\cd.\setup(1^\lambda) \to (\cd.\pk, \cd.\sk)$.
        \item Output $\pk = (\pke.\pk, \cd.\pk)$ and $\sk = (\pke.\sk, \cd.\sk)$.
    \end{itemize}

    \item $\gennote(\sk) \to \ket{\$}:$
    \begin{itemize}
        \item Take in the secret key $\sk = (\pke.\sk, \cd.\sk)$.
        
        \item Run $\cd.\generate(\cd.\sk, f = \pke.\dec(\pke.\sk, \cdot)$ to generate a copy detection program $(\rho_f, \{U_{f, x}\}_{x \in [N]})$ for the function $f = \pke.\dec(\pke.\sk, \cdot)$.
        
        \item Output $\ket{\$} = (\rho_f, \{U_{f, x}\}_{x \in [N]})$.
    \end{itemize}

    
    \item $\ver(\pk, \ket{\$'}) \to 0/1:$ \begin{itemize}
        \item Take in the public key $\pk = (\pke.\pk, \cd.\pk)$ and a claimed banknote state $\ket{\$'}$, i.e. a claimed copy detection program for $f = \pke.\dec(\pke.\sk, \cdot)$.
        
        \item Parse the claimed banknote  $\ket{\$'}$ as $(\aux_f, \rho_f’, \{U_{f, x}'\}_{x \in [N]})$.
        
        \item Run $\cd.\chk(\cd.\pk, \aux_f, \rho_f', \{U_{f, x}'\}_{x \in [N]})) \to b$; if $b = 1$, output 1 (for reject).
        
        \item Test if the program $(\rho_f', \{U_{f, x}'\}_{x \in [N]})$ is a $\gamma$-good program with respect to $f$, $E_\lambda$, using the public information in $\pk  = (\pke.\pk, \cd.\pk)$; if yes, output 0; else output 1.
    \end{itemize}

    \end{description}
    \end{gamespec}
    \caption{Public-key Quantum Money Scheme from Copy Detection}
    \label{fig:pk_money_I}
\end{figure}
\subsubsection{Security Analysis}
We now show that the public-key quantum money construction has correctness and unclonable security, given a quantum copy detection scheme with correctness and $\gamma$-anti-piracy security. The proof is intuitive and we omit some details.

\paragraph{Verification Correctness}

By the computation correctness of the underlying copy detection scheme $\cd$ and decryption correctness of the underlying $\pke$, a valid banknote $\ket{\$} = (\rho_f, \{U_{f, x}\}_{x \in [N]})$ for $f = \pke.\dec(\pke.\sk, \cdot)$ is supposed to pass $\chk$ and  be a $\gamma$-good program with respect to $f, E_\lambda$ with all but negligible probability.
Therefore, verification correctness holds. 


\paragraph{Unclonable Security} We  give a brief proof for the unclonable security of the quantum money scheme, whose security definition is given in \cref{def:money}. 

\begin{lemma}
\label{lem:money_pke_secure}
Assuming that the quantum copy-protection scheme $\cd$ has $\gamma$-anti-piracy, then public-key quantum money scheme has unclonable security.
\end{lemma}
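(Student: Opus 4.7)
The plan is to give a direct reduction: any adversary $\mathcal{A}$ that duplicates a banknote in the scheme of \cref{fig:pk_money_I} yields an adversary $\mathcal{B}$ that breaks the $\gamma$-copy-detection security of the underlying $\cd$ scheme. First I would pin down the sampler $\samp$ and the predicate $\mathscr{E}$ that the $\cd$ scheme is being used with: $\samp(1^\lambda)$ runs $(\pke.\pk,\pke.\sk)\gets \pke.\keygen(1^\lambda)$, outputs the circuit $f=\pke.\dec(\pke.\sk,\cdot)$ with public information $\aux_f=\pke.\pk$, while $E_\lambda(P,f,r)$ uses the random coins $r$ to sample a message $m\in\cM$ and fresh encryption randomness, computes $c\gets \pke.\enc(\pke.\pk,m;r)$, and checks whether $P(c)=m$. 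Note that this predicate is publicly checkable given $\aux_f$, which is exactly what is needed so that $\ver$ in \cref{fig:pk_money_I} uses only $\pk$.

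Next, on input a copy-detection challenge $\rho_f$ for the sampled $f$ (along with $\pke.\pk$ and $\cd.\pk$), $\mathcal{B}$ simulates the quantum money game for $\mathcal{A}$ by setting $\pk=(\pke.\pk,\cd.\pk)$ and handing $\ket\$=\rho_f$ to $\mathcal{A}$. The simulation is perfect because the distributions of $\pk$ and $\ket\$$ produced by $\mathcal{B}$ match those of an honest $\keygen$ and $\gennote$ execution exactly. When $\mathcal{A}$ outputs a bipartite state $\sigma$ purporting to be two banknotes $\ket{\$_1},\ket{\$_2}$ that both pass $\ver$, the two registers of $\sigma$ simultaneously (i) pass the $\cd.\chk$ projection, and (ii) are tested $\gamma$-good with respect to $f,E_\lambda$ (realised via the threshold-implementation test as in \cref{def:gamma_good_test_crypto_applications}). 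But these are precisely the two winning conditions of $\DG^{\mathcal{B}}_{\samp,\mathscr{E},\gamma}$. So $\mathcal{B}$ simply forwards $\sigma$ and inherits the winning probability of $\mathcal{A}$, contradicting $\gamma$-copy-detection security whenever $\mathcal{A}$'s success is non-negligible.

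The only subtle step is ensuring that $\mathcal{B}$'s execution of the verification inside the reduction is faithful to the game. The verification in \cref{fig:pk_money_I} calls $\cd.\chk$ (which by assumption is a projective measurement) and then a $\gamma$-goodness test. The former is handled exactly as in the $\DG$ game; for the latter I would use the approximate threshold implementation $\ati^{\epsilon,\delta}_{\cP,D,\gamma}$ from \cref{cor:ati_thresimp} in place of the inefficient $\ti_\gamma$, losing only a $(2\epsilon,2\delta)$-slackness in the threshold that can be absorbed by choosing $\epsilon,\delta$ subpolynomial and weakening the parameter in the reduction to $\gamma-2\epsilon$. This is the one place the argument is not entirely syntactic; the main obstacle is checking that the efficiency requirement of $\ver$ (which must also be non-destructive on honest banknotes by Gentle Measurement, \cref{lem:gentle_measure}) is compatible with the $\ati$-based implementation used inside the reduction so that the view given to $\mathcal{A}$ is statistically indistinguishable from an honest interaction. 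Once this bookkeeping is carried out, the reduction is tight and immediately yields the claimed unclonable security. \qed
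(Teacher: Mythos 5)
Your reduction is the same one the paper gives: $\mathcal{B}$ forwards the copy-detection instance to $\mathcal{A}$ as a banknote, collects $\mathcal{A}$'s two output states, and forwards them to the copy-detection challenger; since $\ver$ is exactly $\cd.\chk$ followed by a $\gamma$-goodness test, any $\mathcal{A}$ that clones the banknote yields a $\mathcal{B}$ that wins $\DG^{\mathcal{B}}_{\samp,\mathscr{E},\gamma}$. Your explicit spelling-out of $\samp$ and $\mathscr{E}$ is a reasonable addition the paper leaves implicit, and the $\ti$-versus-$\ati$ parameter-slack you flag is a real but minor gap the paper glosses over --- though note that the winning-condition check is applied by the challenger \emph{after} $\mathcal{A}$'s output, so it is not part of $\mathcal{A}$'s view and $\mathcal{B}$ never needs to run $\ver$ itself; the only effect of an $\ati$-based $\ver$ is a $(\gamma-2\epsilon)$ shift in the threshold achieved in the $\DG$ game, not any concern about simulation fidelity.
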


\begin{proof}
Suppose there is a QPT adversary $\cA$ that breaks unclonable security, then we can construct a QPT adversary $B$ that breaks $\gamma$-anti-piracy security for $\cd$.

The quantum copy detection challenger interacts with $B$ in a copy detection anti-piracy game: In the Setup phase, challenger runs the setup $\cd.\setup(1^\lambda)$ to generate the keys $(\cd.\pk, \cd.\sk)$. In the Sampling phase, the challenger samples $f = \pke.\dec(\pke.\sk, \cdot)$, where
$(\pke.\pk, \pke.\sk) \gets  \pke.\keygen(1^\lambda)$; note that it gives $\aux = \pke.\pk$ to adversary $B$ and $s_f = \pke.\sk$ is kept secret.
$B$ then gives $(\cd.\pk, \pke.\pk)$ to the quantum money adversary $\cA$ as the public key.
In the Query phase, copy detection challenger generates one copy of copy detection program $(\rho_f, \{U_{f, x}\}_{x \in [N]}) \gets \cd.\generate(\cd.\sk, f)$ and gives to $B$. Then $B$ sends $(\rho_f, \{U_{f, x}\}_{x \in [N]})$ as a money state $\ket{\$}$ to $\cA$. Finally, $\cA$ output two claimed money states $\{\ket{\$_i}\} = \{\rho_i, U_{i}\}_{i \in [2]}$ and sends to $B$. $B$ uses them as its pirate programs and passes to copy detection challenger. 
It is easy to see that if both claimed money states $\{\ket{\$_i}\}_{i \in [2]}$ produced by $\cA$'s pass verification with non-negligible probability, then $B$ wins the copy detection anti-piracy security game with non-negligible probability.
\end{proof}

\end{document}